\newcites{supp}{Supplementary References}
\renewcommand{\theequation}{\textcolor{red}{\arabic{equation}}}
\pgfplotsset{compat=1.17}
\definecolor{darkblue}{rgb}{0,0,.6}
\newtheorem{assumption}{Assumption}
\newtheorem{definition}{Definition}
\newtheorem{remark}{Remark}
\newtheorem{example}{Example}
\newtheorem{theorem}{Theorem}
\newtheorem{lemma}{Lemma}
\newtheorem{corollary}{Corollary}
\newtheorem{proposition}{Proposition}
\renewenvironment{proof}[1][Proof]{\noindent \textbf{#1.} }{\  \rule{0.5em}{0.5em}}
\definecolor{green}{rgb}{0.0,0.7,0.0}
\chardef\@x10\chardef\@xv60
\def\tcitime{
\def\@time{%
  \@minute\time\@hour\@minute\divide\@hour\@xv
  \ifnum\@hour<\@x 0\fi\the\@hour:%
  \multiply\@hour\@xv\advance\@minute-\@hour
  \ifnum\@minute<\@x 0\fi\the\@minute
  }}%
\def\QCTOpt[#1]#2{%
  \def\QCTOptB{#1}
  \def\QCTOptA{#2}
}
\def\QCTNOpt#1{%
  \def\QCTOptA{#1}
  \let\QCTOptB\empty
}
\def\Qct{%
  \@ifnextchar[{%
    \QCTOpt}{\QCTNOpt}
}
\def\QCBOpt[#1]#2{%
  \def\QCBOptB{#1}
  \def\QCBOptA{#2}
}
\def\QCBNOpt#1{%
  \def\QCBOptA{#1}
  \let\QCBOptB\empty
}
\def\Qcb{%
  \@ifnextchar[{%
    \QCBOpt}{\QCBNOpt}
}
\def\PrepCapArgs{%
  \ifx\QCBOptA\empty
    \ifx\QCTOptA\empty
      {}%
    \else
      \ifx\QCTOptB\empty
        {\QCTOptA}%
      \else
        [\QCTOptB]{\QCTOptA}%
      \fi
    \fi
  \else
    \ifx\QCBOptA\empty
      {}%
    \else
      \ifx\QCBOptB\empty
        {\QCBOptA}%
      \else
        [\QCBOptB]{\QCBOptA}%
      \fi
    \fi
  \fi
}
\def\GRAPHICSPS#1{%
 \ifcase\GRAPHICSTYPE
   \special{ps: #1}%
 \or
   \special{language "PS", include "#1"}%
 \fi
}%
\def\graffile#1#2#3#4{%
    \leavevmode
    \raise -#4 \BOXTHEFRAME{%
        \hbox to #2{\raise #3\hbox to #2{\null #1\hfil}}}%
}%
\def\draftbox#1#2#3#4{%
 \leavevmode\raise -#4 \hbox{%
  \frame{\rlap{\protect\tiny #1}\hbox to #2%
   {\vrule height#3 width\z@ depth\z@\hfil}%
  }%
 }%
}%
\newif\ifwasdraft
\def\GRAPHIC#1#2#3#4#5{%
 \ifnum\draft=\@ne\draftbox{#2}{#3}{#4}{#5}%
  \else\graffile{#1}{#3}{#4}{#5}%
  \fi
 }%
\def\addtoLaTeXparams#1{%
    \edef\LaTeXparams{\LaTeXparams #1}}%
\newif\ifBoxFrame \BoxFramefalse
\newif\ifOverFrame \OverFramefalse
\newif\ifUnderFrame \UnderFramefalse
\def\BOXTHEFRAME#1{%
   \hbox{%
      \ifBoxFrame
         \frame{#1}%
      \else
         {#1}%
      \fi
   }%
}
\def\doFRAMEparams#1{\BoxFramefalse\OverFramefalse\UnderFramefalse\readFRAMEparams#1\end}%
\def\readFRAMEparams#1{%
   \ifx#1\end%
  \let\next=\relax
  \else
  \ifx#1i\dispkind=\z@\fi
  \ifx#1d\dispkind=\@ne\fi
  \ifx#1f\dispkind=\tw@\fi
	\ifx#1h
    \ifnum\dispkind=\tw@
			\@ifundefined{@HHfloat}{
			  \addtoLaTeXparams{h}
		 	 }{
         \def\LaTeXparams{H}
         \typeout{tcilatex: attribute align pos of FRAME  set to H}
         \typeout{\space \space \space \space all other placement options (tbp) are ignored }
   		 }
	  \else
			\addtoLaTeXparams{h}
    \fi
	\fi
  \if\LaTeXparams H
  	 \ifx#1t\fi	 
  	 \ifx#1b\fi	 
     \ifx#1p\fi
  \else
      \ifx#1t\addtoLaTeXparams{t}\fi
      \ifx#1b\addtoLaTeXparams{b}\fi
      \ifx#1p\addtoLaTeXparams{p}\fi
  \fi

  \ifx#1X\BoxFrametrue\fi
  \ifx#1O\OverFrametrue\fi
  \ifx#1U\UnderFrametrue\fi
  \ifx#1w
    \ifnum\draft=1\wasdrafttrue\else\wasdraftfalse\fi
    \draft=\@ne
  \fi
  \let\next=\readFRAMEparams
  \fi
 \next
 }%
\def\IFRAME#1#2#3#4#5#6{%
      \bgroup
      \let\QCTOptA\empty
      \let\QCTOptB\empty
      \let\QCBOptA\empty
      \let\QCBOptB\empty
      #6%
      \parindent=0pt%
      \leftskip=0pt
      \rightskip=0pt
      \setbox0 = \hbox{\QCBOptA}%
      \@tempdima = #1\relax
      \ifOverFrame
          \typeout{This is not implemented yet}%
          \show\HELP
      \else
         \ifdim\wd0>\@tempdima
            \advance\@tempdima by \@tempdima
            \ifdim\wd0 >\@tempdima
               \textwidth=\@tempdima
               \setbox1 =\vbox{%
                  \noindent\hbox to \@tempdima{\hfill\GRAPHIC{#5}{#4}{#1}{#2}{#3}\hfill}\\%
                  \noindent\hbox to \@tempdima{\parbox[b]{\@tempdima}{\QCBOptA}}%
               }%
               \wd1=\@tempdima
            \else
               \textwidth=\wd0
               \setbox1 =\vbox{%
                 \noindent\hbox to \wd0{\hfill\GRAPHIC{#5}{#4}{#1}{#2}{#3}\hfill}\\%
                 \noindent\hbox{\QCBOptA}%
               }%
               \wd1=\wd0
            \fi
         \else
            \ifdim\wd0>0pt
              \hsize=\@tempdima
              \setbox1 =\vbox{%
                \unskip\GRAPHIC{#5}{#4}{#1}{#2}{0pt}%
                \break
                \unskip\hbox to \@tempdima{\hfill \QCBOptA\hfill}%
              }%
              \wd1=\@tempdima
           \else
              \hsize=\@tempdima
              \setbox1 =\vbox{%
                \unskip\GRAPHIC{#5}{#4}{#1}{#2}{0pt}%
              }%
              \wd1=\@tempdima
           \fi
         \fi
         \@tempdimb=\ht1
         \advance\@tempdimb by \dp1
         \advance\@tempdimb by -#2%
         \advance\@tempdimb by #3%
         \leavevmode
         \raise -\@tempdimb \hbox{\box1}%
      \fi
      \egroup%
}%
\def\DFRAME#1#2#3#4#5{%
 \begin{center}
     \let\QCTOptA\empty
     \let\QCTOptB\empty
     \let\QCBOptA\empty
     \let\QCBOptB\empty
     \ifOverFrame 
        #5\QCTOptA\par
     \fi
     \GRAPHIC{#4}{#3}{#1}{#2}{\z@}
     \ifUnderFrame 
        \nobreak\par #5\QCBOptA
     \fi
 \end{center}%
 }%
\def\FFRAME#1#2#3#4#5#6#7{%
 \begin{figure}[#1]%
  \let\QCTOptA\empty
  \let\QCTOptB\empty
  \let\QCBOptA\empty
  \let\QCBOptB\empty
  \ifOverFrame
    #4
    \ifx\QCTOptA\empty
    \else
      \ifx\QCTOptB\empty
        \caption{\QCTOptA}%
      \else
        \caption[\QCTOptB]{\QCTOptA}%
      \fi
    \fi
    \ifUnderFrame\else
      \label{#5}%
    \fi
  \else
    \UnderFrametrue%
  \fi
  \begin{center}\GRAPHIC{#7}{#6}{#2}{#3}{\z@}\end{center}%
  \ifUnderFrame
    #4
    \ifx\QCBOptA\empty
      \caption{}%
    \else
      \ifx\QCBOptB\empty
        \caption{\QCBOptA}%
      \else
        \caption[\QCBOptB]{\QCBOptA}%
      \fi
    \fi
    \label{#5}%
  \fi
  \end{figure}%
 }%
\def\makeactives{
  \catcode`\"=\active
  \catcode`\;=\active
  \catcode`\:=\active
  \catcode`\'=\active
  \catcode`\~=\active
}
   \gdef\activesoff{%
      \def"{\string"}
      \def;{\string;}
      \def:{\string:}
      \def'{\string'}
      \def~{\string~}
    }
\def\FRAME#1#2#3#4#5#6#7#8{%
 \bgroup
 \@ifundefined{bbl@deactivate}{}{\activesoff}
 \ifnum\draft=\@ne
   \wasdrafttrue
 \else
   \wasdraftfalse%
 \fi
 \def\LaTeXparams{}%
 \dispkind=\z@
 \def\LaTeXparams{}%
 \doFRAMEparams{#1}%
 \ifnum\dispkind=\z@\IFRAME{#2}{#3}{#4}{#7}{#8}{#5}\else
  \ifnum\dispkind=\@ne\DFRAME{#2}{#3}{#7}{#8}{#5}\else
   \ifnum\dispkind=\tw@
    \edef\@tempa{\noexpand\FFRAME{\LaTeXparams}}%
    \@tempa{#2}{#3}{#5}{#6}{#7}{#8}%
    \fi
   \fi
  \fi
  \ifwasdraft\draft=1\else\draft=0\fi{}%
  \egroup
 }%
\def\TEXUX#1{"texux"}
\long\def\QQQ#1#2{%
     \long\expandafter\def\csname#1\endcsname{#2}}%
\long\def\QQA#1#2{}%
\def\QTR#1#2{{\csname#1\endcsname #2}}
\def\EXPAND#1[#2]#3{}%
\def\NOEXPAND#1[#2]#3{}%
\def\LaTeXparent#1{}%
\def\ChildStyles#1{}%
\def\ChildDefaults#1{}%
\def\QTagDef#1#2#3{}%
\def\QQfnmark#1{\footnotemark}
\def\makeatletter\input gnuindex.sty\makeatother\makeindex{\makeatletter\input gnuindex.sty\makeatother\makeindex}%
\def\initial#1{\bigbreak{\raggedright\large\bf #1}\kern 2\p@\penalty3000}}%
 \def\abstract{%
  \if@twocolumn
   \section*{Abstract (Not appropriate in this style!)}%
   \else \small 
   \begin{center}{\bf Abstract\vspace{-.5em}\vspace{\z@}}\end{center}%
   \quotation 
   \fi
  }%
   \def\registered{\relax\ifmmode{}\r@gistered
                    \else$\m@th\r@gistered$\fi}%
 \def\r@gistered{^{\ooalign
  {\hfil\raise.07ex\hbox{$\scriptstyle\rm\text{R}$}\hfil\crcr
  \mathhexbox20D}}}}{}%
\newdimen\theight
\def\Column{%
 \vadjust{\setbox\z@=\hbox{\scriptsize\quad\quad tcol}%
  \theight=\ht\z@\advance\theight by \dp\z@\advance\theight by \lineskip
  \kern -\theight \vbox to \theight{%
   \rightline{\rlap{\box\z@}}%
   \vss
   }%
  }%
 }%
\def\qed{%
 \ifhmode\unskip\nobreak\fi\ifmmode\ifinner\else\hskip5\p@\fi\fi
 \hbox{\hskip5\p@\vrule width4\p@ height6\p@ depth1.5\p@\hskip\p@}%
 }%
\def\miss{\hbox{\vrule height2\p@ width 2\p@ depth\z@}}%
\def\tcol#1{{\baselineskip=6\p@ \vcenter{#1}} \Column}  %
\def\newfmtname{LaTeX2e}
\def\chkcompat{%
   \if@compatibility
   \else
     \usepackage{latexsym}
   \fi
}
  \DeclareOldFontCommand{\rm}{\normalfont\rmfamily}{\mathrm}
  \DeclareOldFontCommand{\sf}{\normalfont\sffamily}{\mathsf}
  \DeclareOldFontCommand{\tt}{\normalfont\ttfamily}{\mathtt}
  \DeclareOldFontCommand{\bf}{\normalfont\bfseries}{\mathbf}
  \DeclareOldFontCommand{\it}{\normalfont\itshape}{\mathit}
  \DeclareOldFontCommand{\sl}{\normalfont\slshape}{\@nomath\sl}
  \DeclareOldFontCommand{\sc}{\normalfont\scshape}{\@nomath\sc}
\def\alpha{{\Greekmath 010B}}%
\def\beta{{\Greekmath 010C}}%
\def\gamma{{\Greekmath 010D}}%
\def\delta{{\Greekmath 010E}}%
\def\epsilon{{\Greekmath 010F}}%
\def\zeta{{\Greekmath 0110}}%
\def\eta{{\Greekmath 0111}}%
\def\theta{{\Greekmath 0112}}%
\def\iota{{\Greekmath 0113}}%
\def\kappa{{\Greekmath 0114}}%
\def\lambda{{\Greekmath 0115}}%
\def\mu{{\Greekmath 0116}}%
\def\nu{{\Greekmath 0117}}%
\def\xi{{\Greekmath 0118}}%
\def\pi{{\Greekmath 0119}}%
\def\rho{{\Greekmath 011A}}%
\def\sigma{{\Greekmath 011B}}%
\def\tau{{\Greekmath 011C}}%
\def\upsilon{{\Greekmath 011D}}%
\def\phi{{\Greekmath 011E}}%
\def\chi{{\Greekmath 011F}}%
\def\psi{{\Greekmath 0120}}%
\def\omega{{\Greekmath 0121}}%
\def\varepsilon{{\Greekmath 0122}}%
\def\vartheta{{\Greekmath 0123}}%
\def\varpi{{\Greekmath 0124}}%
\def\varrho{{\Greekmath 0125}}%
\def\varsigma{{\Greekmath 0126}}%
\def\varphi{{\Greekmath 0127}}%
\def\nabla{{\Greekmath 0272}}
\def\FindBoldGroup{%
   {\setbox0=\hbox{$\mathbf{x\global\edef\theboldgroup{\the\mathgroup}}$}}%
}
\def\Greekmath#1#2#3#4{%
    \if@compatibility
        \ifnum\mathgroup=\symbold
           \mathchoice{\mbox{\boldmath$\displaystyle\mathchar"#1#2#3#4$}}%
                      {\mbox{\boldmath$\textstyle\mathchar"#1#2#3#4$}}%
                      {\mbox{\boldmath$\scriptstyle\mathchar"#1#2#3#4$}}%
                      {\mbox{\boldmath$\scriptscriptstyle\mathchar"#1#2#3#4$}}%
        \else
           \mathchar"#1#2#3#4%
        \fi 
    \else 
        \FindBoldGroup
        \ifnum\mathgroup=\theboldgroup 
           \mathchoice{\mbox{\boldmath$\displaystyle\mathchar"#1#2#3#4$}}%
                      {\mbox{\boldmath$\textstyle\mathchar"#1#2#3#4$}}%
                      {\mbox{\boldmath$\scriptstyle\mathchar"#1#2#3#4$}}%
                      {\mbox{\boldmath$\scriptscriptstyle\mathchar"#1#2#3#4$}}%
        \else
           \mathchar"#1#2#3#4%
        \fi     	    
	  \fi}
\newif\ifGreekBold  \GreekBoldfalse
\let\SAVEPBF=\pbf
\def\pbf{\GreekBoldtrue\SAVEPBF}%
  \newcounter{equationnumber}  
  \def\mathletters{%
     \addtocounter{equation}{1}
     \edef\@currentlabel{\theequation}%
     \setcounter{equationnumber}{\c@equation}
     \setcounter{equation}{0}%
     \edef\theequation{\@currentlabel\noexpand\alph{equation}}%
  }
    \def\BibTeX{{\rm B\kern-.05em{\sc i\kern-.025em b}\kern-.08em
                 T\kern-.1667em\lower.7ex\hbox{E}\kern-.125emX}}}{}%
\def\AmS{{\protect\usefont{OMS}{cmsy}{m}{n}%
                A\kern-.1667em\lower.5ex\hbox{M}\kern-.125emS}}}{}%
\def\DN@{\def\next@}%
\def\eat@#1{}%
\let\DOTSI\relax
\def\RIfM@{\relax\ifmmode}%
\def\FN@{\futurelet\next}%
\def\iint{\DOTSI\intno@\tw@\FN@\ints@}%
\def\iiint{\DOTSI\intno@\thr@@\FN@\ints@}%
\def\iiiint{\DOTSI\intno@4 \FN@\ints@}%
\def\idotsint{\DOTSI\intno@\z@\FN@\ints@}%
\def\ints@{\findlimits@\ints@@}%
\newif\iflimtoken@
\newif\iflimits@
\def\findlimits@{\limtoken@true\ifx\next\limits\limits@true
 \else\ifx\next\nolimits\limits@false\else
 \limtoken@false\ifx\ilimits@\nolimits\limits@false\else
 \ifinner\limits@false\else\limits@true\fi\fi\fi\fi}%
\def\multint@{\int\ifnum\intno@=\z@\intdots@                          
 \else\intkern@\fi                                                    
 \ifnum\intno@>\tw@\int\intkern@\fi                                   
 \ifnum\intno@>\thr@@\int\intkern@\fi                                 
 \int}
\def\multintlimits@{\intop\ifnum\intno@=\z@\intdots@\else\intkern@\fi
 \ifnum\intno@>\tw@\intop\intkern@\fi
 \ifnum\intno@>\thr@@\intop\intkern@\fi\intop}%
\def\intic@{%
    \mathchoice{\hskip.5em}{\hskip.4em}{\hskip.4em}{\hskip.4em}}%
\def\negintic@{\mathchoice
 {\hskip-.5em}{\hskip-.4em}{\hskip-.4em}{\hskip-.4em}}%
\def\ints@@{\iflimtoken@                                              
 \def\ints@@@{\iflimits@\negintic@
   \mathop{\intic@\multintlimits@}\limits                             
  \else\multint@\nolimits\fi                                          
  \eat@}
 \else                                                                
 \def\ints@@@{\iflimits@\negintic@
  \mathop{\intic@\multintlimits@}\limits\else
  \multint@\nolimits\fi}\fi\ints@@@}%
\def\intkern@{\mathchoice{\!\!\!}{\!\!}{\!\!}{\!\!}}%
\def\plaincdots@{\mathinner{\cdotp\cdotp\cdotp}}%
\def\intdots@{\mathchoice{\plaincdots@}%
 {{\cdotp}\mkern1.5mu{\cdotp}\mkern1.5mu{\cdotp}}%
 {{\cdotp}\mkern1mu{\cdotp}\mkern1mu{\cdotp}}%
 {{\cdotp}\mkern1mu{\cdotp}\mkern1mu{\cdotp}}}%
\def\RIfM@{\relax\protect\ifmmode}
\def\text{\RIfM@\expandafter\text@\else\expandafter\mbox\fi}
\let\nfss@text\text
\def\text@#1{\mathchoice
   {\textdef@\displaystyle\f@size{#1}}%
   {\textdef@\textstyle\tf@size{\firstchoice@false #1}}%
   {\textdef@\textstyle\sf@size{\firstchoice@false #1}}%
   {\textdef@\textstyle \ssf@size{\firstchoice@false #1}}%
   \glb@settings}
\def\textdef@#1#2#3{\hbox{{%
                    \everymath{#1}%
                    \let\f@size#2\selectfont
                    #3}}}
\newif\iffirstchoice@
\def\Let@{\relax\iffalse{\fi\let\\=\cr\iffalse}\fi}%
\def\vspace@{\def\vspace##1{\crcr\noalign{\vskip##1\relax}}}%
\def\multilimits@{\bgroup\vspace@\Let@
 \baselineskip\fontdimen10 \scriptfont\tw@
 \advance\baselineskip\fontdimen12 \scriptfont\tw@
 \lineskip\thr@@\fontdimen8 \scriptfont\thr@@
 \lineskiplimit\lineskip
 \vbox\bgroup\ialign\bgroup\hfil$\m@th\scriptstyle{##}$\hfil\crcr}%
\def\Sb{_\multilimits@}%
\def\endSb{\crcr\egroup\egroup\egroup}%
\def\Sp{^\multilimits@}%
\newdimen\ex@
\def\rightarrowfill@#1{$#1\m@th\mathord-\mkern-6mu\cleaders
 \hbox{$#1\mkern-2mu\mathord-\mkern-2mu$}\hfill
 \mkern-6mu\mathord\rightarrow$}%
\def\leftarrowfill@#1{$#1\m@th\mathord\leftarrow\mkern-6mu\cleaders
 \hbox{$#1\mkern-2mu\mathord-\mkern-2mu$}\hfill\mkern-6mu\mathord-$}%
\def\leftrightarrowfill@#1{$#1\m@th\mathord\leftarrow
\mkern-6mu\cleaders
 \hbox{$#1\mkern-2mu\mathord-\mkern-2mu$}\hfill
 \mkern-6mu\mathord\rightarrow$}%
\def\overrightarrow{\mathpalette\overrightarrow@}%
\def\overrightarrow@#1#2{\vbox{\ialign{##\crcr\rightarrowfill@#1\crcr
 \noalign{\kern-\ex@\nointerlineskip}$\m@th\hfil#1#2\hfil$\crcr}}}%
\def\overleftarrow{\mathpalette\overleftarrow@}%
\def\overleftarrow@#1#2{\vbox{\ialign{##\crcr\leftarrowfill@#1\crcr
 \noalign{\kern-\ex@\nointerlineskip}$\m@th\hfil#1#2\hfil$\crcr}}}%
\def\overleftrightarrow{\mathpalette\overleftrightarrow@}%
\def\overleftrightarrow@#1#2{\vbox{\ialign{##\crcr
   \leftrightarrowfill@#1\crcr
 \noalign{\kern-\ex@\nointerlineskip}$\m@th\hfil#1#2\hfil$\crcr}}}%
\def\underrightarrow{\mathpalette\underrightarrow@}%
\def\underrightarrow@#1#2{\vtop{\ialign{##\crcr$\m@th\hfil#1#2\hfil
  $\crcr\noalign{\nointerlineskip}\rightarrowfill@#1\crcr}}}%
\def\underleftarrow{\mathpalette\underleftarrow@}%
\def\underleftarrow@#1#2{\vtop{\ialign{##\crcr$\m@th\hfil#1#2\hfil
  $\crcr\noalign{\nointerlineskip}\leftarrowfill@#1\crcr}}}%
\def\underleftrightarrow{\mathpalette\underleftrightarrow@}%
\def\underleftrightarrow@#1#2{\vtop{\ialign{##\crcr$\m@th
  \hfil#1#2\hfil$\crcr
 \noalign{\nointerlineskip}\leftrightarrowfill@#1\crcr}}}%
\def\qopnamewl@#1{\mathop{\operator@font#1}\nlimits@}
\let\nlimits@\displaylimits
\def\setboxz@h{\setbox\z@\hbox}
\def\varlim@#1#2{\mathop{\vtop{\ialign{##\crcr
 \hfil$#1\m@th\operator@font lim$\hfil\crcr
 \noalign{\nointerlineskip}#2#1\crcr
 \noalign{\nointerlineskip\kern-\ex@}\crcr}}}}
 \def\rightarrowfill@#1{\m@th\setboxz@h{$#1-$}\ht\z@\z@
  $#1\copy\z@\mkern-6mu\cleaders
  \hbox{$#1\mkern-2mu\box\z@\mkern-2mu$}\hfill
  \mkern-6mu\mathord\rightarrow$}
\def\leftarrowfill@#1{\m@th\setboxz@h{$#1-$}\ht\z@\z@
  $#1\mathord\leftarrow\mkern-6mu\cleaders
  \hbox{$#1\mkern-2mu\copy\z@\mkern-2mu$}\hfill
  \mkern-6mu\box\z@$}
\def\projlim{\qopnamewl@{proj\,lim}}
\def\injlim{\qopnamewl@{inj\,lim}}
\def\varinjlim{\mathpalette\varlim@\rightarrowfill@}
\def\varprojlim{\mathpalette\varlim@\leftarrowfill@}
\def\varliminf{\mathpalette\varliminf@{}}
\def\varliminf@#1{\mathop{\underline{\vrule\@depth.2\ex@\@width\z@
   \hbox{$#1\m@th\operator@font lim$}}}}
\def\varlimsup{\mathpalette\varlimsup@{}}
\def\varlimsup@#1{\mathop{\overline
  {\hbox{$#1\m@th\operator@font lim$}}}}
\def\align{\@verbatim \frenchspacing\@vobeyspaces \@alignverbatim
You are using the "align" environment in a style in which it is not defined.}
\let\csname endalign*\endcsname =\endtrivlist
\def\alignat{\@verbatim \frenchspacing\@vobeyspaces \@alignatverbatim
You are using the "alignat" environment in a style in which it is not defined.}
\let\csname endalignat*\endcsname =\endtrivlist
\def\xalignat{\@verbatim \frenchspacing\@vobeyspaces \@xalignatverbatim
You are using the "xalignat" environment in a style in which it is not defined.}
\let\csname endxalignat*\endcsname =\endtrivlist
\def\gather{\@verbatim \frenchspacing\@vobeyspaces \@gatherverbatim
You are using the "gather" environment in a style in which it is not defined.}
\let\csname endgather*\endcsname =\endtrivlist
\def\multiline{\@verbatim \frenchspacing\@vobeyspaces \@multilineverbatim
You are using the "multiline" environment in a style in which it is not defined.}
\let\csname endmultiline*\endcsname =\endtrivlist
\def\arrax{\@verbatim \frenchspacing\@vobeyspaces \@arraxverbatim
You are using a type of "array" construct that is only allowed in AmS-LaTeX.}
\def\tabulax{\@verbatim \frenchspacing\@vobeyspaces \@tabulaxverbatim
You are using a type of "tabular" construct that is only allowed in AmS-LaTeX.}
\let\csname endarrax*\endcsname =\endtrivlist
\let\csname endtabulax*\endcsname =\endtrivlist
\def\@@eqncr{\let\@tempa\relax
    \ifcase\@eqcnt \def\@tempa{& & &}\or \def\@tempa{& &}%
      \else \def\@tempa{&}\fi
     \@tempa
     \if@eqnsw
        \iftag@
           \@taggnum
        \else
           \@eqnnum\stepcounter{equation}%
        \fi
     \fi
     \global\tag@false
     \global\@eqnswtrue
     \global\@eqcnt\z@\cr}
 \def\endequation{%
     \ifmmode\ifinner 
      \iftag@
        \addtocounter{equation}{-1} 
        $\hfil
           \displaywidth\linewidth\@taggnum\egroup \endtrivlist
        \global\tag@false
        \global\@ignoretrue   
      \else
        $\hfil
           \displaywidth\linewidth\@eqnnum\egroup \endtrivlist
        \global\tag@false
        \global\@ignoretrue 
      \fi
     \else   
      \iftag@
        \addtocounter{equation}{-1} 
        \eqno \hbox{\@taggnum}
        \global\tag@false%
        $$\global\@ignoretrue
      \else
        \eqno \hbox{\@eqnnum}
        $$\global\@ignoretrue
      \fi
     \fi\fi
 } 
 \newif\iftag@ \tag@false
 \def\tag{\@ifnextchar*{\@tagstar}{\@tag}}
 \def\@tag#1{%
     \global\tag@true
     \global\def\@taggnum{(#1)}}
 \def\@tagstar*#1{%
     \global\tag@true
     \global\def\@taggnum{#1}%
}
\begin{document}

\begin{singlespace} 
\title{Identification and Estimation of Unconditional Policy Effects of an
Endogenous Binary Treatment: An Unconditional MTE Approach\thanks{For their constructive feedback, we thank Xiaohong Chen (coeditor), an associate editor, and two anonymous referees. We also thank Marinho Bertanha, Michael Leung, Jessie Li, Xinwei Ma, Augusto Nieto-Barthaburu, Vitor Possebom, Christoph Rothe, Kaspar Wuthrich, and seminar participants at UC Berkeley, UC Davis, UC San Diego, University of Notre Dame, LACEA 2021, NAMES 2022, SEA 2022, the 2023 CEME Conference for Young Econometricians, and the 2\textsuperscript{nd} NorCal Junior Econometricians' Conference for their very helpful comments. This research was conducted with restricted access to
Bureau of Labor Statistics (BLS) data. The views expressed here do not
necessarily reflect the views of the BLS.}}
\author{Julian Martinez-Iriarte\thanks{
Email: jmart425@ucsc.edu. } \\
Department of Economics\\
UC Santa Cruz \and Yixiao Sun\thanks{
Email: yisun@ucsd.edu.} \\
Department of Economics\\
UC San Diego}
\date{\textbf{August 6, 2024}}
\maketitle

\begin{abstract}
This paper studies the identification and estimation of
policy effects when treatment status is binary and endogenous.
We introduce a new class of marginal treatment effects (MTEs)
based on the influence function of the functional underlying the policy
target. We show that an unconditional policy effect can be represented as a
weighted average of the newly defined MTEs over the
individuals who are indifferent about their treatment status. We provide
conditions for point identification of the unconditional policy effects.
When a quantile is the functional of interest, we introduce the UNconditional Instrumental Quantile
Estimator (UNIQUE) and establish its consistency and asymptotic distribution. In the
empirical application, we estimate the effect of changing college enrollment
status, induced by higher tuition subsidy, on the quantiles of the wage
distribution.
\end{abstract}

\thispagestyle{empty} 

\textbf{Keywords}: marginal treatment effect, marginal policy-relevant
treatment effect, selection model, instrumental variable, unconditional
policy effect, unconditional quantile regression.

\textbf{JEL}: C14, C31, C36.
\end{singlespace}

\normalem

\clearpage\pagenumbering{arabic}

\section{Introduction}

An unconditional policy effect is the effect of a change in a target
covariate on the unconditional distribution of an outcome variable of
interest.\footnote{%
There are several groups of variables in this framework. The target
covariate is the variable a policy maker aims to change. We often refer to
the target covariate as the treatment or treatment variable. The outcome
variable is the variable that a policy maker ultimately cares about. A
policy maker hopes to change the target covariate in order to achieve a
desired effect on the distribution of the outcome variable. Sometimes, a
policy maker can not change the treatment variable directly and has to
change it by\ intervening some other covariates, which may be referred to as
the policy covariates. There may also be covariates that will not be
intervened.} When the target covariate has a continuous distribution, we may
be interested in shifting its location and evaluating the effect of such a
shift on the distribution of the outcome variable. For example, we may
consider increasing the number of years of education for \emph{every} worker
in order to improve the median of the wage distribution. When the change in
the covariate distribution is small, such an effect may be referred to as
the marginal unconditional policy effect.

In this paper, we consider a binary target covariate that indicates the
treatment status. In this case, a location shift is not possible, and the
only way to change its distribution is to change the proportion of treated
individuals. We analyze the impact of such a marginal change on a general
functional of the distribution of the outcome. For example, when the
functional of interest is the mean of the outcome, this corresponds to the
marginal policy-relevant treatment effect (MPRTE) of \cite{Carneiro2010,
Carneiro2011}. For the case of quantiles, we obtain an unconditional
quantile effect (UQE). Previously, in a seminal contribution, \cite%
{Firpo2009} proposed using\ an unconditional quantile regression (UQR) to
estimate the UQE.\footnote{\cite{mukhin2019} generalizes \cite{Firpo2009} to
allow for non-marginal changes in continuous covariates. Other recent
studies in the case of continuous covariates include \cite{SasakiUraZhang20}
who allow for a high-dimensional setting, \cite{InoueLiXu21} who tackle a
two-sample problem, \cite{MontesRojas} who analyze location-scale and
compensated shifts, and \cite{alejo2022} who propose an alternative
estimation method based on the slopes of (conditional) quantile regression.}
However, we show that their identification strategy can break down under
endogeneity. An extensive analysis of the resulting asymptotic bias of the
UQR estimator is provided.

The first contribution of this paper is to introduce a new class of
unconditional marginal treatment effects (MTEs) and show that the
corresponding unconditional policy effect can be represented as a weighted
average of these unconditional MTEs. The novel MTEs are derived from the
influence function of the functional of the outcome distribution that we
care about. This framework allows us to show that the MPRTE and UQE belong
to the same family of parameters. To the best of our knowledge, this was not
previously recognized in either the literature on MTEs or the literature on
unconditional policy effects.

To illustrate the usefulness of this general approach, we provide an
extensive analysis of the unconditional quantile effects. This is
empirically important since the UQR estimator proposed by \cite{Firpo2009}
is consistent for the UQE only if a certain distributional invariance
assumption holds. Such an assumption is unlikely to hold when the treatment
status is endogenous. We note that treatment endogeneity is the rule rather
than the exception in economic applications.

The second contribution of this paper is to provide a closed-form expression
for the asymptotic bias of the UQR estimator when endogeneity is overlooked.
We show that when selection into treatment follows a threshold-crossing
model, the UQR estimator can be inconsistent, even if the treatment status
is exogenous. This intriguing result underscores the need for caution in
using the UQR without careful consideration. The asymptotic bias can be
traced back to two sources. First, the subpopulation of the individuals at
the margin of indifference might have different characteristics than the
whole population. We refer to this source of bias as the \emph{marginal
heterogeneity bias. }Second, the treatment effect for a marginal individual
might be different from an apparent effect obtained by comparing the
treatment group with the control group. It is the marginal subpopulation,
not the whole population or any other subpopulation, that contributes to the
UQE. We refer to the second source of bias as the \emph{marginal selection
bias. }

The third contribution of this paper is to show that if assumptions similar
to instrument validity are imposed on the policy variables under
intervention, the resulting UQE can be point identified using the local
instrumental variable approach as in \cite{Carneiro2009}. Building on this,
we introduce the UNconditional Instrumental QUantile Estimator (UNIQUE) and
develop methods for statistical inference based on the UNIQUE when the
binary treatment is endogeneous. We take a nonparametric approach but allow
for the propensity score function to be either parametric or nonparametric.
We establish the asymptotic distribution of the UNIQUE. This is a formidable
task, as the UNIQUE is a four-step estimator, and we have to pin down
estimation errors from each step.

\textbf{Related Literature.} This paper is related to the literature on
marginal treatment effects. Introduced by \cite{Bjorklund1987}, MTEs can be
used as a building block for many different causal parameters of interest as
shown in \cite{Heckman2001}. For an excellent review, the reader is referred
to \cite{mogstad2018}. As a main departure from this literature, this paper
introduces unconditional MTEs targeted at studying unconditional policy
effects. As mentioned above, if we focus on the mean, the unconditional
marginal policy effect we study corresponds to the MPRTE of \cite%
{Carneiro2010}. In this case, the unconditional MTE is the same as the usual
MTE. For quantiles, the unconditional marginal policy effect is studied by 
\cite{Firpo2009}, but in a setting that does not allow for endogeneity.%
\footnote{%
For the case of continuous endogenous covariates, \cite{Rothe2010b} shows
that the control function approach of \cite{Imbens2009} can be used to
achieve identification. This paper considers a binary endogenous covariate.}
The unconditional MTE is novel in this case, as well as in the case with a
more general nonlinear functional of interest. In our setting that allows
some covariates to enter both the outcome equation and the selection
equation, conditioning on the propensity score is not enough, but we show
that the propensity score plays a key role in averaging the unconditional
MTEs to obtain the unconditional policy effect. This is in contrast with 
\cite{Zhou2019} where the marginal treatment effect parameter is defined
based on conditioning on the propensity score. Among other contributions, 
\cite{torgo2020} provide an extensive list of applications for MTEs, many of
which can also be applied to the unconditional MTEs introduced in this paper.

Our general treatment of the problem using functionals is closely related to
that of \cite{Rothe2012}. \cite{Rothe2012} analyzes the effect of an
arbitrary change in the distribution of a target covariate, either
continuous or discrete, on some feature of the distribution of the outcome
variable. By assuming a form of conditional independence, for the case of
continuous target covariates, \cite{Rothe2012} generalizes the approach of 
\cite{Firpo2009}. However, for the case of a discrete treatment, instead of
point identifying the effect as we do here, bounds are obtained by assuming
that either the highest-ranked or lowest-ranked individuals enter the
program under the new policy.

We are not the first to consider unconditional quantile regressions under
endogeneity.\footnote{\cite{pereda2023} studies the unconditional quantile
treatment effect of a binary endogenous regressor. However, the effect there
differs from the unconditional quantile effect considered in this paper.
Here, unconditional refers to the marginal distribution of the observed $Y$,
while in \cite{pereda2023} unconditional means that the distribution of the
potential outcomes after the covariates are integrated out.} \cite{Kasy2016}
focuses on ranking counterfactual policies and, for the case of discrete
regressors, allows for endogeneity. However, a key difference from our
approach is that the counterfactual policies analyzed in \cite{Kasy2016} are
randomly assigned conditional on a covariate vector. In our setting,
selection into treatment follows a threshold-crossing model, where we use
the exogenous variation of an instrument to obtain different counterfactual
scenarios. \cite{martinez2020} introduces the quantile breakdown frontier in
order to perform a sensitivity analysis on departures from the
distributional invariance assumption employed by \cite{Firpo2009}. 

Estimation of the MTE and parameters derived from the MTE curve is discussed
in \cite{urzua2006}, \cite{Carneiro2009}, \cite{Carneiro2010, Carneiro2011},
and \cite{ura2021}. All of these studies make a linear-in-parameters
assumption regarding the conditional means of the potential outcomes, which
yields a tractable partially linear model for the MTE curve. This strategy
is not very helpful in our case because our newly defined unconditional MTE
might involve a nonlinear function of the potential outcomes. For example,
in the case of quantiles, there is an indicator function involved. Using our
expression for the weights, we can write the UQE as a quotient of two
average derivatives. One of them, however, involves as a regressor the
estimated propensity score as in the setting of \cite{hahn2013}. We provide
conditions, different from those in \cite{hahn2013}, under which the error
from estimating the propensity score function, either parametrically or
nonparametrically, does not affect the asymptotic variance of the UNIQUE.
This may be of independent interest.

\textbf{Outline.} Section \ref{general_functional} introduces the new MTE
curve and shows how it relates to the unconditional policy effect. Section %
\ref{section_uqr} presents a model for studying the UQE under endogeneity.
Section \ref{sect:IV} considers intervening an instrumental variable in
order to change the treatment status and establishes the identification of
the corresponding UQE. Section \ref{estimation} introduces and studies the
UNIQUE under a parametric specification of the propensity score. Section \ref%
{simulation} provides simulation evidence. In Section \ref{empirical} we
revisit the empirical application of \cite{Carneiro2011} and focus on the
unconditional quantile effect. Section \ref{conclusion} concludes. An
appendix contains the proof of the main results. A supplementary appendix
provides the technical conditions for two lemmas, the proof of all lemmas
and a proposition, the estimation of the asymptotic variance for the UNIQUE,
and its asymptotic properties under a nonparametric specification of the
propensity score.

\textbf{Notation.} For any generic random variable $W_{1}$, we denote its
CDF and pdf by $F_{W_{1}}\left( \cdot \right) $ and $f_{W_{1}}\left( \cdot
\right) ,$ respectively. We denote its conditional CDF and pdf conditional
on a second random variable $W_{2}$ by $F_{W_{1}|W_{2}}\left( \cdot |\cdot
\right) $ and $f_{W_{1}|W_{2}}\left( \cdot |\cdot \right) ,$ respectively.

\section{Unconditional Policy Effects under Endogeneity}

\label{general_functional}

\subsection{Policy Intervention}

We employ the potential outcomes framework. For each individual, there are
two potential outcomes: $Y(0)$ and $Y(1)$, where $Y(0)$ is the outcome had
she received no treatment and $Y(1)$ is the outcome had she received
treatment. We assume that the potential outcomes are given by 
\begin{equation*}
Y(0)=r_{0}(X,U_{0}),\text{ and }Y(1)=r_{1}(X,U_{1}),
\end{equation*}%
for a pair of unknown functions $r_{0}$ and $r_{1}$. The vector $X\in 
\mathbb{R}^{d_{X}}$ consists of observables and $U:=\left( U_{0}^{\prime
},U_{1}^{\prime }\right) ^{\prime }$ consists of unobservables. Depending on
the individual's actual choice of treatment, denoted by $D,$ we observe
either $Y(0)$ or $Y(1)$, but we can never observe both. The observed outcome
is denoted by $Y$:%
\begin{equation}
Y=\left( 1-D\right) Y(0)+DY(1)=\left( 1-D\right)
r_{0}(X,U_{0})+Dr_{1}(X,U_{1}).  \label{model1Y}
\end{equation}

Following \cite{Heckman1999, Heckman2001, Heckman2005}, we assume that
selection into treatment is determined by a threshold-crossing equation 
\begin{equation}
D=\mathds{1}\left\{ V\leq \mu \left( W\right) \right\} ,  \label{model1D}
\end{equation}%
where $W:=(Z,X)$ and $Z\in \mathbb{R}^{d_{Z}}$ consists of covariates that
do not affect the potential outcomes directly. In the above, the unknown
function $\mu \left( W\right) $ can be regarded as the benefit from the
treatment and $V$ as the cost of the treatment. Individuals decide to take
up the treatment if and only if its benefit outweighs its cost. 
While we observe $\left( D,W,Y\right) $, we observe neither $U$ nor $V.$
Also, we do not restrict the dependence among $U,W,$ and $V$. Hence, they
can be mutually dependent and $D$ could be endogenous. Moreover, the
dimension of $U$ is left unrestricted, while $V$ is a real-valued random
variable.

The propensity score is $P(w):=\Pr \left[ D=1|W=w\right] $. In view of %
\eqref{model1D}, we can represent it as%
\begin{equation}
P(w)=\Pr \left[ V\leq \mu \left( W\right) |W=w\right] =F_{V|W}(\mu (w)|w).
\label{PS_equ}
\end{equation}%
If the conditional CDF $F_{V|W}(\cdot |w)$ is a strictly increasing function
for all $w\in \mathcal{W}$, the support of $W$, we have%
\begin{equation*}
D=\mathds{1}\left\{ V\leq \mu \left( W\right) \right\} =\mathds{1}\left\{
F_{V|W}(V|W)\leq F_{V|W}(\mu \left( W\right) |W)\right\} =\mathds{1}\left\{
U_{D}\leq P(W)\right\} ,
\end{equation*}%
where $U_{D}:=F_{V|W}(V|W)$ measures an individual's relative resistance to
the treatment, and it can be shown that $U_{D}$ is uniform on $[0,1]$ and is
independent of $W.$

To change the treatment take-up rate, we manipulate $Z,$ a subvector of $W$.%
\footnote{%
When we induce the covariate $Z$ to change, the distribution of this
covariate will change. However, we do not specify the new distribution \emph{%
a priori}. Instead, we specify the policy rule that dictates how the value
of the covariate will change for each individual in the population. Our
intervention may then be regarded as a \emph{value} intervention. This is in
contrast to a \emph{distribution} intervention that stipulates a new
covariate distribution directly. An advantage of our policy rule is that it
is directly implementable in practice while a hypothetical distribution
intervention is not. The latter intervention may still have to be
implemented via a value intervention, which is our focus here. For a similar
comment in the continuous case, see Section 3 in \cite{MontesRojas}.} More
specifically, we consider a policy intervention that changes $Z$ into $%
Z_{\delta }=\mathcal{G}\left( W,\delta \right) $ for a vector of smooth
functions $\mathcal{G}\left( \cdot ,\cdot \right) \in \mathbb{R}^{d_{Z}}$.
We assume that $\mathcal{G}(W,0)=Z$ so that the \emph{status quo} policy
corresponds to $\delta =0.$\footnote{%
For notational convenience, when $\delta =0$, we drop the subscript and
denote $Y_{0}$ and $D_{0}$ as $Y$ and $D,$ respectively.} With the induced
change in $Z,$ the selection equation becomes 
\begin{equation}
D_{\delta }=\mathds{1}\left\{ V\leq \mu \left( Z_{\delta },X\right) \right\}
=\mathds{1}\left\{ V\leq \mu \left( \mathcal{G}(W,\delta ),X\right) \right\}
,  \label{model2D}
\end{equation}%
which can be written as $D_{\delta }=\mathds{1}\left\{ U_{D}\leq P_{\delta
}(W)\right\} $, for $P_{\delta }(W)=F_{V|W}(\mu \left( \mathcal{G}(W,\delta
),X\right) |W)$.\footnote{%
Note that $U_{D}$ is still defined as $F_{V|W}(V|W)$, and so it does not
change under the counterfactual policy regime. This is to say that, relative
to others, an individual's resistance to the treatment is preserved across
the two policy regimes. In particular, $U_{D}$ is still uniform on $[0,1]$
and independent of $W.$} The outcome equation, in turn, is now 
\begin{equation}
Y_{\delta }=\left( 1-D_{\delta }\right) Y\left( 0\right) +D_{\delta }Y\left(
1\right) =\left( 1-D_{\delta }\right) r_{0}(X,U_{0})+D_{\delta
}r_{1}(X,U_{1}).  \label{model2Y}
\end{equation}

Equations (\ref{model2D}) and (\ref{model2Y}) are the same as the \emph{%
status quo} equations; the only exception is that $Z$ has been replaced by $%
Z_{\delta }.$ We have maintained the structural forms of the outcome
equation and the treatment selection equation. Importantly, we have also
maintained the stochastic dependence among $U,W,$ and $V,$ which is
manifested through the use of the same notation $U,W,$ and $V$ in equations (%
\ref{model2D}) and (\ref{model2Y}) as in equations (\ref{model1Y}) and (\ref%
{model1D}). Our policy intervention has a \emph{ceteris paribus}
interpretation at the population level: we apply the same form of
intervention on $Z$ for all individuals in the population but hold all else,
including the causal mechanism and the stochastic dependence among the \emph{%
status quo} variables, constant. In particular, the conditional distribution
of $\left( U,V\right) $ given $W$ is invariant to the value of $\delta .$

We allow $\mathcal{G}(\cdot ,\delta )$ to take a general form but some
examples may be helpful. Consider the case that $\mu \left( W\right) =Z$ for
a univariate $Z.$ We may take $\mathcal{G}\left( W,\delta \right) =Z+\delta $%
. Under such a policy, we change the value of $Z$ by $\delta $ for each
individual in the population so that the location of the distribution of $Z$
is shifted by $\delta .$ We may also take $\mathcal{G}\left( W,\delta
\right) =Z\left( 1+\delta \right) $, in which case, there is a proportional
or scale change in the value of $Z$ for all individuals. Both cases have
been studied by \cite{Carneiro2010}; see also table 2 in \cite{mogstad2018}.
More general location and scale changes, such as those given in \cite%
{MontesRojas}, are allowed. In fact, any policy function that satisfies
Assumption \ref{Assumption_regularity} in the next subsection is permitted.

\subsection{Unconditional Policy Effects}

To define an unconditional policy effect, we first describe the space of
distributions and the functional of interest on this space. Let $\mathcal{F}%
^{\ast }$ be the space of finite signed measures $\nu $ on $\mathcal{Y}%
\subseteq \mathbb{R}$ with distribution function $F_{\nu }\left( y\right)
=\nu (-\infty ,y]$ for $y\in \mathcal{Y}$. We endow $\mathcal{F}^{\ast }$
with the usual supremum norm: for two distribution functions $F_{\nu _{1}}$
and $F_{\nu _{2}}$ associated with the respective signed measures $\nu _{1}$
and $\nu _{2}$ on $\mathcal{Y}$, we define $\left\Vert F_{\nu _{1}}-F_{\nu
_{2}}\right\Vert _{\infty }:=\sup_{y\in \mathcal{Y}}\left\vert F_{\nu
_{1}}\left( y\right) -F_{\nu _{2}}\left( y\right) \right\vert .$ With some
abuse of notation, denote $F_{Y}$ as the distribution of $Y:$ $F_{Y}\left(
y\right) =\nu _{Y}(-\infty ,y]$ where $\nu _{Y}$ is the measure induced by
the distribution of $Y.$ Define $F_{Y_{\delta }}$ similarly. Clearly, both $%
F_{Y}$ and $F_{Y_{\delta }}$ belong to $\mathcal{F}^{\ast }.$ We consider a
general functional $\rho :\mathcal{F}^{\ast }\rightarrow \mathbb{R}$ and
study the general unconditional policy effect.


\begin{definition}
\label{gue}\textbf{General Unconditional Policy Effect}

\item The general unconditional policy effect for the functional $\rho $ is
defined as 
\begin{equation*}
\Pi _{\rho }:=\lim_{\delta \rightarrow 0}\frac{\rho [F_{Y_{\delta }}]- \rho[%
F_{Y}]}{E[D_{\delta }]-E[D]},
\end{equation*}%
whenever this limit exists.
\end{definition}

In this paper, we consider a Hadamard differentiable functional and its
associated influence function.\footnote{%
An earlier working paper \cite{sun2021} considers the mean functional under
different assumptions since the mean functional is not Hadamard
differentiable.} For completeness, we provide the definitions of Hadamard
differentiability and the influence function below.

\begin{definition}
\label{Def: Hadamard}$\rho :\mathcal{F}^{\ast }\rightarrow \mathbb{R}$ is
Hadamard differentiable at $F\in \mathcal{F}^{\ast }$ if there exists a
linear and continuous functional $\dot{\rho}_{F}:\mathcal{F}^{\ast
}\rightarrow \mathbb{R}$ such that for any $G\in \mathcal{F}^{\ast }$ and $%
G_{\delta }\in \mathcal{F}^{\ast }$ with $\lim_{\delta \rightarrow
0}\left\Vert G_{\delta }-G\right\Vert _{\infty }=0,$ we have 
\begin{equation*}
\lim_{\delta \rightarrow 0}\frac{\rho \lbrack F+\delta G_{\delta }]-\rho
\lbrack F]}{\delta }=\dot{\rho}_{F}\left[ G\right] .
\end{equation*}
\end{definition}

\begin{definition}
\label{Def: IF}The influence function of $\rho :\mathcal{F}^{\ast
}\rightarrow \mathbb{R}$ at $F\in \mathcal{F}^{\ast }$ is given by 
\begin{equation*}
\psi (y,\rho ,F):=\lim_{\epsilon \rightarrow 0+}\frac{\rho \left[ \left(
1-\epsilon \right) F+\epsilon \Delta _{y}\right] -\rho \left[ F\right] }{%
\epsilon },
\end{equation*}%
where $\Delta _{y}$ is the distribution function that assigns all
probability mass to the single point $\left\{ y\right\} ,$ that is, $\Delta
_{y}\left( x\right) =1\left\{ x\geq y\right\} .$
\end{definition}

To see how we can use the Hadamard differentiability to obtain $\Pi _{\rho }$%
, we write 
\begin{equation*}
\frac{\rho \lbrack F_{Y_{\delta }}]-\rho \lbrack F_{Y}]}{\delta }=\frac{\rho %
\left[ F_{Y}+\delta G_{\delta }\right] -\rho \lbrack F_{Y}]}{\delta },
\end{equation*}%
for 
\begin{equation}
G_{\delta }=\frac{F_{Y_{\delta }}-F_{Y}}{\delta }.  \label{G_Delta}
\end{equation}%
As long as we can show that $\lim_{\delta \rightarrow 0}\left\Vert G_{\delta
}-G\right\Vert _{\infty }=0$ for some $G,$ then, we obtain 
\begin{equation*}
\Pi _{\rho }=\lim_{\delta \rightarrow 0}\frac{\rho \lbrack F_{Y_{\delta
}}]-\rho \lbrack F_{Y}]}{\delta }\left( \frac{E[D_{\delta }]-E[D]}{\delta }%
\right) ^{-1}=\dot{\rho}_{F_{Y}}\left[ G\right] \left( \lim_{\delta
\rightarrow 0}\frac{E[D_{\delta }]-E[D]}{\delta }\right) ^{-1}.
\end{equation*}%
In the proof of Theorem \ref{Theorem general} below, we show that we can use
the influence function to represent $\dot{\rho}_{F_{Y}}\left[ G\right] $ as $%
\dot{\rho}_{F_{Y}}\left[ G\right] =\int_{\mathcal{Y}}\psi (y,\rho
,F_{Y})dG\left( y\right) $. 

Next, we provide sufficient conditions for $\lim_{\delta \rightarrow
0}\left\Vert G_{\delta }-G\right\Vert _{\infty }=0$. We first formalize two
primary assumptions.\footnote{%
In Assumptions \ref{Assumption_primary} and \ref{Assumption_regularity},
\textquotedblleft for all $w\in \mathcal{W}$\textquotedblright\ can be
replaced by \textquotedblleft for almost all $w\in \mathcal{W}$%
\textquotedblright , and the supremum over $w\in \mathcal{W} $ can be
replaced by the essential supremum over $w\in \mathcal{W}$.}

\begin{assumption}
\textbf{Primary Assumptions} \label{Assumption_primary}

\begin{enumerate}[(a)]%

\item \label{uniformity}$U_{D}$ is independent of $W$ and is uniformly
distributed on $\left[ 0,1\right] .$

\item \label{feasibility}For all $w=(z^{\prime },x^{\prime })^{\prime }\in 
\mathcal{W}$, $\mathcal{G}(w,0)=z,$ and for sufficiently small $\delta $, $%
\mathcal{G}(w,\delta )\in \mathcal{Z}$, the support of $Z.$

\end{enumerate}%
\end{assumption}

As discussed earlier, Assumption \ref{Assumption_primary}(\ref{uniformity})
holds if the conditional CDF $F_{V|W}(\cdot |w)$ is a strictly increasing
function for all $w\in \mathcal{W}$. Assumption \ref{Assumption_primary}(\ref%
{feasibility}) requires that the policy function $\mathcal{G}(w,\delta )$ be
feasible. This assumption, along with Assumption \ref{Assumption_regularity}(%
\ref{p_delta}.ii) below, requires that the subvector of $Z$ undergoing
nontrivial intervention consists of continuous random variables. The
variables in $X$ and the part of $Z$ not subject to intervention do not need
to be continuous random variables.



\begin{assumption}
\textbf{Regularity Conditions} \label{Assumption_regularity}

\begin{enumerate}[(a)]%

\item \label{regularity_x_abs}For $d=0,1,$ the conditional distribution of $%
(Y(d),U_{D})$ conditional on $W=w\in \mathcal{W}$ is absolutely continuous
with conditional density function given by $f_{Y(d),U_{D}|W}(y,u|w).$

\item \label{f_y_u_x} (i) For $d=0,1,$ $u\mapsto f_{Y(d)|U_{D},W}(y|u,w)$ is
continuous for all $y\in \mathcal{Y}\left( d\right) $ and all $w\in \mathcal{%
W}.$ (ii) For $d=0,1,$ $\sup_{y\in \mathcal{Y}\left( d\right) }\sup_{w\in 
\mathcal{W}}\sup_{\delta \in N_{\varepsilon }}f_{Y(d)|U_{D},W}(y|P_{\delta
}(w),w)<\infty $ where $N_{\varepsilon }:=\left\{ \delta :\left\vert \delta
\right\vert \leq \varepsilon \right\} $ for some $\varepsilon >0.$

\item \label{p_delta}(i) For all $w\in \mathcal{W}$, $P(w)\in (0,1).$ (ii)
For all $w\in \mathcal{W}$, the map $\delta \mapsto P_{\delta }(w)$ is
continuously differentiable on $N_{\varepsilon }.$ (iii) $\sup_{w\in 
\mathcal{W}}\sup_{\delta \in N_{\varepsilon }}\left\vert \frac{\partial
P_{\delta }(w)}{\partial \delta }\right\vert <\infty .$

\end{enumerate}%
\end{assumption}

\begin{assumption}
\textbf{Domination Conditions} \label{Assumption_domination} For $d=0,1,$ 
\begin{equation*}
\int_{\mathcal{Y}(d)}\sup_{\delta \in N_{\varepsilon }}f_{Y(d)|D_{\delta
}}(y|d)dy<\infty \text{ and }\int_{\mathcal{Y}(d)}\sup_{\delta \in
N_{\varepsilon }}\left\vert \frac{\partial f_{Y(d)|D_{\delta }}(y|d)}{%
\partial \delta }\right\vert dy<\infty .
\end{equation*}
\end{assumption}

Under Assumption \ref{Assumption_regularity}(\ref{f_y_u_x}, \ref{p_delta}),
we have $E\left[ P_{\delta }(W)\right] =E\left[ P(W)\right] +E\left[ \dot{P}%
\left( W\right) \right] \delta +o(\delta ),$ as $\delta \rightarrow 0$,
where 
\begin{equation*}
\dot{P}\left( w\right) =\left. \frac{\partial P_{\delta }\left( w\right) }{%
\partial \delta }\right\vert _{\delta =0}.
\end{equation*}%
So, under the new policy regime $\mathcal{G}(\cdot ,\delta )$, the
participation rate in the treatment is increased by {approximately }$E\left[ 
\dot{P}\left( W\right) \right] \delta .$ Locally at $\delta =0$, the policy
function $\mathcal{G}(\cdot ,\delta )$ affects the participation rate via $%
\dot{P}\left( \cdot \right) $, which is the rate of change in the propensity
score at $\delta =0$. The function $\dot{P}\left( \cdot \right) $ depends on
the policy function $\mathcal{G}(\cdot ,\delta )$ used, and a more
cumbersome notation for $\dot{P}\left( \cdot \right) $ is $\dot{P}_{\mathcal{%
G}}\left( \cdot \right) .$ For notational simplicity, we suppress such
dependence. Section \ref{section_dot_p} provides further analysis on $\dot{P}%
\left( \cdot \right) $.

\begin{theorem}
\label{Theorem general}Let Assumptions \ref{Assumption_primary}--\ref%
{Assumption_domination} hold. Assume further that $\rho :\mathcal{F}^{\ast
}\rightarrow \mathbb{R}$ is Hadamard differentiable with influence function $%
\psi $, and $E\left[ \dot{P}\left( W\right) \right] \neq 0$. Then, 
\begin{equation*}
\Pi _{\rho }=\int_{\mathcal{W}}\mathrm{MTE}_{\rho }\left( P(w),w\right) 
\mathcal{\dot{P}}\left( w\right) dF_{W}(w),
\end{equation*}%
where 
\begin{equation}
\mathrm{MTE}_{\rho }\left( u_{D},w\right) :=E\left[ \psi (Y\left( 1\right)
,\rho ,F_{Y})-\psi (Y\left( 0\right) ,\rho ,F_{Y})|U_{D}=u_{D},W=w\right] ,
\label{MTE_rho}
\end{equation}%
is the unconditional marginal treatment effect for the $\rho $ functional,
and 
\begin{equation*}
\mathcal{\dot{P}}\left( w\right) =\frac{\dot{P}\left( w\right) }{E\left[ 
\dot{P}\left( W\right) \right] }.
\end{equation*}%
%
%
%
%
%
%
%
%
%
%
%
%
%
%
%
%
%
%
%
%
%
%
%
%
%
%
%
%
%
%
%
%
\end{theorem}



Theorem \ref{Theorem general} reveals that the general unconditional policy
effect $\Pi _{\rho }$ is composed of two key elements: the unconditional $%
\mathrm{MTE}_{\rho }\left( u_{D},w\right) $ evaluated at $u_{D}=P(w)$ and
the weighting function $\mathcal{\dot{P}}\left( w\right) .$ To understand
the first element, consider the group of individuals with the same value $w$
of $W.$ Within this group, those for whom $u_{D}=P\left( w\right) $ are
indifferent between participating and not participating. A small incentive
will induce a change in the treatment status for and only for this subgroup
of individuals. It is the change in their treatment status, and hence the
change in the composition of $Y(1)$ and $Y(0)$ in the observed outcome $Y,$
that changes its unconditional characteristics, such as the quantiles. We
refer to the individuals for whom\ $u_{D}=P\left( w\right) $ as the marginal
subpopulation. As for the second element, we defer the discussion to Section %
\ref{section_dot_p}.

\subsection{Unconditional quantile effect}

Throughout the rest of this paper, we consider the case that $\rho $ is a
quantile functional at the quantile level $\tau \in \left( 0,1\right) ,$
that is, $\rho _{\tau }[F_{Y}]=F_{Y}^{-1}(\tau ):=\inf_{y}\left\{ y\in 
\mathcal{Y}:F_{Y}(y)\geq \tau \right\} .$ Here we have added a subscript $%
\tau $ to $\rho $ to signify the quantile level under consideration. We are
interested in how an improvement in the treatment take-up rate affects the $%
\tau $-quantile of the (unconditional) outcome distribution.

\begin{definition}
\textbf{Unconditional Quantile Effect}

\item The unconditional quantile effect (UQE) is defined as 
\begin{equation*}
\Pi _{\tau }:=\lim_{\delta \rightarrow 0}\frac{\rho _{\tau }[F_{Y_{\delta
}}]-\rho _{\tau }[F_{Y}]}{E[D_{\delta }]-E[D]}
\end{equation*}%
whenever this limit exists.
\end{definition}

Let $y_{\tau }\equiv \rho _{\tau }[F_{Y}]$ be the $\tau $-quantile of $Y.$
If $f_{Y}(y_{\tau })>0,$ then, under Assumption \ref{Assumption_regularity}(%
\ref{regularity_x_abs}), the influence function of the $\tau $-quantile
functional is\footnote{%
Strictly speaking, $\psi \left( y,\rho _{\tau },F_{Y}\right) =0$ for $%
y=y_{\tau }$. However, redefining $\psi \left( y,\rho _{\tau },F_{Y}\right) $
at one point has no consequence on our results, as $F_{Y}\left( \cdot
\right) $ is absolutely continuous under Assumption \ref%
{Assumption_regularity} (\ref{regularity_x_abs}).} 
\begin{equation*}
\psi \left( y,\rho _{\tau },F_{Y}\right) =\frac{1}{f_{Y}\left( y_{\tau
}\right) }\left[ \tau -\mathds{1}\left\{ y\leq y_{\tau }\right\} \right] .
\end{equation*}%
Plugging this influence function into (\ref{MTE_rho}), we obtain the
unconditional marginal treatment effect for the $\tau $-quantile.

\begin{definition}
\label{MTE_tau} The unconditional marginal treatment effect for the $\tau $%
-quantile is 
\begin{equation*}
\mathrm{MTE}_{\tau }\left( u,w\right) =\frac{1}{f_{Y}\left( y_{\tau }\right) 
}E\left[ \mathds{1}\left\{ Y(0)\leq y_{\tau }\right\} -\mathds{1}\left\{
Y(1)\leq y_{\tau }\right\} \mid U_{D}=u,W=w\right] .
\end{equation*}
\end{definition}

The $\mathrm{MTE}_{\tau }$ defined above is a basic building block for the
unconditional quantile effect. It is different from the quantile analogue of
the marginal treatment effect of \cite{Carneiro2009} and \cite{ping2014},
which is defined as $F_{Y(1)|U_{D},W}^{-1}(\tau
|u,w)-F_{Y(0)|U_{D},W}^{-1}(\tau |u,w)$. An unconditional quantile effect
can not be represented as an integrated version of the latter. The next
corollary follows from applying Theorem \ref{Theorem general} to $\rho_\tau$.

\begin{corollary}
\label{Corollary_UQTE0} Let Assumptions \ref{Assumption_primary}--\ref%
{Assumption_domination} hold. Assume further that $f_{Y}(y_{\tau })>0$.
Then, 
\begin{eqnarray}
\Pi _{\tau } &=&\int_{\mathcal{W}}\mathrm{MTE}_{\tau }\left( P(w),w\right) 
\mathcal{\dot{P}}\left( w\right) dF_{W}(w)  \notag \\
&=&\frac{1}{f_{Y}(y_{\tau })}\int_{\mathcal{W}}E\left[ \mathds{1}\left\{
Y(0)\leq y_{\tau }\right\} |U_{D}=P\left( w\right) ,W=w\right] \mathcal{\dot{%
P}}\left( w\right) dF_{W}\left( w\right)  \notag \\
&-&\frac{1}{f_{Y}(y_{\tau })}\int_{\mathcal{W}}E\left[ \mathds{1}\left\{
Y(1)\leq y_{\tau }\right\} |U_{D}=P\left( w\right) ,W=w\right] \mathcal{\dot{%
P}}\left( w\right) dF_{W}\left( w\right) .  \label{UQTE_endogeneity}
\end{eqnarray}
\end{corollary}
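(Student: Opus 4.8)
The plan is to specialize Theorem \ref{Theorem general} to the quantile functional $\rho_\tau[F_Y]=F_Y^{-1}(\tau)$ and then collapse the resulting integral over $\mathcal{Y}$ into conditional CDFs evaluated at $y_\tau$. As a first step I would record the influence function: when $f_Y(y_\tau)>0$ and $F_Y$ is differentiable at $y_\tau$, the $\tau$-quantile map is Hadamard differentiable at $F_Y$ (tangentially to directions that are continuously differentiable near $y_\tau$) with influence function $\psi(y,\rho_\tau,F_Y)=(\tau-\mathds{1}\{y\le y_\tau\})/f_Y(y_\tau)$. This is the standard quantile delta-method computation, and it is exactly the $\psi$ flagged in the statement of the corollary.

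Next I would verify that the specific perturbation direction $G$ from \eqref{equation G},
\[
G(y)=E\big[\{F_{Y(1)|U_D,W}(y|P(W),W)-F_{Y(0)|U_D,W}(y|P(W),W)\}\dot{P}(W)\big],
\]
lies in the set along which the quantile map is differentiable. Lemma \ref{Lemma_linearization} already establishes $\|G_\delta-G\|_\infty\to0$ for this $G$, and the density difference $g(y):=E[\{f_{Y(1)|U_D,W}(y|P(W),W)-f_{Y(0)|U_D,W}(y|P(W),W)\}\dot{P}(W)]$ supplies its derivative, with the integrable envelope in Assumption \ref{Assumption_domination} giving continuity of $g$ near $y_\tau$. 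With this in hand, Theorem \ref{Theorem general} delivers $\Pi_\tau=\int_{\mathcal{Y}}\psi(y,\rho_\tau,F_Y)\,g(y)\,dy$.

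The remaining work is the computation. Substituting $\psi$, the constant piece contributes $\tfrac{\tau}{f_Y(y_\tau)}\int_{\mathcal{Y}}g(y)\,dy$, which vanishes because each conditional density integrates to one, so $\int_{\mathcal{Y}}f_{Y(d)|U_D,W}(y|P(W),W)\,dy=1$ and their difference integrates to zero. What survives is $-\tfrac{1}{f_Y(y_\tau)}\int_{\mathcal{Y}}\mathds{1}\{y\le y_\tau\}\,g(y)\,dy$. Pulling the $W$-expectation outside the $y$-integral by Fubini (again justified by Assumption \ref{Assumption_domination}) and using $\int_{\mathcal{Y}}\mathds{1}\{y\le y_\tau\}f_{Y(d)|U_D,W}(y|P(W),W)\,dy=F_{Y(d)|U_D,W}(y_\tau|P(W),W)=E[\mathds{1}\{Y(d)\le y_\tau\}\mid U_D=P(W),W]$, I obtain $\Pi_\tau=-\tfrac{1}{f_Y(y_\tau)}E\big[\{E[\mathds{1}\{Y(1)\le y_\tau\}\mid U_D=P(W),W]-E[\mathds{1}\{Y(0)\le y_\tau\}\mid U_D=P(W),W]\}\dot{P}(W)\big]$. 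Writing the outer expectation as $\int_{\mathcal{W}}(\cdot)\dot{P}(w)f_W(w)\,dw$ and moving the $Y(0)$ term first (absorbing the sign) reproduces \eqref{UQTE_endogeneity} exactly.

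I expect the differentiability bookkeeping of the second step to be the main obstacle: Theorem \ref{Theorem general} is stated for functionals that are Hadamard differentiable in the unrestricted sense of Definition \ref{Def: Hadamard}, whereas the quantile map is differentiable only along directions that are regular at $y_\tau$. The substantive part of the argument is therefore confirming that Assumptions \ref{Assumption_support}--\ref{Assumption_domination} force $G$ into that tangent set, so the chain-rule structure underlying Theorem \ref{Theorem general} still applies. Once that is secured, the vanishing of the $\tau$-term and the Fubini reduction to conditional CDFs are routine.
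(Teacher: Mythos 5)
Your proposal is correct and takes essentially the same route as the paper, whose entire proof of this corollary is the one-line observation that it follows from Theorem \ref{Theorem general} upon taking $\psi (y,\rho ,F_{Y})=\left( \tau -\mathds{1}\left\{ y\leq y_{\tau }\right\} \right) /f_{Y}(y_{\tau })$. The extra steps you supply—checking that the direction $G$ is regular at $y_{\tau }$ so that the (only tangential) Hadamard differentiability of the quantile map applies, and the Fubini reduction of $\int_{\mathcal{Y}}\psi \,dG$ to conditional CDFs—are details the paper leaves implicit rather than a different argument.
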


\subsection{Understanding $\mathrm{MTE}_{\protect\tau }\left( u,w\right)$}

To understand $\mathrm{MTE}_{\tau }\left( u,w\right) ,$ we define $\Delta
(y_{\tau }):=\left( \mathds{1}\left\{ Y(0)\leq y_{\tau }\right\} -\mathds{1}%
\left\{ Y(1)\leq y_{\tau }\right\} \right) /f_{Y}(y_{\tau }),$ which
underlies the above definition of $\mathrm{MTE}_{\tau }\left( u,w\right) $.
The random variable $\Delta (y_{\tau })$ can take three values: 
\begin{equation*}
\Delta (y_{\tau })=%
\begin{cases}
\begin{aligned} &f_Y(y_\tau)^{-1}&\text{ if }&Y(0)\leq y_{\tau }\text{ and
}Y(1)>y_{\tau } \\ &0&\text{if }&\bigg[Y(0)>y_{\tau }\text{ and
}Y(1)>y_{\tau }\bigg]\text{ or }\bigg[Y(0)\leq y_{\tau }\text{ and }Y(1)\leq
y_{\tau }\bigg] \\ &-f_Y(y_\tau)^{-1}&\text{ if }&Y(0)>y_{\tau }\text{ and
}Y(1)\leq y_{\tau }\end{aligned}%
\end{cases}%
\end{equation*}%
For a given individual, $\Delta (y_{\tau })=1/f_{Y}(y_{\tau })$ when the
treatment induces the individual to \textquotedblleft
cross\textquotedblright\ the $\tau $-quantile $y_{\tau }$ of $Y$ from below,
and $\Delta (y_{\tau })=-1/f_{Y}(y_{\tau })$ when the treatment induces the
individual to \textquotedblleft cross\textquotedblright\ the $\tau $%
-quantile $y_{\tau }$ of $Y$ from above. In the first case, the individual
benefits from the treatment, while in the second case, the treatment harms
her. The intermediate case, $\Delta (y_{\tau })=0,$ occurs when the
treatment induces no quantile crossing of any type. Thus, the unconditional
expected value $f_{Y}(y_{\tau })\times E[\Delta (y_{\tau })]$ equals the
difference between the proportion of individuals who benefit from the
treatment and the proportion of individuals who are harmed by it. For the
UQE, whether the treatment is beneficial or harmful is measured in terms of
quantile crossing. Among the individuals with characteristics $U_{D}=u$ and $%
W=w$, $\mathrm{MTE}_{\tau }\left( u,w\right) $ is then equal to the rescaled
(by $1/f_{Y}(y_{\tau })$) difference between the proportion of individuals
who benefit from the treatment and the proportion of individuals who are
harmed by it. Thus, $\mathrm{MTE}_{\tau }\left( u,w\right) $ is positive if
the treatment leads to a greater number of individuals improving their
outcome above the threshold $y_{\tau }$, compared to those whose outcome
falls below $y_{\tau }$. Conversely, $\mathrm{MTE}_{\tau }\left( u,w\right) $
is negative if the treatment leads to a greater number of individuals
experiencing a decline in their outcome, falling below $y_{\tau }$, compared
to those experiencing an increase above $y_{\tau }.$

\subsection{Understanding $\mathcal{\dot{P}}$}

\label{section_dot_p}

To understand the weighting function $\mathcal{\dot{P}}\left( w\right) ,$
consider the case $\delta >0$, $P_{\delta }(w)\geq P(w)$ for all $w\in W$,
and $F_{W}\left( \cdot \right) $ is absolutely continuous with density $%
f_{W}\left( \cdot \right) .$ Let $\epsilon $ be a small positive number.
Then, $f_{W}\left( w\right) \epsilon $ measures the proportion of
individuals for whom $W$ is in $\left[ w-\epsilon /2,w+\epsilon /2\right] .$
Note that for $W\in \left[ w-\epsilon /2,w+\epsilon /2\right] ,$ the
propensity scores under $D$ and $D_{\delta }$ are approximately $P(w)$ and $%
P_{\delta }\left( w\right) $. The proportion of the individuals for whom $%
W\in \left[ w-\epsilon /2,w+\epsilon /2\right] $ and who have switched their
treatment status from $0$ to $1$ is then equal to $\left[ P_{\delta }(w)-P(w)%
\right] f_{W}\left( w\right) \cdot \epsilon .$ Scaling this by $E[D_{\delta
}]-E[D]$, which is the overall proportion of the individuals who have
switched the treatment status, we obtain 
\begin{equation*}
\frac{\left[ P_{\delta }(w)-P(w)\right] f_{W}\left( w\right) }{E[D_{\delta
}]-E[D]}\cdot \epsilon .
\end{equation*}%
Thus, we can regard $\left[ P_{\delta }(w)-P(w)\right] f_{W}\left( w\right) /%
\left[ E[D_{\delta }]-E[D]\right] $ as the density function of $W$ among
those who have switched their treatment status from $0$ to $1$ as a result
of the policy intervention. On the one hand, when $\delta \rightarrow 0,$
the set of individuals who change their treatment status are the individuals
on the margin, namely, those for whom $u_{D}=P(w).$ On the other hand, when $%
\delta \rightarrow 0,$ the density function approaches $\mathcal{\dot{P}}%
\left( w\right) f_{W}\left( w\right) .$ So $\mathcal{\dot{P}}\left( w\right)
f_{W}\left( w\right) $ is the probability density function (with respect to
the Lebesgue measure) of the distribution of $W$ over the marginal
subpopulation. Also, note that by construction,\ $\int_{\mathcal{W}}\mathcal{%
\dot{P}}\left( w\right) dF(w)=1$, and thus $\mathcal{\dot{P}}\left( w\right) 
$ can be interpreted as the density of the distribution of $W$ for the
marginal subpopulation with respect to the distribution of $W$ for the
entire population.

It is now clear that the general unconditional policy effect is equal to the
average of the unconditional $\mathrm{MTE}$ over the marginal subpopulation.
Such an interpretation is still valid even if $\mathcal{\dot{P}}\left(
w\right) $ is not positive for all $w\in \mathcal{W}$. In this case, we only
need to view the distribution with density $\mathcal{\dot{P}}\left( w\right) 
$ (with respect to the distribution of $W$ for the entire population) as a
signed measure.

\subsection{Understanding the Unconditional MTE}

To gain a deeper understanding of the unconditional MTE, both in the general
case and the special quantile case, we will explore another perspective
here. Note that this subsection will only provide heuristics, as the formal
developments have already been covered in the previous subsections.

\cite{heckman_prte, Heckman2005} focus on the mean functional and consider
the \emph{policy-relevant treatment effect} defined as 
\begin{equation}
\text{\textrm{PRTE}}_{\delta }=\frac{E[Y_{\delta }]-E[Y]}{E[D_{\delta }]-E[D]%
}.  \label{prte}
\end{equation}%
Taking the limit $\delta \rightarrow 0$ yields the \emph{marginal
policy-relevant treatment effect} (\textrm{MPRTE}) of \cite{Carneiro2010}: $%
\mathrm{MPRTE}=\lim_{\delta \rightarrow 0}\text{\textrm{PRTE}}_{\delta }.$ 
\cite{Carneiro2010, Carneiro2011} show that $\mathrm{MPRTE}$ can be
represented in terms of the conventional marginal treatment effect defined
by $\mathrm{MTE}(u):=E\left[ Y(1)-Y(0)|U_{D}=u\right] $. These results are
applicable to the mean functional only.

For a general functional $\rho $ that is Hadamard differentiable, we have 
\begin{equation*}
\lim_{\delta \rightarrow 0}\frac{\rho \lbrack F_{Y_{\delta }}]-\rho \lbrack
F_{Y}]}{E[D_{\delta }]-E[D]}=\dot{\rho}_{F_{Y}}\left( \lim_{\delta
\rightarrow 0}\frac{F_{Y_{\delta }}-F_{Y}}{E[D_{\delta }]-E[D]}\right) ,
\end{equation*}%
where $\dot{\rho}_{F_{Y}}$ is the derivative of $\rho $ at $F_{Y}$. One way
to represent the limit on the right-hand side is to replace $E\left(
Y_{\delta }\right) $ and $E\left( Y\right) $ in (\ref{prte}) by $E[\mathds%
1\left\{ Y_{\delta }\leq y\right\} ]$ and $E[\mathds1\left\{ Y\leq y\right\}
],$ respectively. So, for a given $y\in \mathcal{Y}$, 
\begin{equation}
\frac{F_{Y_{\delta }}\left( y\right) -F_{Y}\left( y\right) }{E[D_{\delta
}]-E[D]}=\frac{E[\mathds1\left\{ Y_{\delta }\leq y\right\} -E[\mathds%
1\left\{ Y\leq y\right\} ]}{E[D_{\delta }]-E[D]}.
\end{equation}%
In the spirit of \cite{heckman_prte, Heckman2005}, we may then regard the
above as a policy-relevant treatment effect: it is the treatment effect of
the policy on the percentage of individuals whose value of $Y$ is less than $%
y.$ The effect is tied to a particular value $y$, and we obtain a continuum
of policy-relevant effects indexed by $y\in \mathcal{Y}$ if $y$ is allowed
to vary over $\mathcal{Y}$. The limit 
\begin{equation*}
\lim_{\delta \rightarrow 0}\frac{F_{Y_{\delta }}\left( y\right) -F_{Y}\left(
y\right) }{E[D_{\delta }]-E[D]}
\end{equation*}%
can then be regarded as a continuum of marginal policy-relevant treatment
effects indexed by $y\in \mathcal{Y}$. By the results of \cite{Carneiro2009}%
, for each $y\in \mathcal{Y}$, the marginal policy-relevant treatment effect
can be represented as a weighted integral of the following \emph{%
policy-relevant \textquotedblleft distributional\textquotedblright }\ MTE: 
\begin{eqnarray*}
\mathrm{MTE}_{d}(u,w;y) &=&E\left[ \mathds1\left\{ Y(1)\leq y\right\} -%
\mathds1\left\{ Y(0)\leq y\right\} |U_{D}=u,W=w\right] \\
&=&F_{Y\left( 1\right) |U_{D},W}\left( y|u,w\right) -F_{Y\left( 0\right)
|U_{D},W}\left( y|u,w\right) .
\end{eqnarray*}%
The composition $\dot{\rho}_{F_{Y}}\circ \mathrm{MTE}_{d}$, defined as $%
\int_{\mathcal{Y}}\psi (y,\rho ,F_{Y})\mathrm{MTE}_{d}(u,w;dy),$ is then 
\begin{eqnarray*}
\left( \dot{\rho}_{F_{Y}}\circ \mathrm{MTE}_{d}\right) \left( u,w\right)
&=&\int_{\mathcal{Y}}\psi (y,\rho ,F_{Y})\left[ F_{Y\left( 1\right)
|U_{D},W}\left( dy|u,w\right) -F_{Y\left( 0\right) |U_{D},W}\left(
dy|u,w\right) \right] \\
&=&\int_{\mathcal{Y}}\psi (y,\rho ,F_{Y})\left[ f_{Y\left( 1\right)
|U_{D},W}\left( y|u,w\right) -f_{Y\left( 0\right) |U_{D},W}\left(
y|u,w\right) \right] dy \\
&=&E\left[ \psi (Y\left( 1\right) ,\rho ,F_{Y})-\psi (Y\left( 0\right) ,\rho
,F_{Y})|U_{D}=u,W=w\right] .
\end{eqnarray*}%
This shows that $\dot{\rho}_{F_{Y}}\circ \mathrm{MTE}_{d}$ is exactly the
unconditional MTE defined in (\ref{MTE_rho}). The unconditional MTE is
therefore a composition of the underlying influence function with the \emph{%
policy-relevant distributional} MTE.

\section{UQR with a Threshold-crossing Model}

\label{section_uqr}

In this section, we study whether the UQR proposed by \cite{Firpo2009} can
provide a consistent estimator of UQE. When it is inconsistent, we
investigate the sources of asymptotic bias in the UQR estimator and show
that it is asymptotically biased, even when $D$ is exogenous. As a result,
the UQR may not be suitable for estimating unconditional policy effects when
the treatment variable follows a binary threshold-crossing model. In Section %
\ref{estimation}, we will present a consistent estimator of the
unconditional policy effect for such a model.

\subsection{UQR with a Binary Regressor}

We provide a quick review of the UQR. As before, let $y_{\tau }$ be the $%
\tau $-quantile of $Y,$ and let $y_{\tau ,\delta }$ be the $\tau $-quantile
of $Y_{\delta }$. That is, $\Pr [Y\leq y_{\tau }]=\Pr [Y_{\delta }\leq
y_{\tau ,\delta }]=\tau .$ By definition, we have%
\begin{equation*}
\Pr \left[ Y_{\delta }\leq y\right] =\Pr \left[ D_{\delta }=1\right] \Pr %
\left[ Y_{\delta }\leq y|D_{\delta }=1\right] +\Pr \left[ D_{\delta }=0%
\right] \Pr \left[ Y_{\delta }\leq y|D_{\delta }=0\right] .
\end{equation*}%
When $W$ is not present, Corollary 3 in the working paper \cite{Firpo2007}
makes the following assumption to achieve identification: 
\begin{equation}
\Pr \left[ Y_{\delta }\leq y|D_{\delta }=d\right] =\Pr \left[ Y\leq y|D=d%
\right] \text{ for }d=0\text{ and }1.  \label{key_assumption}
\end{equation}%
We refer to this assumption as \emph{distributional invariance}, and it
readily identifies the counterfactual distribution: 
\begin{equation*}
\Pr \left[ Y_{\delta }\leq y\right] =\Pr \left[ D_{\delta }=1\right] \Pr %
\left[ Y\leq y|D=1\right] +\Pr \left[ D_{\delta }=0\right] \Pr \left[ Y\leq
y|D=0\right] .
\end{equation*}%
%
%
%
%
%
%
%
%
%
%
%
%
%
%
%
%
%
%
%
%
%
%
%
%
%
%
%
%
%
%
%
%
%
%
%
%
%
%
%
%
%
%
%
%
%
%
%
%
%
%
%
%
%
%
%
%
%
%
%
%
%
%
%
%
%
%
%
%
%
%
%
%
%
%
%
%
%
%
%
%
%
%
%
%
%
%
%
%
%
%
%
%
%
%
%
%
%
%
%
%
%
%
%
%
%
%
%
%
Under some mild conditions, we can follow \cite{Firpo2007} to show that 
\begin{eqnarray}
\Pi _{\tau } &:=&\lim_{\delta \rightarrow 0}\frac{\rho _{\tau }[F_{Y_{\delta
}}]-\rho _{\tau }[F_{Y}]}{E[D_{\delta }]-E[D]}  \notag \\
&=&\frac{1}{f_{Y}(y_{\tau })}\left( \Pr [Y>y_{\tau }|D=1]-\Pr [Y>y_{\tau
}|D=0]\right) .  \label{pi_no_W}
\end{eqnarray}%
Hence, under the distributional invariance assumption\emph{, }the UQE can be
consistently estimated by regressing $1\left\{ Y\geq y_{\tau }\right\}
/f_{Y}(y_{\tau })$ on a constant and $D.$ Such a regression with no
additional regressor $W$ is a special case of more general unconditional
quantile regressions.

%
%
%
%
%
%
%
%
%
%
%
%
%
%
%
%
%
%
%
%
%
%
%
%
%
%
%
%
%
%
%
%
%
%
%
%
%
%
%
%
%
%
%
%
%
%
%
%
%
%
%
%
%
%
%
%
%
%
%
%
%
%
%
%
%
%
%
%

\subsection{Asymptotic Bias of the UQR Estimator}

The distributional invariance assumption given in equation %
\eqref{key_assumption} is crucial for achieving\ the identification result
in \eqref{pi_no_W}. It states that the conditional distribution of the
outcome variable given the treatment status remains the same across the two
policy regimes. If treatments are randomly assigned under both policy
regimes (e.g., $D_{\delta }=\mathds{1}\left\{ U_{D}\leq P_{\delta
}(W)\right\} $ and $\left( U_{D},W\right) $ is independent of $\left(
U_{0},U_{1}\right) ),$ then $\left( U_{0},U_{1}\right) $ is clearly
independent of $D_{\delta }$. In this case, both $\Pr \left[ Y_{\delta }\leq
y|D_{\delta }=d\right] $ and $\Pr \left[ Y\leq y|D=d\right] $ are equal to $%
\Pr \left[ Y\left( d\right) \leq y\right] ,$ and the distributional
invariance assumption is satisfied. However, when $D_{\delta }$ is allowed
to be correlated with $U=\left( U_{0},U_{1}\right) ^{\prime }$, the
distributional invariance assumption does not hold in general. For example,
when $d=1,$ 
\begin{equation*}
\Pr \left[ Y_{\delta }\leq y|D_{\delta }=1\right] =\Pr \left[ Y\left(
1\right) \leq y|D_{\delta }=1\right] =\Pr \left[ r_{1}\left( X,U_{1}\right)
\leq y|U_{D}\leq P_{\delta }\left( W\right) \right] ,
\end{equation*}%
and $\Pr [Y\leq y|D=1]=\Pr \left[ r_{1}\left( X,U_{1}\right) \leq
y|U_{D}\leq P\left( W\right) \right] .$ These two conditional probabilities
are different under the general dependence of $(W,U,U_{D}).$

To allow for the endogeneity of $D$, we have dropped the distributional
invariance assumption and assumed a threshold-crossing model as in (\ref%
{model1D}). The next corollary decomposes the unconditional quantile effect
given in Corollary \ref{Corollary_UQTE0} into two components. The
decomposition reveals that the UQR estimator of \cite{Firpo2009} is
asymptotically biased under a wide range of conditions, including when $D$
is exogenous.

\begin{corollary}
\label{Corollary_UQTE}Let Assumptions \ref{Assumption_primary}--\ref%
{Assumption_domination} hold. Assume further that $f_{Y}(y_{\tau })>0$. Then%
\footnote{%
We use \textquotedblleft A\textquotedblright\ to denote the \textbf{A}%
pparent component and use \textquotedblleft B\textquotedblright\ to denote
the \textbf{B}ias component.} 
\begin{equation*}
\Pi _{\tau }=A_{\tau }-B_{\tau },
\end{equation*}%
where 
\begin{eqnarray}
A_{\tau } &=&\frac{1}{f_{Y}(y_{\tau })}\int_{\mathcal{W}}E\left[ \mathds{1}%
\left\{ Y\leq y_{\tau }\right\} |D=0,W=w\right] dF_{W}\left( w\right)  \notag
\\
&-&\frac{1}{f_{Y}(y_{\tau })}\int_{\mathcal{W}}E\left[ \mathds{1}\left\{
Y\leq y_{\tau }\right\} |D=1,W=w\right] dF_{W}\left( w\right) ,
\label{app_eff}
\end{eqnarray}%
and $B_{\tau }=B_{1\tau }+B_{2\tau }$, for 
\begin{eqnarray*}
B_{1\tau } &=&\frac{1}{f_{Y}(y_{\tau })}\int_{\mathcal{W}}\left[
F_{Y|D,W}\left( y_{\tau }|1,w\right) -F_{Y|D,W}\left( y_{\tau }|0,w\right) %
\right] \mathcal{\dot{P}}\left( w\right) dF_{W}\left( w\right) \\
&-&\frac{1}{f_{Y}(y_{\tau })}\int_{\mathcal{W}}\left[ F_{Y|D,W}\left(
y_{\tau }|1,w\right) -F_{Y|D,W}\left( y_{\tau }|0,w\right) \right]
dF_{W}\left( w\right)
\end{eqnarray*}%
and 
\begin{eqnarray*}
B_{2\tau } &=&\frac{1}{f_{Y}(y_{\tau })}\int_{\mathcal{W}}\left[ F_{Y\left(
0\right) |D,W}\left( y_{\tau }|0,w\right) -F_{Y\left( 0\right)
|U_{D},W}\left( y_{\tau }|P\left( w\right) ,w\right) \right] \mathcal{\dot{P}%
}\left( w\right) dF_{W}\left( w\right) \\
&-&\frac{1}{f_{Y}(y_{\tau })}\int_{\mathcal{W}}\left[ F_{Y\left( 1\right)
|D,W}\left( y_{\tau }|1,w\right) -F_{Y\left( 1\right) |U_{D},W}\left(
y_{\tau }|P\left( w\right) ,w\right) \right] \mathcal{\dot{P}}\left(
w\right) dF_{W}\left( w\right) . \\
&&
\end{eqnarray*}
\end{corollary}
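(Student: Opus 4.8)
The plan is to treat Corollary \ref{Corollary_UQTE} as a pure bookkeeping rearrangement of the identified expression for $\Pi_\tau$ supplied by Corollary \ref{Corollary_UQTE0}, so that no new probabilistic input is needed beyond one observational identity. First I would rewrite the two conditional expectations in \eqref{UQTE_endogeneity} as conditional CDFs, using $E[\mathds{1}\{Y(d)\leq y_\tau\}\mid U_D=P(w),W=w]=F_{Y(d)|U_D,W}(y_\tau\mid P(w),w)$ for $d=0,1$, so that
\begin{equation*}
f_Y(y_\tau)\,\Pi_\tau=\int_{\mathcal{W}}\left[F_{Y(0)|U_D,W}(y_\tau\mid P(w),w)-F_{Y(1)|U_D,W}(y_\tau\mid P(w),w)\right]\dot{P}(w)f_W(w)\,dw.
\end{equation*}

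The single structural fact I would invoke is that on the event $\{D=d\}$ we observe $Y=Y(d)$, so the observed conditional CDF coincides with a potential-outcome conditional CDF: $F_{Y|D,W}(y_\tau\mid d,w)=F_{Y(d)|D,W}(y_\tau\mid d,w)$. This is what lets me connect the infeasible integrand above, which conditions on the marginal event $U_D=P(w)$, to the feasible quantities $F_{Y|D,W}(y_\tau\mid d,w)$ that appear in $A_\tau$, $B_{1\tau}$, and $B_{2\tau}$.

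The core of the argument is then an add-and-subtract step carried out inside the integrand. For each $d$ I would bridge from the infeasible $F_{Y(d)|U_D,W}(y_\tau\mid P(w),w)$ to the feasible $F_{Y|D,W}(y_\tau\mid d,w)$; using the identity $F_{Y(d)|D,W}(y_\tau\mid d,w)=F_{Y|D,W}(y_\tau\mid d,w)$, the residual from this replacement is exactly the negative of the integrand of $B_{2\tau}$. Next I would split the weight as $\dot{P}(w)=1+(\dot{P}(w)-1)$ on the resulting observed-conditional term: the part carrying weight $f_W(w)$ is precisely $f_Y(y_\tau)A_\tau$, while the part carrying weight $(\dot{P}(w)-1)f_W(w)$ is the negative of the integrand of $B_{1\tau}$. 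To close the proof I would collect like terms and verify that $A_\tau-B_{1\tau}-B_{2\tau}$ reproduces $f_Y(y_\tau)\Pi_\tau$: the pieces carrying weight $f_W(w)$ cancel in pairs, and the pieces carrying weight $\dot{P}(w)f_W(w)$ telescope to $F_{Y(0)|U_D,W}(y_\tau\mid P(w),w)-F_{Y(1)|U_D,W}(y_\tau\mid P(w),w)$, as required.

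There is essentially no analytic obstacle here—the decomposition is engineered so that the four groups of terms telescope—so the only real work is clerical: keeping every sign and every subscript straight ($0$ versus $1$, $D$ versus $U_D$, weight $\dot{P}f_W$ versus $f_W$) and applying the observational identity to the correct term. The one substantive point worth flagging is interpretive rather than computational: $B_{1\tau}$ involves only observed conditional CDFs and isolates the discrepancy between weighting by the marginal density $\dot{P}(w)f_W(w)$ and the population density $f_W(w)$ (the \emph{marginal heterogeneity bias}), whereas $B_{2\tau}$ isolates the discrepancy between the observed-group CDF $F_{Y(d)|D,W}(y_\tau\mid d,w)$ and the marginal-subpopulation CDF $F_{Y(d)|U_D,W}(y_\tau\mid P(w),w)$ (the \emph{marginal selection bias}), matching the two sources of bias described earlier in the text.
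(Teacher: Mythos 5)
Your proposal is correct and follows essentially the same route as the paper's proof: starting from Corollary \ref{Corollary_UQTE0}, add and subtract the observed-group conditional CDF $F_{Y|D,W}(y_\tau|d,w)$ (justified by $Y=Y(d)$ on $\{D=d\}$) to isolate $B_{2\tau}$, then split the weight $\dot{P}(w)$ into $1$ plus a remainder to produce $A_\tau$ and $B_{1\tau}$. The only difference is organizational — the paper performs the decomposition separately for each $d$ (defining $A_\tau(d)$, $B_{1\tau}(d)$, $B_{2\tau}(d)$ and then differencing), while you work directly on the differenced integrand — which does not change the substance of the argument.
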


To facilitate understanding of Corollary \ref{Corollary_UQTE}, we define and
organize the average influence functions (AIF) in a table:%

\begin{equation*}
\begin{tabular}{c|c}
\hline\hline
& AIF for $Y\left( 1\right) $ $-$ AIF for $Y\left( 0\right) $ \\ \hline
$U_{D}$ & $E_{w}\left[ \psi _{\tau }\left( Y\left( 1\right) \right) -\psi
_{\tau }\left( Y\left( 0\right) \right) |U_{D}=P\left( w\right) \right] $ \\ 
$D$ & $E_{w}\left[ \psi _{\tau }\left( Y\left( 1\right) \right) |D=1\right]
-E_{w}\left[ \psi _{\tau }\left( Y\left( 0\right) \right) |D=0\right] $ \\ 
\hline
\end{tabular}%
\end{equation*}

\noindent where $\psi _{\tau }\left( \cdot \right) $ is short for $\psi
(\cdot ,\rho _{\tau },F_{Y})$, the influence function of the quantile
functional. In the above, $E_{w}\left[ \cdot \right] $ stands for the
conditional mean operator given $W=w.$ For example, $E_{w}\left[ \psi _{\tau
}\left( Y\left( 0\right) \right) |D=0\right] $ stands for $E\left[ \psi
_{\tau }\left( Y\left( 0\right) \right) |D=0,W=w\right] .$ Let 
\begin{eqnarray*}
\psi _{\Delta ,U_{D}}\left( w\right) := &&E_{w}\left[ \psi _{\tau }\left(
Y\left( 1\right) \right) -\psi _{\tau }\left( Y\left( 0\right) \right)
|U_{D}=P\left( w\right) \right] , \\
\psi _{\Delta ,D}\left( w\right) := &&E_{w}\left[ \psi _{\tau }\left(
Y\left( 1\right) \right) |D=1\right] -E_{w}\left[ \psi _{\tau }\left(
Y\left( 0\right) \right) |D=0\right] .
\end{eqnarray*}%
The unconditional quantile effect $\Pi _{\tau }$ is the average of the
difference $\psi _{\Delta ,U_{D}}\left( w\right) $ with respect to the
distribution of $W$ over the \emph{marginal subpopulation}. The average 
\emph{apparent} effect $A_{\tau }$ is the average of the difference $\psi
_{\Delta ,D}\left( w\right) $ with respect to the distribution of $W$ over
the whole population distribution. It is also equal to the limit of the UQR
estimator of \cite{Firpo2009}, where the endogeneity of the treatment
selection is ignored.\footnote{%
To see why this is the case, we note that, in its simplest form, the UQR
involves regressing the \textquotedblleft influence
function\textquotedblright\ $\hat{f}_{Y}(\hat{y}_{\tau })^{-1}\left( \tau -%
\mathds{1}\left\{ Y_{i}\leq \hat{y}_{\tau }\right\} \right) $ on $D_{i}$ and 
$W_{i}$ by OLS and using the estimated coefficient on $D_{i}$ as the
estimator of the unconditional quantile effect. Here, $\hat{f}_{Y}(y_{\tau
}) $ is a consistent estimator of $f_{Y}(y_{\tau })$ and $\hat{y}_{\tau }$
is a consistent estimator of $y_{\tau }.$ It is now easy to see that the UQR
estimator converges in probability to $A_{\tau }$ if the conditional
expectations in (\ref{app_eff}) are linear in $W.$} Note that if our model
contains no covariate $W,$ then 
\begin{eqnarray}
A_{\tau } &=&\frac{1}{f_{Y}(y_{\tau })}E\left[ \mathds{1}\left\{ Y\leq
y_{\tau }\right\} |D=0\right] -\frac{1}{f_{Y}(y_{\tau })}E\left[ \mathds{1}%
\left\{ Y\leq y_{\tau }\right\} |D=1\right]  \notag \\
&=&\frac{1}{f_{Y}(y_{\tau })}\left( \Pr [Y>y_{\tau }|D=1]-\Pr [Y>y_{\tau
}|D=0]\right) .
\end{eqnarray}%
This is identical to the unconditional quantile effect given in (\ref%
{pi_no_W}).

The discrepancy between $\Pi _{\tau }$ and $A_{\tau }$ gives rise to the
asymptotic \emph{bias} $B_{\tau }$ of the UQR estimator:%
\begin{eqnarray}
B_{\tau } &=&A_{\tau }-\Pi _{\tau }=E\left[ \psi _{\Delta ,D}\left( W\right) %
\right] -E\left[ \psi _{\Delta ,U_{D}}\left( W\right) \mathcal{\dot{P}}%
\left( W\right) \right]  \notag \\
&=&\underset{B_{1\tau }}{\underbrace{E\left\{ \psi _{\Delta ,D}\left(
W\right) \left[ 1-\mathcal{\dot{P}}\left( W\right) \right] \right\} }}+%
\underset{B_{2\tau }}{\underbrace{E\left\{ \left[ \psi _{\Delta ,D}\left(
W\right) -\psi _{\Delta ,U_{D}}\left( W\right) \right] \mathcal{\dot{P}}%
\left( W\right) \right\} }}.  \label{Bias_decompose}
\end{eqnarray}%
It is easy to see that $B_{1\tau }$ and $B_{2\tau }$ given above are
identical to those given in Corollary \ref{Corollary_UQTE}.

The decomposition in Equation (\ref{Bias_decompose}) traces the asymptotic
bias back to two sources. The first one, $B_{1\tau }$, captures\ the
heterogeneity of the averaged apparent effects averaged over two different
subpopulations. For every $w,$ $\psi _{\Delta ,D}\left( w\right) $ is the
average effect of $D$ on $\left[ \tau -\mathds{1}\left\{ Y\leq y_{\tau
}\right\} \right] /f_{Y}\left( y_{\tau }\right) $ for the individuals with $%
W=w.$ These effects are averaged over two different distributions of $W$:
the distribution of $W$ for the marginal subpopulation (i.e., $\mathcal{\dot{%
P}}\left( w\right) f_{W}\left( w\right) )$ and the distribution of $W$ for
the whole population (i.e., $f_{W}\left( w\right) )$. $B_{1\tau }$ is equal
to the difference between these two average effects.\emph{\ }If the effect $%
\psi _{\Delta ,D}\left( w\right) $ does not depend on $w$, then $B_{1\tau
}=0 $. If $\mathcal{\dot{P}}\left( \cdot \right) $ is a constant function
that always equals 1, then the distribution of $W$ over the whole population
is the same as that over the marginal subpopulation, and hence $B_{1\tau }=0$
as well. For $B_{1\tau }\neq 0,$ it is necessary that there is an effect
heterogeneity (i.e., $\psi _{\Delta ,D}\left( w\right) $ depends on $w$) and
a distributional heterogeneity (i.e., $\mathcal{\dot{P}}\left( \cdot \right) 
$ does not always equal $1$, and as a result, the distribution of $W$ over
the marginal subpopulation is different from that over the whole
population). To highlight the necessary conditions for a nonzero $B_{1\tau
}, $ we refer to $B_{1\tau }$ as the \emph{marginal heterogeneity bias. }

The second bias component, $B_{2\tau },$ embodies the second source of the
bias and has a difference-in-differences interpretation.\ Each of $\psi
_{\Delta ,D}\left( \cdot \right) $ and $\psi _{\Delta ,U_{D}}\left( \cdot
\right) $ is the difference in the average influence functions associated
with the counterfactual outcomes $Y\left( 1\right) $ and $Y\left( 0\right) .$
However, $\psi _{\Delta ,D}\left( \cdot \right) $ is the difference over the
two subpopulations who actually choose $D=1$ and $D=0$, while $\psi _{\Delta
,U_{D}}\left( \cdot \right) $ is the difference over the marginal
subpopulation. So $\psi _{\Delta ,D}\left( \cdot \right) -\psi _{\Delta
,U_{D}}\left( \cdot \right) $ is a difference in differences. $B_{2\tau }$
is simply the average of this difference in differences with respect to the
distribution of $W$ over the marginal subpopulation. This term arises
because the change in the distributions of $Y$ for those with $D=1$ and
those with $D=0$ is different from that for those whose $U_{D}$ is just
above $P\left( w\right) $ and those whose $U_{D}$ is just below $P\left(
w\right) $. Thus, we can label $B_{2\tau }$ as a \emph{marginal selection
bias}.

If $\psi _{\Delta ,D}\left( w\right) =\psi _{\Delta ,U_{D}}\left( w\right) $
for almost all $w\in \mathcal{W},$ then $B_{2\tau }=0.$ The condition $\psi
_{\Delta ,D}\left( w\right) =\psi _{\Delta ,U_{D}}\left( w\right) $ is the
same as 
\begin{equation*}
E_{w}\left[ \psi _{\tau }\left( Y\left( 1\right) \right) -\psi _{\tau
}\left( Y\left( 0\right) \right) |U_{D}=P\left( w\right) \right] =E_{w}\left[
\psi _{\tau }\left( Y\left( 1\right) \right) |D=1\right] -E_{w}\left[ \psi
_{\tau }\left( Y\left( 0\right) \right) |D=0\right] .
\end{equation*}%
Equivalently, 
\begin{eqnarray*}
&&E_{w}\left[ \psi _{\tau }\left( Y\left( 1\right) \right) |U_{D}=P\left(
w\right) \right] -E_{w}\left[ \psi _{\tau }\left( Y\left( 1\right) \right)
|D=1\right] \\
&=&E_{w}\left[ \psi _{\tau }\left( Y\left( 0\right) \right) |U_{D}=P\left(
w\right) \right] -E_{w}\left[ \psi _{\tau }\left( Y\left( 0\right) \right)
|D=0\right] .
\end{eqnarray*}%
The condition resembles the parallel-paths assumption or the constant-bias
assumption in a difference-in-differences analysis. If $U_{D}$ is
independent of $\left( U_{0},U_{1}\right) $ given $W,$ then this condition
holds, and $B_{2\tau }=0.$

In general, when $U_{D}$ is not independent of $\left( U_{0},U_{1}\right) $
given $W$, and $W$ enters the selection equation, we have $B_{1\tau }\neq 0$
and $B_{2\tau }\neq 0,$ hence $\Pi _{\tau }\neq A_{\tau }.$ If $\mathcal{%
\dot{P}}\left( w\right) $ is not identified, then $B_{1\tau }$ is not
identified. In general, $B_{2\tau }$ is not identified without additional
assumptions. Therefore, in the absence of additional assumptions, the
asymptotic bias can not be eliminated, and $\Pi _{\tau }$ is not identified.

It is not surprising that in the presence of endogeneity, the UQR estimator
is asymptotically biased. The virtue of Corollary \ref{Corollary_UQTE} is
that it provides a closed-form characterization and clear interpretations of
the asymptotic bias. To the best of our knowledge, this bias formula is new
in the literature. If point identification can not be achieved, then the
bias formula can be used in a bound analysis or sensitivity analysis. From a
broad perspective, the asymptotic bias $B_{\tau }$ is the unconditional
quantile counterpart of the endogenous bias of the OLS estimator in a linear
regression framework.\footnote{%
The bias decomposition is not unique. Corollary \ref{Corollary_UQTE} gives
only one possibility. We can also write 
\begin{equation}
B_{\tau }=\text{ }\underset{\tilde{B}_{1\tau }}{\underbrace{E\left\{ \psi
_{\Delta ,U_{D}}\left( W\right) \left[ 1-\dot{P}\left( W\right) \right]
\right\} }}+\underset{\tilde{B}_{2\tau }}{\underbrace{E\left[ \psi _{\Delta
,D}\left( W\right) -\psi _{\Delta ,U_{D}}\left( W\right) \right] }}.  \notag
\end{equation}%
The interpretations of $\tilde{B}_{1\tau }$ and $\tilde{B}_{2\tau }$ are
similar to those of $B_{1\tau }$ and $B_{2\tau }$ with obvious and minor
modifications. In this case, it may be more revealing to call $\tilde{B}%
_{1\tau }$ the marginal heterogeneity bias, because a necessary condition
for a nonzero $\tilde{B}_{1\tau }$ is that there is an effect heterogeneity
among the marginal subpopulation.}

\section{Unconditional Quantile Effect under Instrumental Intervention\label%
{sect:IV}}

In this section, we consider two types of interventions: one on a valid
instrumental variable and the other on an invalid instrumental variable. We
show how we may recover the effect of the latter by using a valid instrument.

\subsection{Instrumental Intervention with a valid instrument}

To introduce the instrumental intervention, we partition $Z$ into two parts
and write $Z=\left( Z_{1},Z_{-1}\right) $\textbf{\ }where $Z_{1}$\ is a
univariate continuous policy variable and $Z_{-1}$ consists of other
variables. If there is only one variable in $Z,$ then $Z=Z_{1},$ and $Z_{-1}$
is not present. We consider the intervention 
\begin{equation}
\mathcal{G}\left( W,\delta \right) =\left( Z_{1}+g\left( W\right) s\left(
\delta \right) ,Z_{2},\cdots ,Z_{d_{Z}}\right)  \label{loc_inst}
\end{equation}%
where $g\left( \cdot \right) $ is a measurable function and $s\left( \delta
\right) $ is a smooth function satisfying $s\left( 0\right) =0$. That is, we
intervene to change the first component of $Z.$ Note that while $s(\delta )$
is the same for all individuals, $g\left( W\right) $ depends on the value of 
$W,$ and hence it is individual-specific. Thus, we allow the intervention to
be heterogeneous. In empirical applications, $Z$ may consist of a few
variables, and $Z_{1}$ is the \emph{continuous} target variable that we
attempt to change. There are no continuity requirements for the other
elements of $Z.$

To present our identification assumptions, we write $W=(W_{1},W_{-1})$ where 
$W_{1}$ is the first element of $W$ and $W_{-1}$ consists of other elements
of $W.$ Since $W=(Z,X),$ the first element of $W$ is also the first element
of $Z$, and hence $W_{1}=Z_{1}$. By definition, we have $W_{-1}=\left(
Z_{-1},X\right) .$ We maintain the following assumptions, which are similar
to the corresponding assumptions in \cite{Heckman1999, Heckman2001,
Heckman2005}.

\begin{assumption}
\textbf{Relevance and Exogeneity} \label{Assumption_heckman}

\begin{enumerate}[(a)]%

\item \label{relevance}Conditional on $W_{-1}=\left( Z_{-1},X\right) ,$ $\mu
(Z,X)$ is a non-degenerate random variable.

\item \label{exogeneity}Conditional on $W_{-1}=\left( Z_{-1},X\right) ,$ $%
Z_{1}$ is independent of $(U_{0},U_{1},V).$

\end{enumerate}%
\end{assumption}

Assumption \ref{Assumption_heckman}(\ref{relevance}) is a relevance
assumption: for any given level of $W_{-1}$, $Z_{1}$ can induce some
variation in $D$. Assumption \ref{Assumption_heckman}(\ref{exogeneity}) is a
conditional exogeneity assumption: for any given level of $W_{-1}$, $Z_{1}$
is independent of the unobservables. These two assumptions are essentially
the conditions for $Z_{1}$ to be a valid instrumental variable, hence we
will refer to $Z_{1}$ as the instrumental variable, and the intervention in %
\eqref{loc_inst} as the instrumental intervention.\footnote{%
If all variables in $Z$ satisfy the conditional exogeneity assumption, we
may replace Assumption \ref{Assumption_heckman} by the following: \textit{%
(a) Conditional on }$X,$ $\mu \left( Z,X\right) $\textit{\ is a
non-degenerate random variable,} and \textit{(b) Conditional on }$X,$ $Z$%
\textit{\ is independent of }$(U_{0},U_{1},V)$\textit{.} With minor
modifications, our results will remain valid. More specifically, in the
statement of each result, we only need to replace \textquotedblleft
conditioning on $W_{-1}$\textquotedblright\ by \textquotedblleft
conditioning on $X$\textquotedblright . The working paper \cite{sun2021} is
based on this alternative assumption. Here we will work with Assumption \ref%
{Assumption_heckman}, which appears to be more plausible.}

Let $w:=\left( w_{1},w_{-1}\right) =(z_{1},w_{-1}).$ Under Assumption \ref%
{Assumption_heckman}(\ref{exogeneity}), the unconditional MTE for the
quantile functional $\rho _{\tau }$ becomes 
\begin{eqnarray}
\mathrm{MTE}_{\tau }\left( u,w\right) &=&\frac{1}{f_{Y}\left( y_{\tau
}\right) }E\left[ \mathds{1}\left\{ Y(0)\leq y_{\tau }\right\} -\mathds{1}%
\left\{ Y(1)\leq y_{\tau }\right\} \mid U_{D}=u,W=w\right]  \notag \\
&=&\frac{1}{f_{Y}\left( y_{\tau }\right) }E\left[ \mathds{1}\left\{ Y(0)\leq
y_{\tau }\right\} -\mathds{1}\left\{ Y(1)\leq y_{\tau }\right\} \mid
U_{D}=u,W_{-1}=w_{-1}\right]  \notag \\
&:=&\widetilde{\mathrm{MTE}}_{\tau }\left( u,w_{-1}\right) ,
\label{MTE_tilde}
\end{eqnarray}%
where in the second line above, conditioning on $Z_{1}$ is not necessary and
has been dropped.

Next, we present a lemma that characterizes the weighting function for the
instrumental intervention in \eqref{loc_inst}.

\begin{lemma}
\label{Lemma_SZ}Assume that (i) for almost all $w\in \mathcal{W}$, the
conditional distribution of$\ V$ conditional on $W=w$ is absolutely
continuous conditional density $f_{V|W}\left( v|w\right) $;\ (ii) $\mu
\left( w\right) $\ is differentiable in $z_{1}$\ for almost all $w\in 
\mathcal{W}$ with derivative $\mu _{z_{1}}^{\prime }\left( \cdot \right) $
such that $E\left[ f_{V|W}\left( \mu (W)|W\right) \mu _{z_{1}}^{\prime
}\left( W\right) g\left( W\right) \right] $ is well defined and is not equal
to zero; (iii) $s\left( \delta \right) $\ is a differentiable function in a
neighborhood of zero and $\left. \partial s\left( \delta \right) /\partial
\delta \right\vert _{\delta =0}\neq 0$. Then%
\begin{equation}
\mathcal{\dot{P}}\left( w\right) =\frac{f_{V|W}\left( \mu (w)|w\right) \mu
_{z_{1}}^{\prime }\left( w\right) g(w)}{E\left[ f_{V|W}\left( \mu \left(
W\right) |W\right) \mu _{z_{1}}^{\prime }\left( W\right) g(W)\right] }.
\label{P_dot_w}
\end{equation}
\end{lemma}

The lemma shows that the weighting function $\mathcal{\dot{P}}\left(
w\right) $ does not depend on the function form of $s\left( \cdot \right) .$
Hence, the unconditional policy effect does not depend on $s\left( \cdot
\right) .$

\begin{corollary}
\label{Corollary_UQTE_w} Let Assumptions \ref{Assumption_primary}--\ref%
{Assumption_domination} and \ref{Assumption_heckman}, and the assumptions of
Lemma \ref{Lemma_SZ} hold. Assume further that $f_{Y}(y_{\tau })>0$. Then,
the unconditional quantile effect of the instrumental intervention given in %
\eqref{loc_inst} is 
\begin{equation}
\Pi _{\tau }=\int_{\mathcal{W}}\widetilde{\mathrm{MTE}}_{\tau }\left(
P(w),w_{-1}\right) \mathcal{\dot{P}}\left( w\right) dF_{W}\left( w\right). 
\notag
\end{equation}%
%
%
%
%
%
%
%
%
%
\end{corollary}

The proposition below establishes the identifiability of $\widetilde{\mathrm{%
MTE}}_{\tau }$ and of the weighting function $\mathcal{\dot{P}}\left(
w\right) $ given in Corollary \ref{Corollary_UQTE_w}.

\begin{proposition}
\label{Prop_MTE_identification}Let Assumptions \ref{Assumption_regularity}(%
\ref{regularity_x_abs}), \ref{Assumption_regularity}(\ref{f_y_u_x}), and \ref%
{Assumption_heckman}(\ref{exogeneity}), and the assumptions in Lemma \ref%
{Lemma_SZ} hold. Then, for every $u=P(w)$ for some $w=(w_{1},w_{-1})\in 
\mathcal{W}$, we have\footnote{%
For a more general functional $\rho ,$ we can show that 
\begin{equation*}
\widetilde{\mathrm{MTE}}_{\rho }(u,w_{-1})=\frac{\partial E\left[ \psi
(Y,\rho ,F_{Y})|P(W)=u,W_{-1}=w_{-1}\right] }{\partial u}
\end{equation*}%
for every $u$ equal to $P(w)$ for some $w=(w_{1},w_{-1})\in \mathcal{W}$.} 
\begin{equation*}
\widetilde{\mathrm{MTE}}_{\tau }\left( u,w_{-1}\right) =-\frac{1}{%
f_{Y}\left( y_{\tau }\right) }\frac{\partial E\left[ \mathds{1}\left\{ Y\leq
y_{\tau }\right\} |P(W)=u,W_{-1}=w_{-1}\right] }{\partial u},
\end{equation*}
and 
\begin{equation}
\mathcal{\dot{P}}\left( w\right) =\frac{\frac{\partial P(w)}{\partial z_{1}}%
g(w)}{E\left[ \frac{\partial P(W)}{\partial z_{1}}g\left( W\right) \right] }
\label{id_prop_weight}
\end{equation}%
where $\frac{\partial P(W)}{\partial z_{1}}$ is short for $\left. \frac{%
\partial P(w)}{\partial z_{1}}\right\vert _{w=W}.$
\end{proposition}

Using Proposition \ref{Prop_MTE_identification}, and the fact that $g(w)$ is
known, we can represent $\Pi _{\tau }$ as%
\begin{equation}
\Pi _{\tau }=-\frac{1}{f_{Y}(y_{\tau })}\int_{\mathcal{W}}\frac{\partial E%
\left[ \mathds{1}\left\{ Y\leq y_{\tau }\right\} |P(W)=P(w),W_{-1}=w_{-1}%
\right] }{\partial P(w)}\frac{\frac{\partial P(w)}{\partial z_{1}}g(w)}{E%
\left[ \frac{\partial P(W)}{\partial z_{1}}g\left( W\right) \right] }%
dF_{W}\left( w\right) .  \label{UQE_representation}
\end{equation}%
All objects in the above are point identified, hence $\Pi _{\tau }$ is point
identified.\footnote{%
We note that Assumption \ref{Assumption_heckman}(\ref{exogeneity}) plays a
key role in identifying $\mathcal{\dot{P}}\left( w\right) .$ Without the
assumption that $V$ is independent of $Z_{1}$ conditional on $\left(
Z_{-1},X\right) ,$ we can have only that%
\begin{equation*}
\frac{\partial P(w)}{\partial z_{1}}=f_{V|Z,X}(\mu (w)|w)\mu
_{z_{1}}^{\prime }\left( w\right) +\frac{\partial F_{V|Z,X}(\mu (w)|\tilde{z}%
,x)}{\partial \tilde{z}_{1}}\bigg |_{\tilde{z}=z}.
\end{equation*}%
The presence of the second term in the above equation invalidates the
identification result in (\ref{id_prop_weight}).}

\subsection{Instrumental Intervention with an invalid instrument}

While identifying the unconditional effect of an instrumental intervention
is of interest in its own right, the identification result can be further
leveraged to identify the unconditional policy effect of another
intervention. Theorem \ref{Theorem general} has shown that the unconditional
effects of two interventions will be the same if their weighting functions
coincide. Consider a counterfactual policy with a target weighting function $%
\mathcal{\dot{P}}^{\circ }\left( \cdot \right) .$ By strategically choosing
the instrument function $g\left( \cdot \right) $, we can ensure that the
weighting function under the policy intervention in \eqref{loc_inst} is the
same as $\mathcal{\dot{P}}^{\circ }\left( \cdot \right) .$ In other words,
with appropriate choices of $g\left( \cdot \right) ,$ the unconditional
effect of intervening $Z_{1}$ is the same as the unconditional effect of
another counterfactual policy. If the former is identified, then the latter
is also identified.

As an example, suppose we intervene on the second element $Z_{2}$ of $Z$
with 
\begin{equation}
\mathcal{G}^{\circ }\left( W,\delta \right) =\left( Z_{1},Z_{2}+g^{\circ
}\left( W\right) s^{\circ }\left( \delta \right) ,\cdots ,Z_{d_{Z}}\right)
\label{loc_intervention}
\end{equation}%
for some $g^{\circ }\left( \cdot \right) $ and $s^{\circ }\left( \delta
\right) .$ Under conditions similar to those in Lemma \ref{Lemma_SZ}, the
weighting function for this intervention is 
\begin{equation*}
\mathcal{\dot{P}}^{\circ }\left( w\right) =\frac{f_{V|W}\left( \mu
(w)|w\right) \mu _{z_{2}}^{\prime }\left( w\right) g^{\circ }(w)}{E\left[
f_{V|W}\left( \mu \left( W\right) |W\right) \mu _{z_{2}}^{\prime }\left(
W\right) g^{\circ }(W)\right] }.
\end{equation*}%
Letting $\mathcal{\dot{P}}^{\circ }\left( w\right) =\mathcal{\dot{P}}\left(
w\right) $ for $\mathcal{\dot{P}}\left( w\right) $ given in Lemma \ref%
{Lemma_SZ} and solving for $g(\cdot )$ yields%
\begin{equation*}
g(w)=\frac{\mu _{z_{2}}^{\prime }\left( w\right) }{\mu _{z_{1}}^{\prime
}\left( w\right) }g^{\circ }(w).
\end{equation*}%
So, the unconditional effect of the intervention given in (\ref%
{loc_intervention}) is the same as that of the intervention given in %
\eqref{loc_inst} when $g(w)$ is chosen appropriately. It is important to
point out that $Z_{2}$ may not be a valid instrument, and its unconditional
effect is identified via \textquotedblleft intervention
matching.\textquotedblright

In general, identifying an unconditional effect of an invalid instrument via
intervention matching\ is feasible only if we have prior knowledge of the
ratio $\mu _{z_{2}}^{\prime }\left( w\right) /\mu _{z_{1}}^{\prime }\left(
w\right) .$ This information may be available from economic theory. When
such knowledge is unavailable, an alternative approach can be employed.

The alternative approach hinges on the assumption that $\left(
Z_{1},Z_{2}\right) $ is independent of $V$ conditional on the remaining
elements in $W$, denoted as $W_{-(1,2)}$. In this case, for $%
W=(Z_{1},Z_{2},W_{-(1,2)})$ and $w=(z_{1},z_{2},w_{-\left( 1,2\right) }),$
we have 
\begin{eqnarray*}
P(w) &=&\Pr \left[ V\leq \mu \left( z_{1},z_{2},w_{-\left( 12\right)
}\right) |W=w\right] \\
&=&\Pr \left[ V\leq \mu \left( z_{1},z_{2},w_{-\left( 12\right) }\right)
|W_{-(1,2)}=w_{-\left( 1,2\right) }\right] \\
&=&F_{V|W_{-\left( 1,2\right) }}\left( \mu \left( z_{1},z_{2},w_{-\left(
12\right) }\right) |w_{-\left( 1,2\right) }\right) ,
\end{eqnarray*}%
and 
\begin{equation*}
\left( \frac{\partial P(w)}{\partial z_{2}}\right) /\left( \frac{\partial
P(w)}{\partial z_{1}}\right) =\frac{f_{V|W_{-\left( 1,2\right) }}\left( \mu
\left( z_{1},z_{2},w_{-12}\right) |w_{-\left( 1,2\right) }\right) \mu
_{z_{2}}^{\prime }\left( w\right) }{f_{V|W_{-\left( 1,2\right) }}\left( \mu
\left( z_{1},z_{2},w_{-12}\right) |w_{-\left( 1,2\right) }\right) \mu
_{z_{1}}^{\prime }\left( w\right) }=\frac{\mu _{z_{2}}^{\prime }\left(
w\right) }{\mu _{z_{1}}^{\prime }\left( w\right) }.
\end{equation*}%
Thus, the ratio $\mu _{z_{2}}^{\prime }\left( w\right) /\mu _{z_{1}}^{\prime
}\left( w\right) $ can be identified via the ratio of two partial
derivatives of the propensity score function.

It is important to note that the conditional independence of $Z_{2}$ from $V$
given $W_{-(1,2)}$ does not rule out the possibility that $Z_{2}$ may still
be dependent on $(U_{0},U_{1})$, and consequently, $Z_{2}$ could still be an
invalid instrument. This shows that intervention matching may be used to
identify the effect of intervening on an invalid instrument.

\section{Unconditional Instrumental Quantile Estimation}

\label{estimation}

This section is devoted to the estimation and inference of the UQE under the
instrumental intervention in (\ref{loc_inst}). We assume that the propensity
score function is parametric, and we leave the case with a nonparametric
propensity score to Section \ref{non_parametric_ps_appendix} of the
supplementary appendix. In order to simplify the notation, we set $g(\cdot
)\equiv 1$ for the remainder of this paper.\footnote{%
The presence of $g\left( \cdot \right) $ amounts to a change of measure:
from a measure with density $f_{W}(w)$ to a measure with density $%
g(w)f_{W}\left( w\right) $. When $g\left( \cdot \right) $ is not equal to a
constant function, we only need to change the population expectation
operator $E\left[ h(W)\right] $ that involves the distribution of $W$ into $E%
\left[ h(W)g(W)\right] $ and the empirical average operator $\mathbb{P}_{n}%
\left[ h\left( W\right) \right] $ into $\mathbb{P}_{n}\left[ h(W)g\left(
W\right) \right] .$ All of our results will remain valid.}

Letting 
\begin{equation*}
m_{0}(y_{\tau },P(w),w_{-1}):=E\left[ \mathds{1}\left\{ Y\leq y_{\tau
}\right\} |P(W)=P(w),W_{-1}=w_{-1}\right]
\end{equation*}%
and using (\ref{UQE_representation}), we have 
\begin{equation}
\Pi _{\tau }=-\frac{1}{f_{Y}(y_{\tau })}E\left[ \frac{\partial P(W)}{%
\partial Z_{1}}\right] ^{-1}E\left[ \frac{\partial m_{0}(y_{\tau
},P(W),W_{-1})}{\partial Z_{1}}\right] .  \label{equ_pi_tau}
\end{equation}%
$\Pi _{\tau }$ consists of two average derivatives and a density evaluated
at a point, some of which depend on the unconditional $\tau $-quantile $%
y_{\tau }$. Altogether $\Pi _{\tau }$ depends on four unknown quantities.

The method of unconditional instrumental quantile estimation involves first
estimating the four quantities separately and then plugging these estimates
into $\Pi _{\tau }$ to obtain the estimator $\hat{\Pi}_{\tau }.$ See (\ref%
{UNIQUE_Expression}) in Subsection \ref{parap_uqr} for the formula of $\hat{%
\Pi}_{\tau }.$

We consider estimating the four quantities in the next few subsections. For
a given sample $\left\{ O_{i}=(Y_{i},Z_{i},X_{i},D_{i})\right\} _{i=1}^{n}$,
we will use $\mathbb{P}_{n}$ to denote the empirical measure. The
expectation of a function $\chi \left( O\right) $ with respect to $\mathbb{P}%
_{n}$ is then $\mathbb{P}_{n}\chi =n^{-1}\sum_{i=1}^{n}\chi (O_{i})$.
Similarly, we use $\mathbb{P}$ to denote the population measure, and so $%
\mathbb{P}\chi \left( O\right) =E\chi \left( O\right) .$

\subsection{Estimating the Quantile and Density}

\label{two_step_den_section}

For a given $\tau $, we estimate $y_{\tau }$ using the (generalized) inverse
of the empirical distribution function of $Y$: $\hat{y}_{\tau }=\inf \left\{
y:\mathbb{F}_{n}(y)\geq \tau \right\} $, where 
\begin{equation*}
\mathbb{F}_{n}(y):=\frac{1}{n}\sum_{i=1}^{n}\mathds{1}\left\{ Y_{i}\leq
y\right\} .
\end{equation*}%
By Lemma \ref{quantile_an} in the appendix, we can write $\hat{y}_{\tau
}-y_{\tau }=\mathbb{P}_{n}\psi _{Q}(Y,y_{\tau })+o_{p}(n^{-1/2})$ 
where 
\begin{equation*}
\psi _{Q}(Y,y_{\tau }):=\frac{\tau -\mathds{1}\left\{ Y\leq y_{\tau
}\right\} }{f_{Y}(y_{\tau })}.
\end{equation*}%
Here the subscript \textquotedblleft $Q$\textquotedblright\ on $\psi _{Q}$
signifies that it is the influence function for a \textbf{Q}uantile
functional.

We use a kernel density estimator to estimate $f_{Y}(y)$. We maintain the
following assumptions on the kernel function and the bandwidth.

\begin{assumption}
\textbf{Kernel Assumption} \label{Assumption_Kernel}

\item The kernel function $K(\cdot )$ satisfies (i) $\int_{-\infty }^{\infty
}K(u)du=1$, (ii) $\int_{-\infty }^{\infty }u^{2}K(u)du<\infty $, and (iii) $%
K(u)=K(-u)$, and it is twice differentiable with Lipschitz continuous
second-order derivative $K^{\prime \prime }\left( u\right) $ satisfying (i) $%
\int_{-\infty }^{\infty }K^{\prime \prime }(u)udu<\infty $ and $\left(
ii\right) $ there exist positive constants $C_{1}$ and $C_{2}$ such that $%
\left\vert K^{\prime \prime }\left( u_{1}\right) -K^{\prime \prime }\left(
u_{2}\right) \right\vert \leq C_{2}\left\vert u_{1}-u_{2}\right\vert ^{2}$
for $\left\vert u_{1}-u_{2}\right\vert \geq C_{1}.$
\end{assumption}

\begin{assumption}
\textbf{Rate Assumption }\label{Assumption_rate}: $n\uparrow \infty $ and $%
h\downarrow 0$ such that $nh^{3}\uparrow \infty $ but $nh^{5}=O(1)$.
\end{assumption}

The non-standard condition $nh^{3}\uparrow \infty $ is due to the estimation
of $y_{\tau }$. Since we need to expand $\hat{f}_{Y}(\hat{y}_{\tau })-\hat{f}%
_{Y}(y_{\tau })$, which involves the derivative of $\hat{f}_{Y}(y),$ we have
to impose a slower rate of decay for $h$ to control the remainder. The
details can be found in the proof of Lemma \ref{two_step_density}. We note,
however, that $nh^{3}\uparrow \infty $ implies the usual rate condition $%
nh\uparrow \infty $.

The estimator of $f_{Y}(y)$ is then given by 
\begin{equation*}
\hat{f}_{Y}(y)=\frac{1}{n}\sum_{i=1}^{n}K_{h}\left( Y_{i}-y\right) ,\text{ }
\end{equation*}%
where $K_{h}\left( u\right) :=K(u/h)/h.$

By Lemma \ref{two_step_density} in the appendix, we can isolate the
estimation errors from estimating the density and quantile and write: 
\begin{align}
\hat{f}_{Y}(\hat{y}_{\tau })-{f}_{Y}(y_{\tau })& =\hat{f}_{Y}({y}_{\tau })-{f%
}_{Y}(y_{\tau })+f_{Y}(\hat{y}_{\tau })-{f}_{Y}(y_{\tau
})+o_{p}(n^{-1/2}h^{-1/2})  \notag \\
& =\mathbb{P}_{n}\psi _{f_{Y}}\left( Y,y_{\tau }\right) +B_{f_{Y}}(y_{\tau
})+f_{Y}^{\prime }(y_{\tau })\mathbb{P}_{n}\psi _{Q}(Y,y_{\tau
})+o_{p}(n^{-1/2}h^{-1/2}).  \label{decom_f_y}
\end{align}%
Here, $\psi _{f_{Y}}\left( Y,y_{\tau }\right) :=K_{h}\left( Y-y_{\tau
}\right) -E\left[ K_{h}\left( Y-y_{\tau }\right) \right] $ and $%
B_{f_{Y}}(y_{\tau })$ is the bias term, which is $O(h^{2})$. The first two
terms on the right-hand side of (\ref{decom_f_y}) represent the dominating
terms in the error from estimating $f_{Y}$. The third term reflects the
error from estimating $y_{\tau }$.

\subsection{Estimating the Average Derivatives}

\label{param_ps_section}

To estimate the two average derivatives, we make a parametric assumption on
the propensity score, leaving the nonparametric specification to Section \ref%
{non_parametric_ps_appendix} of the supplementary appendix. 

\begin{assumption}
\label{Assumption_parametric_ps} The propensity score $P(Z,X,\alpha _{0})$
is known up to a finite-dimensional vector $\alpha _{0}\in \mathbb{R}%
^{d_{\alpha }}$.
\end{assumption}

Under Assumption \ref{Assumption_parametric_ps}, the UQE $\Pi _{\tau }$ can
be written as 
\begin{equation}
\Pi _{\tau }=-\frac{1}{f_{Y}(y_{\tau })}\cdot \frac{1}{T_{1}}\cdot T_{2},
\label{param_interest_3}
\end{equation}%
where 
\begin{equation*}
T_{1}=E\left[ \frac{\partial P(Z,X,\alpha _{0})}{\partial Z_{1}}\right] 
\text{ and }T_{2}=E\left[ \frac{\partial m_{0}(y_{\tau },P(Z,X,\alpha
_{0}),W_{-1})}{\partial Z_{1}}\right] \text{.}
\end{equation*}

First, we estimate $T_{1},$ which is the mean of the derivative of the
propensity score, by 
\begin{equation*}
T_{1n}(\hat{\alpha}):=\frac{1}{n}\sum_{i=1}^{n}\frac{\partial P(z,x,\hat{%
\alpha})}{\partial z_{1}}\bigg |_{\left( z,x\right) =\left(
Z_{i},X_{i}\right) }
\end{equation*}%
where $\hat{\alpha}$ is an estimator of $\alpha _{0}$ satisfying $\hat{\alpha%
}-\alpha _{0}=\mathbb{P}_{n}\psi _{\alpha _{0}}\left( D,W\right)
+o_{p}(n^{-1/2})$ for some measurable function $\psi _{\alpha _{0}}\left(
\cdot ,\cdot \right) .$ To save space, we slightly abuse notation and write 
\begin{equation*}
\frac{\partial P(Z,X,\alpha )}{\partial z_{1}}=\frac{\partial P(z,x,\alpha )%
}{\partial z_{1}}\bigg |_{\left( z,x\right) =\left( Z,X\right) }.
\end{equation*}%
%
%
%
%
%
%
%
%
%
%
%
%
%
%
%
%
%
%
%
%
%
We adopt this convention in the rest of the paper. Under Lemma \ref%
{ps_estimation_param} in the appendix, we have 
\begin{eqnarray}
T_{1n}(\hat{\alpha})-T_{1} &=&\mathbb{P}_{n}\psi _{\partial P}\left(
W\right) +\left\{ E\left[ \frac{\partial ^{2}P(Z,X,\alpha _{0})}{\partial
z_{1}\partial \alpha }\right] \right\} ^{\prime }\mathbb{P}_{n}\psi _{\alpha
_{0}}\left( D,W\right) +o_{p}(n^{-1/2}),  \notag \\
&&  \label{decom_ps_non}
\end{eqnarray}%
where 
\begin{equation*}
\psi _{{\partial P}}\left( W\right) :=\frac{\partial P(Z,X,\alpha _{0})}{%
\partial z_{1}}-E\left[ \frac{\partial P(Z,X,\alpha _{0})}{\partial z_{1}}%
\right] .
\end{equation*}%
Equation (\ref{decom_ps_non}) has a similar interpretation to equation (\ref%
{decom_f_y}). It consists of a term that ignores the estimation uncertainty
in $\hat{\alpha}$ but accounts for the variability of the sample mean, and
another term that accounts for the uncertainty in $\hat{\alpha}$ but ignores
the variability of the sample mean.

We estimate the second average derivative $T_{2}$ by 
\begin{equation}
T_{2n}(\hat{y}_{\tau },\hat{m},\hat{\alpha}):=\frac{1}{n}\sum_{i=1}^{n}\frac{%
\partial \hat{m}(\hat{y}_{\tau },P(Z_{i},X_{i},\hat{\alpha}),W_{-1,i})}{%
\partial z_{1}};  \label{t2n}
\end{equation}%
See (\ref{t2n_2}) for an explicit construction. We can regard $T_{2n}(\hat{y}%
_{\tau },\hat{m},\hat{\alpha})$ as a four-step estimator. The first step
estimates $y_{\tau }$, the second step estimates $\alpha _{0}$, the third
step estimates the conditional expectation $m_{0}(y,P(Z,X,\alpha
_{0}),W_{-1})$ using the generated regressor $P(Z,X,\hat{\alpha})$, and the
fourth step averages the derivative (with respect to $Z_{1})$ over the
generated regressor $P(Z,X,\hat{\alpha})$ and $W_{-1}$.

We use the series method to estimate $m_{0}$. To alleviate notation, define
the vector $\tilde{w}(\alpha ):=(P(z,x,\alpha ),w_{-1})^{\prime }\text{ and }%
\tilde{W}_{i}(\alpha ):=(P(Z_{i},X_{i},\alpha ),W_{-1,i})^{\prime }.$ We
write $\tilde{w}=\tilde{w}\left( \alpha _{0}\right) $ and $\tilde{W}_{i}=%
\tilde{W}_{i}(\alpha _{0})$ to suppress their dependence on the true
parameter value $\alpha _{0}.$ 
Both $\tilde{w}(\alpha )$ and $\tilde{W}_{i}(\alpha )$ are in $\mathbb{R}%
^{d_{W}}.$ Let $\phi ^{J}(\tilde{w}(\alpha ))=(\phi _{1J}(\tilde{w}(\alpha
)),\ldots ,\phi _{JJ}(\tilde{w}(\alpha )))^{\prime }$ 
be a vector of $J$ basis functions of $\tilde{w}(\alpha )$ with finite
second moments\footnote{%
If any variable in $W_{-1}$ is discrete with a small number of possible
values, we can exclude it in the basis functions. Instead, we can construct
the basis functions using only the remaining variables and apply the series
method to each subsample defined by the values of the discrete variable.}.
Here, each $\phi _{jJ}\left( \cdot \right) $ is a differentiable basis
function. Then, the series estimator of $m_{0}(y_{\tau },\tilde{w}(\alpha ))$
is $\hat{m}(\hat{y}_{\tau },\tilde{w}(\hat{\alpha}))=\phi ^{J}(\tilde{w}(%
\hat{\alpha}))^{\prime }\hat{b}(\hat{\alpha},\hat{y}_{\tau }),$ 
where $\hat{b}(\hat{\alpha},\hat{y}_{\tau })$ is: 
\begin{equation*}
\hat{b}(\hat{\alpha},\hat{y}_{\tau })=\left( \sum_{i=1}^{n}\phi ^{J}(\tilde{W%
}_{i}(\hat{\alpha}))\phi ^{J}(\tilde{W}_{i}(\hat{\alpha}))^{\prime }\right)
^{-1}\sum_{i=1}^{n}\phi ^{J}(\tilde{W}_{i}(\hat{\alpha}))\mathds{1}\left\{
Y_{i}\leq \hat{y}_{\tau }\right\} .
\end{equation*}%
%
%
%
%
%
%
%
%
%
%
%
%
%
%
%
%
%
%
%
%
%
%
%
%
%
%
%
%
%
%
%
%
%
%
%
%
%
%
%
%
%
%
%
%
%
%
%
%
%
%
%
%
%
%
%
%
%
%
%
%
%
%
%
%
%
%
%
%
%
%
%
%
%
%
%
%
%
%
The estimator of the average derivative $T_{2}$ is then 
\begin{equation}
T_{2n}(\hat{y}_{\tau },\hat{m},\hat{\alpha})=\frac{1}{n}\sum_{i=1}^{n}\frac{%
\partial \phi ^{J}(\tilde{W}_{i}(\hat{\alpha}))}{\partial z_{1}}^{\prime }%
\hat{b}(\hat{\alpha},\hat{y}_{\tau }).  \label{t2n_2}
\end{equation}

We use the path derivative approach of \cite{newey1994} to obtain a
decomposition of $T_{2n}(\hat{y}_{\tau },\hat{m},\hat{\alpha})-T_{2},$ which
is similar to that in Section 2.1 of \cite{hahn2013}. To describe the idea,
let $\left\{ F_{\theta }\right\} $ be a path of distributions indexed by $%
\theta \in \mathbb{R}$ such that $F_{\theta _{0}}$ is the true distribution
of $O:=(Y,Z,X,D)$. The parametric assumption on the propensity score does
not need to be imposed on the path.\footnote{%
As we show later, the error from estimating the propensity score does not
affect the asymptotic variance of $T_{2n}(\hat{y}_{\tau },\hat{m},\hat{\alpha%
}).$} The score of the parametric submodel is $S(O)=\frac{\partial \log
dF_{\theta }(O)}{\partial \theta }\big|_{\theta =\theta _{0}}.$ 
For any $\theta ,$ we define%
\begin{equation*}
T_{2,\theta }=E_{\theta }\left[ \frac{\partial m_{\theta }(y_{\tau ,\theta },%
\tilde{W}\left( \alpha _{\theta }\right) )}{\partial z_{1}}\right]
\end{equation*}%
where $m_{\theta },$ $y_{\tau ,\theta },$ and $\alpha _{\theta }$ are the
probability limits of $\hat{m},$ $\hat{y}_{\tau },$ and $\hat{\alpha},$
respectively, when the distribution of $O$ is $F_{\theta }$. Note that when $%
\theta =\theta _{0},$ we have $\alpha _{\theta _{0}}=\alpha _{0},m_{\theta
_{0}}=m_{0}$ and $T_{2,\theta _{0}}=T_{2}.$ Suppose the set of scores $%
\left\{ S(O)\right\} $ for all parametric submodels $\left\{ F_{\theta
}\right\} $ can approximate any zero-mean, finite-variance function of $O$
in the mean square sense.\footnote{%
This is the \textquotedblleft generality\textquotedblright\ requirement of
the family of distributions in \cite{newey1994}.} If the function $\theta
\rightarrow T_{2,\theta }$ is differentiable at $\theta _{0}$ and we can
write 
\begin{equation}
\frac{\partial T_{2,\theta }}{\partial \theta }\bigg|_{\theta =\theta _{0}}=E%
\left[ \Gamma (O)S(O)\right]  \label{path_derivative}
\end{equation}%
for some mean-zero and finite second-moment function $\Gamma (\cdot )$ and
any path $F_{\theta },$ then, by Theorem 2.1 of \cite{newey1994}, the
asymptotic variance of $T_{2n}(\hat{y}_{\tau },\hat{m},\hat{\alpha})$ is\ $%
E[\Gamma (O)^{2}]$.

In Lemma \ref{param_ave_estimation} in the appendix, we show that $\theta
\rightarrow T_{2,\theta }$ is differentiable at $\theta _{0}$. Then, by the
chain rule, we can write 
\begin{eqnarray*}
\left. \frac{\partial T_{2,\theta }}{\partial \theta }\right\vert _{\theta
=\theta _{0}} &=&\left. \frac{\partial }{\partial \theta }E_{\theta }\left[ 
\frac{\partial m_{\theta _{0}}(y_{\tau },\tilde{W}\left( \alpha _{0}\right) )%
}{\partial z_{1}}\right] \right\vert _{\theta =\theta _{0}}+\left. \frac{%
\partial }{\partial \theta }E\left[ \frac{\partial m_{\theta }(y_{\tau },%
\tilde{W}\left( \alpha _{0}\right) )}{\partial z_{1}}\right] \right\vert
_{\theta =\theta _{0}} \\
&+&\left. \frac{\partial }{\partial \theta }E\left[ \frac{\partial m_{\theta
_{0}}(y_{\tau ,\theta },\tilde{W}\left( \alpha _{0}\right) )}{\partial z_{1}}%
\right] \right\vert _{\theta =\theta _{0}}+\left. \frac{\partial }{\partial
\theta }E\left[ \frac{\partial m_{\theta _{0}}(y_{\tau },\tilde{W}\left(
\alpha _{\theta }\right) )}{\partial z_{1}}\right] \right\vert _{\theta
=\theta _{0}}.
\end{eqnarray*}%
To use Theorem 2.1 of \cite{newey1994}, we need to write all these terms in
an outer-product form, namely the form of the right-hand side of (\ref%
{path_derivative}). To search for the required function $\Gamma (\cdot )$,
we follow \cite{newey1994} and examine the components of $T_{2,\theta }$ one
at a time, treating the remaining components as known.

Lemma \ref{lemma_hahn_ridder} in the appendix shows that under some
conditions 
\begin{equation*}
\left. \frac{\partial }{\partial \theta }E\left[ \frac{\partial m_{\theta
_{0}}(y_{\tau },\tilde{W}\left( \alpha _{\theta }\right) )}{\partial z_{1}}%
\right] \right\vert _{\theta =\theta _{0}}=0,
\end{equation*}%
that is, we can ignore the error from estimating the propensity score in our
asymptotic analysis. Lemma \ref{param_ave_estimation} in the appendix
characterizes the influence function of $T_{2n}(\hat{y}_{\tau },\hat{m},\hat{%
\alpha})$ and establishes a stochastic approximation of $T_{2n}(\hat{y}%
_{\tau },\hat{m},\hat{\alpha})-T_{2}$ as follows: 
\begin{equation}
T_{2n}(\hat{y}_{\tau },\hat{m},\hat{\alpha})-T_{2}=\mathbb{P}_{n}\psi
_{\partial m_{0}}\left( W,y_{\tau }\right) +\mathbb{P}_{n}\psi
_{m_{0}}\left( Y,W,y_{\tau }\right) +\mathbb{P}_{n}\tilde{\psi}%
_{Q}(Y,y_{\tau })+o_{p}(n^{-1/2}),  \label{decom_derps_non0}
\end{equation}%
where%
\begin{align*}
\psi _{\partial m_{0}}\left( W,y_{\tau }\right) & :=\frac{\partial
m_{0}(y_{\tau },\tilde{W})}{\partial z_{1}}-T_{2}, \\
\psi _{m_{0}}\left( Y,W,y_{\tau }\right) & :=-\left[ \mathds{1}\left\{ Y\leq
y_{\tau }\right\} -m_{0}(y_{\tau },\tilde{W})\right] \times E\left[ \frac{%
\partial \log f_{W}(W)}{\partial z_{1}}\bigg|\tilde{W}\right] ,
\end{align*}%
and 
\begin{equation*}
\tilde{\psi}_{Q}(Y,y_{\tau }):=\left[ \frac{\tau -\mathds{1}\left\{ Y\leq
y_{\tau }\right\} }{f_{Y}(y_{\tau })}\right] \times E\left[ \frac{\partial
f_{Y|\tilde{W}}(y_{\tau }|\tilde{W})}{\partial z_{1}}\right] .
\end{equation*}

This characterizes the contribution of each stage to the influence function
of $T_{2n}(\hat{y}_{\tau },\hat{m},\hat{\alpha})$. The contribution from
estimating $m_{0}$, given by $\mathbb{P}_{n}\psi _{m_{0}}$, corresponds to
the one in Proposition 5 of \cite{newey1994} (p. 1362).

\subsection{Estimating the UQE}

\label{parap_uqr}

With $\hat{y}_{\tau },\hat{f}_{Y},\hat{m},\hat{\alpha}$ given in the
previous subsections, we estimate the UQE by 
\begin{equation}
\hat{\Pi}_{\tau }(\hat{y}_{\tau },\hat{f}_{Y},\hat{m},\hat{\alpha})=-\frac{1%
}{\hat{f}_{Y}(\hat{y}_{\tau })}\frac{T_{2n}(\hat{y}_{\tau },\hat{m},\hat{%
\alpha})}{T_{1n}(\hat{\alpha})}.  \label{UNIQUE_Expression}
\end{equation}%
This is our unconditional instrumental quantile estimator (UNIQUE). With the
asymptotic linear representations of all three components $\hat{f}_{Y}(\hat{y%
}_{\tau }),$ $T_{1n}(\hat{\alpha}),$ and $T_{2n}(\hat{y}_{\tau },\hat{m},%
\hat{\alpha}),$ we can obtain the asymptotic linear representation of $\hat{%
\Pi}_{\tau }(\hat{y}_{\tau },\hat{f}_{Y},\hat{m},\hat{\alpha}).$ The next
theorem follows from combining Lemmas \ref{two_step_density}, \ref%
{ps_estimation_param}, \ref{lemma_hahn_ridder}, and \ref%
{param_ave_estimation}.

\begin{theorem}
\label{uqr_if_param} Under the assumptions of Lemmas \ref{two_step_density}, %
\ref{ps_estimation_param}, \ref{lemma_hahn_ridder}, and \ref%
{param_ave_estimation}, we have 
\begin{eqnarray}
\hat{\Pi}_{\tau }-\Pi _{\tau } &=&\frac{T_{2}}{f_{Y}(y_{\tau })^{2}T_{1}}%
\left[ \mathbb{P}_{n}\psi _{f_{Y}}(Y,y_{\tau })+B_{f_{Y}}(y_{\tau })\right] +%
\frac{T_{2}}{f_{Y}(y_{\tau })^{2}T_{1}}f_{Y}^{\prime }(y_{\tau })\mathbb{P}%
_{n}\psi _{Q}(Y,y_{\tau })  \notag \\
&+&\frac{T_{2}}{f_{Y}(y_{\tau })T_{1}^{2}}\mathbb{P}_{n}\psi _{\partial
P}\left( W\right) +\frac{T_{2}}{f_{Y}(y_{\tau })T_{1}^{2}}E\left[ \frac{%
\partial ^{2}P(Z,X,\alpha _{0})}{\partial z_{1}\partial \alpha _{0}^{\prime }%
}\right] \mathbb{P}_{n}\psi _{\alpha _{0}}\left( D,W\right)
\label{est_param_pi_decomp_5} \\
&-&\frac{1}{f_{Y}(y_{\tau })T_{1}}\mathbb{P}_{n}\psi _{\partial m_{0}}\left(
W,y_{\tau }\right) -\frac{1}{f_{Y}(y_{\tau })T_{1}}\mathbb{P}_{n}\psi
_{m_{0}}\left( Y,W,y_{\tau }\right)  \notag \\
&&-\frac{1}{f_{Y}(y_{\tau })T_{1}}\mathbb{P}_{n}\tilde{\psi}_{Q}(Y,y_{\tau
})+R_{\Pi },  \notag
\end{eqnarray}%
where 
\begin{eqnarray*}
R_{\Pi } &=&O_{p}\left( |\hat{f}_{Y}(\hat{y}_{\tau })-f_{Y}(y_{\tau
})|^{2}\right) +O_{p}\left( n^{-1}\right) +O_{p}\left( n^{-1/2}|\hat{f}_{Y}(%
\hat{y}_{\tau })-f_{Y}(y_{\tau })|\right) \\
&+&o_{p}\left( n^{-1/2}h^{-1/2}\right) +o_{p}(h^{2}).
\end{eqnarray*}

Furthermore, under Assumption \ref{Assumption_rate}, $\sqrt {nh}
R_\Pi=o_p(1).$
\end{theorem}

Equation (\ref{est_param_pi_decomp_5}) consists of six influence functions
and a bias term. The bias term $B_{f_{Y}}(y_{\tau })$ arises from estimating
the density and is of order $O(h^{2})$. The six influence functions reflect
the impact of each estimation stage. The rate of convergence of $\hat{\Pi}%
_{\tau }$ is slowed down through $\mathbb{P}_{n}\psi _{f_{Y}}(Y)$, which is
of order $O_{p}(n^{-1/2}h^{-1/2})$. We can summarize the results of Theorem %
\ref{uqr_if_param} in a single equation: 
\begin{equation*}
\hat{\Pi}_{\tau }-\Pi _{\tau }=\mathbb{P}_{n}\psi _{\Pi _{\tau }}\left(
O\right) +\tilde{B}_{f_{Y}}(y_{\tau })+o_{p}(n^{-1/2}h^{-1/2}),
\end{equation*}%
where $\psi _{\Pi _{\tau }}$ collects all the influence functions in (\ref%
{est_param_pi_decomp_5}) except for the bias, and 
\begin{equation*}
\tilde{B}_{f_{Y}}(y_{\tau }):=\frac{T_{2}}{f_{Y}(y_{\tau })^{2}T_{1}}%
B_{f_{Y}}(y_{\tau }).
\end{equation*}%
If $nh^{5}\rightarrow 0,$ then the bias term is $o(n^{-1/2}h^{-1/2})$. The
following corollary provides the asymptotic distribution of $\hat{\Pi}_{\tau
}$.

\begin{corollary}
\label{corollary_param}Under the assumptions of Theorem \ref{uqr_if_param}
and the assumption that $nh^{5}\rightarrow 0,$ 
\begin{equation*}
\sqrt{nh}\left( \hat{\Pi}_{\tau }-\Pi _{\tau }\right) =\sqrt{n}\mathbb{P}_{n}%
\sqrt{h}\psi _{\Pi _{\tau }}\left( O\right) +o_{p}(1)\Rightarrow \mathcal{N}%
(0,V_{\tau }),
\end{equation*}%
where 
\begin{equation}
V_{\tau }=\lim_{h\downarrow 0}E\left\{ h\left[ \psi _{\Pi _{\tau }}\left(
O\right) \right] ^{2}\right\} .  \label{variance_param}
\end{equation}
\end{corollary}

From the perspective of asymptotic theory, all of the following terms are
all of order $O_{p}\left( h\right) =o_{p}\left( 1\right) $ and hence can be
ignored in large samples: $\sqrt{nh}\mathbb{P}_{n}\psi _{Q},$ $\sqrt{nh}%
\mathbb{P}_{n}\psi _{\partial P},$ $\sqrt{nh}\mathbb{P}_{n}\psi _{\alpha
_{0}},$ $\sqrt{nh}\mathbb{P}_{n}\psi _{\partial m_{0}},$ $\sqrt{nh}\mathbb{P}%
_{n}\psi _{m_{0}},$ and $\sqrt{nh}\mathbb{P}_{n}\tilde{\psi}_{Q}.$ The
asymptotic variance is then given by 
\begin{equation*}
V_{\tau }=\frac{T_{2}^{2}}{f_{Y}(y_{\tau })^{4}T_{1}^{2}}\lim_{h\downarrow
0}E\left[ h\psi _{f_{Y}}^{2}(y_{\tau })\right] =\frac{T_{2}^{2}}{%
f_{Y}(y_{\tau })^{3}T_{1}^{2}}\int_{-\infty }^{\infty }K^{2}\left( u\right)
du.
\end{equation*}%
However, $V_{\tau }$ ignores all estimation uncertainties except that in $%
\hat{f}_{Y}(y_{\tau })$, and we do not expect it to reflect the
finite-sample variability of $\sqrt{nh}(\hat{\Pi}_{\tau }-\Pi _{\tau })$
well. To improve the finite-sample performances, we keep the dominating term
from each source of estimation errors and employ a sample counterpart of $%
E[h\psi _{\Pi _{\tau }}^{2}]$ to estimate $V_{\tau }.$ The details can be
found in Section \ref{estimation_variance_appendix} of the supplementary
appendix.%

\subsection{Testing the Null of No Effect}

\label{param_hypothesis}

We can apply Corollary \ref{corollary_param} to conduct hypothesis testing
on $\Pi _{\tau }$. Since $\hat{\Pi}_{\tau }$ converges to $\Pi _{\tau }$ at
a nonparametric rate, in general, the test will have power only against a
local departure of a nonparametric rate. However, if we are interested in
testing the null of a zero effect, that is, $H_{0}:\Pi _{\tau }=0$ vs. $%
H_{1}:\Pi _{\tau }\neq 0,$ we can detect a parametric rate of departure from
the null. The reason is that, by (\ref{param_interest_3}), $\Pi _{\tau }=0$
if and only if $T_{2}=0,$ and $T_{2}$ can be estimated at the usual
parametric rate. Hence, instead of testing $H_{0}:\Pi _{\tau }=0$ vs. $%
H_{1}:\Pi _{\tau }\neq 0,$ we can test the equivalent hypotheses $%
H_{0}:T_{2}=0$ vs. $H_{1}:T_{2}\neq 0.$

Our test is based on the estimator $T_{2n}(\hat{y}_{\tau },\hat{m},\hat{%
\alpha})$ of $T_{2}$. In view of its influence function given in Lemma \ref%
{param_ave_estimation}, we can estimate the asymptotic variance of $T_{2n}(%
\hat{y}_{\tau },\hat{m},\hat{\alpha})$ by 
\begin{equation*}
\hat{V}_{2}=\frac{1}{n}\sum_{i=1}^{n}\left( \hat{\psi}_{\partial m,i}+\hat{%
\psi}_{m,i}+\hat{E}\left[ \frac{\partial f_{Y|\tilde{W}\left( \alpha
_{0}\right) }(\hat{y}_{\tau }|\tilde{W}\left( \hat{\alpha}\right) )}{%
\partial z_{1}}\right] \hat{\psi}_{Q,i}\right) ^{2}
\end{equation*}%
where $\hat{\psi}_{\partial m,i},$ $\hat{\psi}_{m,i}$ and $\hat{\psi}_{Q,i}$
are plug-in estimates of $\psi _{\partial m_{0}}\left( W_{i},y_{\tau
}\right) ,\psi _{m_{0}}\left( Y_{i},W_{i},y_{\tau }\right) ,$ and $\hat{\psi}%
_{Q}\left( Y_{i},y_{\tau }\right) ,$ respectively.

%
%
%
%
%
%
%
%
%
%
%
%
%
%
%
%
%
%
%
%
%
%
%
%
%
%
%
%
%
%
%
%
%
%
%
%
%
%
%
%
%
%
%
%
%
%
%
%
%
%
%
%
%
%
%
%
%
%
%
%
%
%
%
%
%
%
%
%
%
%
%
%
%
%
%
%
%
%
%
%
%
%
%
%
%
%
%
%
%
We can then form the test statistic: 
\begin{equation}
T_{2n}^{o}:=\sqrt{n}\frac{T_{2n}(\hat{y}_{\tau },\hat{m},\hat{\alpha})}{%
\sqrt{\hat{V}_{2}}}.  \label{test_statistic_t2}
\end{equation}%
By Lemma \ref{param_ave_estimation} and using standard arguments, we can
show that $T_{2n}^{o}\Rightarrow \mathcal{N}(0,1).$ To save space, we omit
the details here.

\section{Simulation Evidence}

\label{simulation}

In the simulation study, we consider the following structural equations: 
\begin{align*}
Y(0)& =X_{1}+X_{2}+U_{0}, \\
Y(1)& =\beta +X_{1}+X_{2}+U_{1}, \\
D& =\mathds{1}\left\{ (Z_{1}+Z_{2}+Z_{3}+X_{1}+X_{2})/\sqrt{5}>V\right\} .
\end{align*}%
Here, $(U_{0},U_{1},Z_{1},Z_{2},Z_{3},X_{1},X_{2})$ are jointly normal,
independent variables, with mean $0$ and unit variances, and 
\begin{equation*}
V=\frac{\varrho}{\sqrt{2}} U_{0}+\frac{\varrho}{\sqrt{2}} U_{1}+\sqrt{%
1-\varrho ^{2}}e,
\end{equation*}%
where $e$ is standard normal, independent of $%
(U_{0},U_{1},Z_{1},Z_{2},Z_{3},X_{1},X_{2})$.\footnote{%
Note that $V$ has also unit variance.} The parameter $\varrho $ governs the
endogeneity of $D$.

The target of interest is the UQE $\Pi _{\tau }$, as defined in (\ref%
{equ_pi_tau}), involving $y_{\tau },f_{Y},$ $T_{1},$ and $T_{2}$.\footnote{%
Detailed calculations of $\Pi _{\tau }$ are available from Appendix B of an
earlier working paper \cite{sun2021},} We estimate $\Pi _{\tau }$ using the
UNIQUE outlined in Section \ref{estimation}. More specifically, for
estimating the (population) quantile $y_{\tau },$ we utilize the sample
quantile. To estimate $f_{Y}$, we use a Gaussian kernel with bandwidth $%
h=1.06\times \hat{\sigma}_{Y}\times n^{-1/4}$, where $\hat{\sigma}_{Y}$ is
the sample standard deviation of $Y$. For estimating $T_{1}$, we employ a
probit model, and for $T_{2}$, we run a cubic series regression.

\subsection{Testing the Null Hypothesis of No Effect}

If we set $\beta =0$, $\Pi _{\tau }=0$ and so the null hypothesis of a zero
effect holds. The test statistic $T_{2n}^{o}$ is constructed following
equation \eqref{test_statistic_t2}. Because the test statistic does not
involve estimating the density (or $T_{1}$), the test has nontrivial power
again $1/\sqrt{n}$-departures (i.e., $\beta =c/\sqrt{n}$ for some $c\neq 0)$
from the null.

To simulate the power function of the nominal 5\% test, we consider a range
of 25 values of $\beta $ between $-1$ and $1$. The endogeneity, governed by
the parameter $\varrho $, takes five values: 0, 0.25, 0.5, 0.75, and 0.9. We
perform 1,000 simulations with 1,000 observations. For different values of $%
\tau $, the power functions are shown below in Figure \ref{power_1}.%
\ The test has the desired null rejection probability, except for the
extreme quantile $\tau =0.1$, where under high endogeneity, the rejection
probability does not increase fast enough. The power function for the
median, $\tau =0.5$, is not shown here as it is almost identical to that of $%
\tau =0.4$. Furthermore, simulation results not reported here show that the
power functions for $\tau =0.6,0.7,0.8,0.9$ are very similar to those of $%
\tau =0.4,0.3,0.2,0.1,$ respectively.

\begin{figure}[h]
\centering
\includegraphics[scale =0.7]{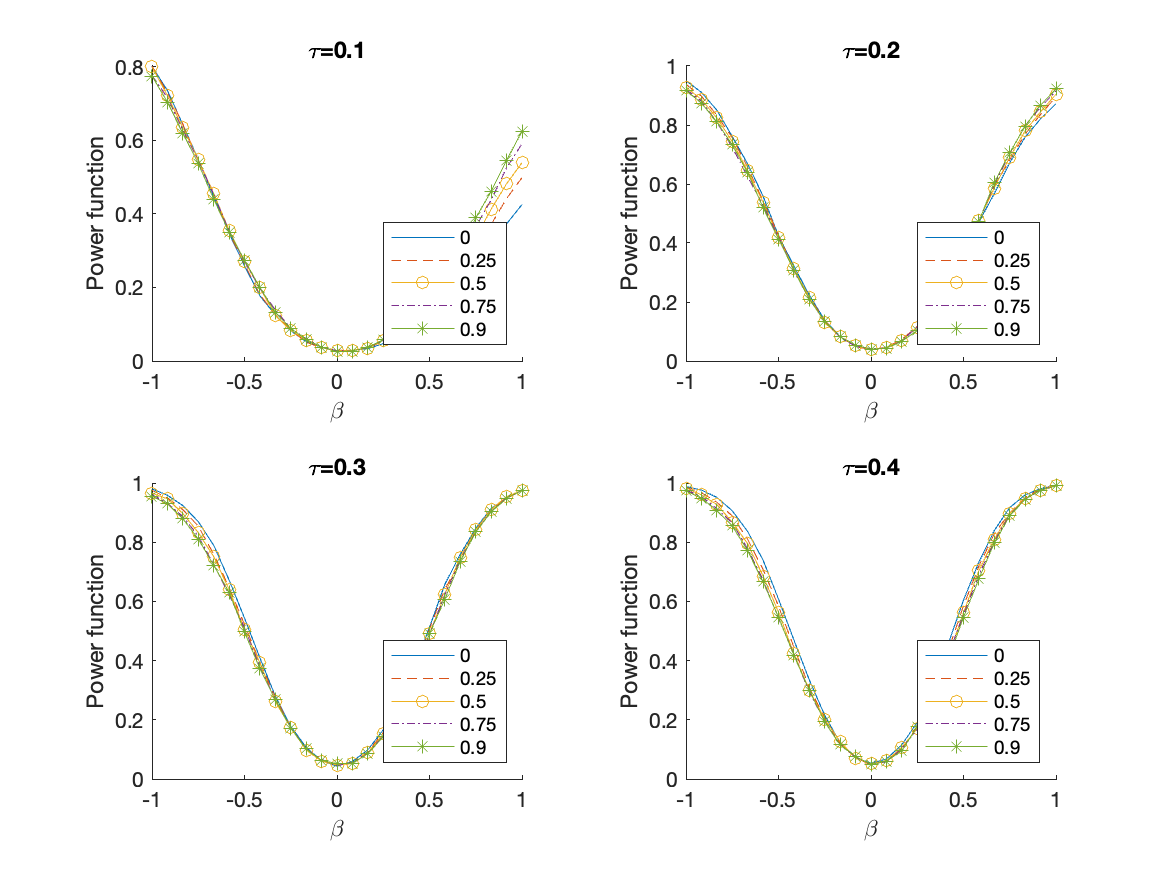}
\caption{ Power functions for different values of $\protect\varrho $ and $%
\protect\tau $.}
\label{power_1}
\end{figure}

\subsection{Empirical Coverage of Confidence Intervals}

In this subsection, we investigate the empirical coverage of confidence
intervals built using $\hat{V}_{\tau }$, the variance estimator given in %
\eqref{var_param_if}. Since $\sqrt{nh}\left( \hat{\Pi}_{\tau }-\Pi _{\tau
}\right) \approx \mathcal{N}(0,\hat{V}_{\tau }),$ a 95\% confidence interval
for $\Pi _{\tau }$ can be constructed using 
\begin{equation*}
\hat{\Pi}_{\tau }\pm 1.96\times \frac{\sqrt{\hat{V}_{\tau }}}{\sqrt{nh}}.
\end{equation*}

We use a grid of $\beta $ that takes values $-1,-0.5,-0.25,0,0.25,0.5,$ and $%
1$. For the endogeneity parameter, $\varrho $, we take the values $%
0,0.25,0.5,0.75,$ and $0.9$. Finally, $\tau $ takes the values from $0.1$ to 
$0.9$ with an increment of 0.1. We note that, for values of $\beta \neq 0$,
where the effect is not 0, we need to numerically compute the value of $\Pi
_{\tau }$. We perform 10,000 simulations with 1,000 observations each. The
results are reported in the tables below for $\tau =0.1$ and $\tau =0.5$. It
is clear that the confidence intervals have reasonable coverage accuracy in
almost all cases.

\begin{table}[h]
\centering%
\begin{tabular}{llllll}
$\beta \setminus \varrho $ & 0 & 0.25 & 0.5 & 0.75 & 0.9 \\ \hline\hline
-1 & 0.9701 & 0.9739 & 0.9717 & 0.9715 & 0.9740 \\ 
-0.5 & 0.9696 & 0.9706 & 0.9754 & 0.9691 & 0.9745 \\ 
-0.25 & 0.9732 & 0.9724 & 0.9753 & 0.9743 & 0.9731 \\ 
0 & 0.9754 & 0.9727 & 0.9746 & 0.9750 & 0.9782 \\ 
0.25 & 0.9793 & 0.9780 & 0.9758 & 0.9775 & 0.9768 \\ 
0.5 & 0.9807 & 0.9824 & 0.9800 & 0.9817 & 0.9772 \\ 
1 & 0.9808 & 0.9829 & 0.9821 & 0.9789 & 0.9808 \\ \hline
\end{tabular}%
\caption{Empirical coverage of 95\% confidence intervals for $\protect\tau =
0.1$.}
\end{table}

\bigskip

\bigskip

\begin{table}[h]
\centering%
\begin{tabular}{llllll}
$\beta \setminus \varrho $ & 0 & 0.25 & 0.5 & 0.75 & 0.9 \\ \hline\hline
-1 & 0.9679 & 0.9741 & 0.9737 & 0.9769 & 0.9748 \\ 
-0.5 & 0.9608 & 0.9622 & 0.9623 & 0.9705 & 0.9697 \\ 
-0.25 & 0.9587 & 0.9605 & 0.9630 & 0.9607 & 0.9602 \\ 
0 & 0.9616 & 0.9565 & 0.9569 & 0.9591 & 0.9524 \\ 
0.25 & 0.9579 & 0.9598 & 0.9594 & 0.9571 & 0.9579 \\ 
0.5 & 0.9640 & 0.9587 & 0.9551 & 0.9575 & 0.9569 \\ 
1 & 0.9621 & 0.9667 & 0.9666 & 0.9672 & 0.9668 \\ \hline
\end{tabular}%
\caption{Empirical coverage of 95\% confidence intervals for $\protect\tau =
0.5$.}
\end{table}


\section{Empirical Application}

\label{empirical}

We estimate the unconditional quantile effect of expanding college
enrollment on (log) wages. The outcome variable $Y$ is the log wage, and the
binary treatment is the college enrollment status. Thus, $p=\Pr [D=1]$ is
the proportion of individuals who ever enrolled in a college. Arguably, the
cost of tuition $(Z_{1})$, assumed to be continuous, is an important factor
that affects the college enrollment status but not the wage directly. In
order to alter the proportion of enrolled individuals, we consider a policy
that subsidizes tuition by a certain amount. The UQE is the effect of this
policy on the different quantiles of the unconditional distribution of wages
when the subsidy is small. 
This policy shifts $Z_{1}$, the tuition, to $Z_{1\delta }=Z_{1}+s(\delta )$
for some $s(\delta )$, which is the same for all individuals, and induces a
small change in college enrollment. Note that we do not need to specify $%
s(\delta )$ because we look at the limiting version as $\delta \rightarrow 0$%
. In practice, we may set $s(\delta )$ equal to a number that is relatively
small compared to the total tuition.

We use the same data as in \cite{Carneiro2010} and \cite{Carneiro2011}: a
sample of white males from the 1979 National Longitudinal Survey of Youth
(NLSY1979). The web appendix to \cite{Carneiro2011} contains a detailed
description of the variables. The outcome variable $Y$ is the log wage in
1991. The treatment indicator $D$ is equal to $1$ if the individual ever
enrolled in college by 1991, and $0$ otherwise. The other covariates are
AFQT score, mother's education, number of siblings, average log earnings
1979--2000 in the county of residence at age 17, average unemployment
1979--2000 in the state of residence at age 17, urban residence dummy at age
14, cohort dummies, years of experience in 1991, average local log earnings
in 1991, and local unemployment in 1991. We collect these variables into a
vector and denote it by $X$.

We assume that the following four variables (denoted by $%
Z_{1},Z_{2},Z_{3},Z_{4}$) enter the selection equation but not the outcome
equation: tuition at local public four-year colleges at age 17, presence of
a four-year college in the county of residence at age 14, local earnings at
age 17, and local unemployment at age 17. The total sample size is 1747, of
which 882 individuals had never enrolled in a college ($D=0$) by 1991, and
865 individuals had enrolled in a college by 1991 ($D=1)$. We estimate the
UQE of a marginal shift in the tuition at local public four-year colleges at
age 17 ($Z_{1})$ using the UNIQUE.

Here are some details of the UNIQUE. To estimate the propensity score, we
use a parametric logistic specification. To estimate the conditional
expectation function $m_{0}$,$\ $we run a series regression using the
estimated propensity score and the covariates $Z_{2},Z_{3},Z_{4},$ and $X$\
as the regressors. Due to the large number of variables involved, a
penalization of $\lambda =10^{-4}$ was imposed on the $L_{2}$-norm of the
coefficients, excluding the constant term as in ridge regressions. We
estimate the UQE at the quantile level $\tau =0.1,0.15,\ldots ,0.9$. For
each $\tau ,$ we also construct the 95\% (pointwise) confidence interval.

Figure \ref{emp_app_1} presents the results. The estimated UQE ranges
between 0.22 and 0.47 across the quantiles with an average of 0.37. When we
estimate the unconditional mean effect, we obtain an estimate of 0.21, which
is somewhat consistent with the quantile cases. We interpret these estimates
in the following way: the effect of a $\delta $ (small) increase in college
enrollment induced by an additive change in tuition increases (log) wages
between $0.22\times \delta $ and $0.47\times \delta $ across quantiles. For
example, for $\delta =0.01$, we obtain an increase in the quantiles of the
wage distribution between $0.22\%$ and $0.47\%$.

\begin{figure}[h]
\centering
\includegraphics[scale =0.7]{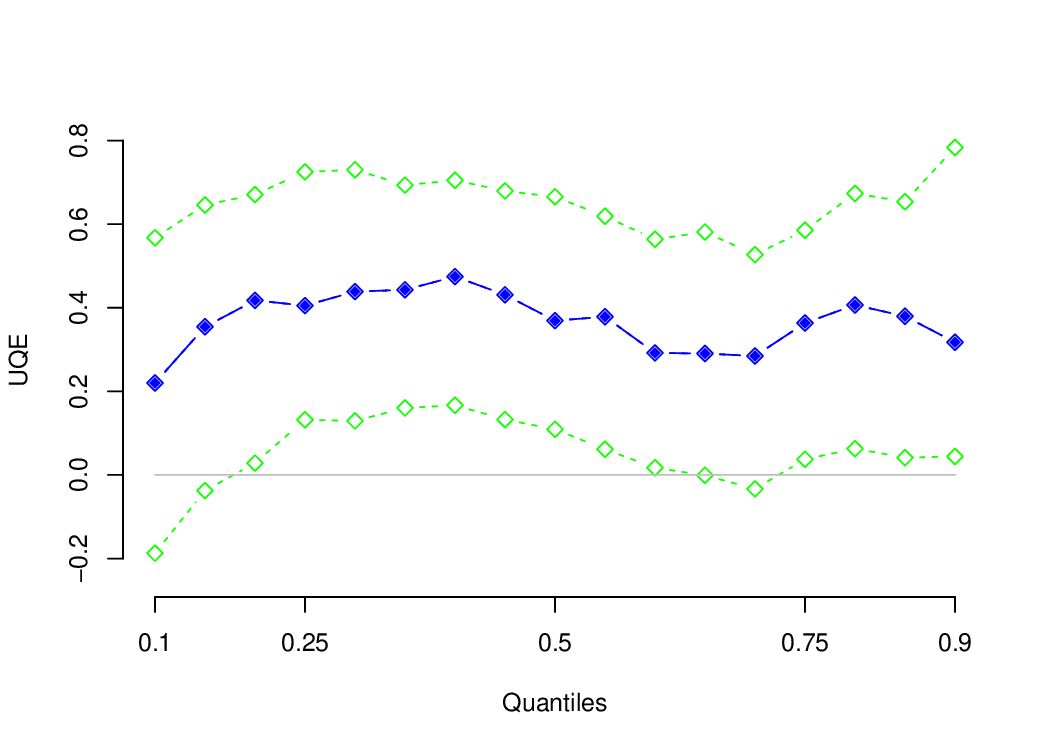}
\caption{ \emph{Solid line: point estimates of the UQEs; Dashed line: 95\%
confidence intervals.}}
\label{emp_app_1}
\end{figure}

\section{Conclusion}

\label{conclusion}

In this paper we study the unconditional policy effect with an endogenous
binary treatment. Framing the selection equation as a threshold-crossing
model allows us to introduce a novel class of unconditional marginal
treatment effects and represent the unconditional effect as a weighted
average of these unconditional marginal treatment effects. When the policy
variable used to change the participation rate satisfies a conditional
exogeneity condition, it is possible to recover the unconditional policy
effect using the proposed UNIQUE method.

To illustrate the usefulness of unconditional MTEs, we focus on the
unconditional quantile effect. We find that the unconditional quantile
regression estimator that neglects endogeneity can be severely biased.
Moreover, the bias may not be uniform across quantiles. Any attempt to sign
the bias \emph{a priori} requires very strong assumptions on the
data-generating process. Intriguingly, the unconditional quantile regression
estimator can be inconsistent even if the treatment status is exogenously
determined. This happens when the treatment selection is partly determined
by some covariates that also influence the outcome variable.

Our findings reveal that the unconditional quantile effect (UQE) and the
marginal policy-relevant treatment effect (MPRTE) can be seen as part of the
same family of policy effects. From a purely robustness perspective, the
unconditional median effect---a special UQE---can be considered a more
robust version of the MPRTE, in the same way that the median is the robust
counterpart of the mean. Both represent specific examples of a general
unconditional policy effect. To the best of our knowledge, this connection
has not been established in the literature on either UQE or MPRTE.

\section*{Appendix: Proof of the Main Results}

\renewcommand{\thesubsection}{\Alph{subsection}} \setcounter{equation}{0} %
\renewcommand{\theequation}{\textcolor{red}{A.\arabic{equation}}}


\label{appendix_marginal}



To prove Theorem \ref{Theorem general}, we first present an auxiliary
proposition with some remarks.

\begin{proposition}
\label{Prop_linearization}Let Assumptions \ref{Assumption_primary}--\ref%
{Assumption_domination} hold. Then%
\begin{eqnarray*}
F_{Y_{\delta }}(y) &=&F_{Y}(y)+\delta E\left[ \left\{
F_{Y(1)|U_{D},W}(y|P\left( W\right) ,W)-F_{Y(0)|U_{D},W}(y|P\left( W\right)
,W)\right\} \dot{P}\left( W\right) \right] \\
&+&R_{F}(\delta ;y),
\end{eqnarray*}%
where $R_{F}(\delta ;y)/\delta \rightarrow 0$ uniformly over $y\in \mathcal{Y%
}:=\mathcal{Y}(0)\cup \mathcal{Y}(1)$ as $\delta \rightarrow 0$.
\end{proposition}

\begin{remark}
Proposition \ref{Prop_linearization} provides a linear approximation to $%
F_{Y_{\delta }}$, the CDF of the outcome variable under $D_{\delta }.$
Essentially, it says that the proportion of individuals with outcome below $%
y $ under the new policy regime, that is, $F_{Y_{\delta }}(y)$, will be
equal to the proportion of individuals with outcome below $y$ under the
existing policy regime, that is, $F_{Y}(y)$, plus an adjustment given by the
marginal entrants. Consider $\delta >0$ and $P_{\delta }\left( w\right)
>P\left( w\right) $ for all $w\in \mathcal{W}$ as an example. In this case,
because of the policy intervention, the individuals who are on the margin,
namely those with $u_{D}=P(w)$, will switch their treatment status from $0$
to $1$. Such a switch contributes to $F_{Y_{\delta }}(y)$ by the amount $%
F_{Y(1)|U_{D},W}\left( P(w\right) ,w)-F_{Y(0)|U_{D},W}\left( P(w\right) ,w)$%
, averaged over the distribution of $W$ for the marginal subpopulation.
\end{remark}

\begin{remark}
The linear approximation to $F_{Y_{\delta }}$ is uniform over $\mathcal{Y}$
as $\delta \rightarrow 0$. We need the uniform approximation because we
consider a general Hadamard differentiable $\rho .$ In the special case when 
$\rho $ does not depend on the whole distribution, the uniformity of the
approximation over the whole support $\mathcal{Y}$ may not be necessary. In
the quantile case, we maintain the uniformity because we consider all
quantile levels in $(0,1).$
\end{remark}

\begin{proof}[Proof of Theorem \protect\ref{Theorem general}]
Define 
\begin{equation}
G\left( y\right) =E\left[ \left\{ F_{Y(1)|U_{D},W}(y|P\left( W\right)
,W)-F_{Y(0)|U_{D},W}(y|P\left( W\right) ,W)\right\} \dot{P}\left( W\right) %
\right] .  \label{equation G}
\end{equation}%
Then, by Proposition \ref{Prop_linearization}, we have $\lim_{\delta
\rightarrow 0}\left\Vert G_{\delta }-G\right\Vert _{\infty }=0$ for $%
G_{\delta }$ defined in (\ref{G_Delta}). Hence, under the Hadamard
differentiability of $\rho ,$ we obtain $\Pi _{\rho }=\dot{\rho}_{F_{Y}}%
\left[ G\right] /E\left[ \dot{P}\left( W\right) \right] .$

Next, we show that $\dot{\rho}_{F_{Y}}\left[ G\right] =\int_{\mathcal{Y}%
}\psi (y,\rho ,F_{Y})dG(y)$. By definition (cf. Definition \ref{Def: IF}), 
\begin{equation*}
\psi (y,\rho ,F_{Y})=\dot{\rho}_{F_{Y}}\left[ \Delta _{y}\left( \cdot
\right) -F_{Y}\left( \cdot \right) \right] .
\end{equation*}%
Using this and $\lim_{y\rightarrow -\infty }G\left( y\right)
=\lim_{y\rightarrow +\infty }G\left( y\right) =0,$ we have 
\begin{eqnarray*}
\int_{\mathcal{Y}}\psi (y,\rho ,F_{Y})dG(y) &=&\int_{\mathcal{Y}}\dot{\rho}%
_{F_{Y}}\left[ \Delta _{y}\left( \cdot \right) -F_{Y}\left( \cdot \right) %
\right] dG(y) \\
&=&\dot{\rho}_{F_{Y}}\left[ \int_{\mathcal{Y}}\left( \Delta _{y}\left( \cdot
\right) -F_{Y}\left( \cdot \right) \right) dG(y)\right] \\
&=&\dot{\rho}_{F_{Y}}\left[ G(\cdot )\right] ,
\end{eqnarray*}%
where the second equality follows from the linearity and continuity of the
functional $\dot{\rho}_{F}$ and the third equality follows from $\int_{%
\mathcal{Y}}\Delta _{y}\left( \cdot \right) dG(y)=G\left( \cdot \right) $
and $\int_{\mathcal{Y}}F_{Y}\left( \cdot \right) dG(y)=F_{Y}\left( \cdot
\right) \int_{\mathcal{Y}}dG(y)\equiv 0.$ Therefore, 
\begin{equation*}
\Pi _{\rho }=\int_{\mathcal{Y}}\psi (y,\rho ,F_{Y})dG(y)/E\left[ \dot{P}%
\left( W\right) \right] .
\end{equation*}

It remains to show that $G\left( y\right) $ is differentiable with 
\begin{equation*}
dG\left( y\right) =E\left[ \left\{ f_{Y(1)|U_{D},W}(y|P\left( W\right)
,W)-f_{Y(0)|U_{D},W}(y|P\left( W\right) ,W)\right\} \dot{P}\left( W\right) %
\right] dy.
\end{equation*}%
To this end, we consider the limit 
\begin{eqnarray*}
&&\lim_{h\rightarrow 0}\frac{1}{h}\left[ G\left( y+h\right) -G\left(
y\right) \right] \\
&=&\lim_{h\rightarrow 0}E\left[ \left\{ \frac{F_{Y(1)|U_{D},W}(y+h|P\left(
W\right) ,W)-F_{Y(1)|U_{D},W}(y|P\left( W\right) ,W)}{h}\right\} \dot{P}%
\left( W\right) \right] \\
&&-\lim_{h\rightarrow 0}E\left[ \left\{ \frac{F_{Y(0)|U_{D},W}(y+h|P\left(
W\right) ,W)-F_{Y(0)|U_{D},W}(y|P\left( W\right) ,W)}{h}\right\} \dot{P}%
\left( W\right) \right] .
\end{eqnarray*}%
But%
\begin{eqnarray*}
&&\sup_{h}\left\vert \frac{F_{Y(d)|U_{D},W}(y+h|P\left( W\right)
,W)-F_{Y(d)|U_{D},W}(y|P\left( W\right) ,W)}{h}\right\vert \\
&\leq &\sup_{\tilde{y}\in \mathcal{Y}(d)}\left\vert f_{Y(d)|U_{D},W}(\tilde{y%
}|P\left( W\right) ,W)\right\vert .\text{ }
\end{eqnarray*}%
This and Assumptions \ref{Assumption_regularity}(\ref{f_y_u_x}.ii) and \ref%
{Assumption_regularity}(\ref{p_delta}.iii) allow us to use the dominated
convergence theorem to obtain%
\begin{equation*}
\lim_{h\rightarrow 0}\frac{1}{h}\left[ G\left( y+h\right) -G\left( y\right) %
\right] =E\left[ \left\{ f_{Y(1)|U_{D},W}(y|P\left( W\right)
,W)-f_{Y(0)|U_{D},W}(y|P\left( W\right) ,W)\right\} \dot{P}\left( W\right) %
\right]
\end{equation*}%
and hence $G\left( y\right) $ is indeed differentiable with the derivative
given above. Therefore, $\Pi _{\rho }$ is 
\begin{eqnarray*}
\int_{\mathcal{Y}}\psi (y,\rho ,F_{Y})dG\left( y\right) &=&\int_{\mathcal{Y}%
}\psi (y,\rho ,F_{Y}) \\
&\times &E\left[ \left\{ f_{Y(1)|U_{D},W}(y|P\left( W\right)
,W)-f_{Y(0)|U_{D},W}(y|P\left( W\right) ,W)\right\} \dot{P}\left( W\right) %
\right] dy.
\end{eqnarray*}%
To represent $\Pi _{\rho }$ in an alternative form, recall that the
unconditional marginal treatment effect for the $\rho $ functional is 
\begin{equation*}
\mathrm{MTE}_{\rho }\left( u,w\right) :=E\left[ \psi (Y\left( 1\right) ,\rho
,F_{Y})-\psi (Y\left( 0\right) ,\rho ,F_{Y})|U_{D}=u,W=w\right] .
\end{equation*}%
By changing the order of integration, we can write $\Pi _{\rho }$ as%
\begin{eqnarray*}
&&\Pi _{\rho } \\
&=&\int_{\mathcal{W}}\left\{ \int_{\mathcal{Y}}\psi (y,\rho ,F_{Y})\left[
f_{Y(1)|U_{D},W}(y|P\left( w\right) ,w)-f_{Y(0)|U_{D},W}(y|P\left( w\right)
,w)\right] dy\right\} \frac{\dot{P}\left( w\right) }{E\left[ \dot{P}\left(
W\right) \right] }dF_{W}\left( w\right) \\
&=&\int_{\mathcal{W}}E\left[ \psi (Y\left( 1\right) ,\rho ,F_{Y})-\psi
(Y\left( 0\right) ,\rho ,F_{Y})|U_{D}=P(w),W=w\right] \mathcal{\dot{P}}%
\left( w\right) dF_{W}\left( w\right) \\
&=&\int_{\mathcal{W}}\mathrm{MTE}_{\rho }\left( P(w),w\right) \dot{P}\left(
w\right) dF_{W}\left( w\right) .
\end{eqnarray*}
\end{proof}

%

\begin{proof}[Proof of Proposition \protect\ref{Prop_linearization}]
Using the selection equation $D_{\delta }=\mathds{1}\left\{ U_{D}\leq
P_{\delta }(W)\right\} ,$ we have%
\begin{eqnarray*}
F_{Y(1)|D_{\delta }}(\tilde{y}|1) &=&\Pr \left[ Y(1)\leq \tilde{y}|D_{\delta
}=1\right] \\
&=&\frac{1}{\Pr [D_{\delta }=1]}\Pr \left[ Y(1)\leq \tilde{y},D_{\delta }=1%
\right] \\
&=&\frac{1}{\Pr [D_{\delta }=1]}\int_{\mathcal{W}}\Pr \left[ Y(1)\leq \tilde{%
y},D_{\delta }=1|W=w\right] dF_{W}\left( w\right) \\
&=&\frac{1}{\Pr [D_{\delta }=1]}\int_{\mathcal{W}}\Pr \left[ Y(1)\leq \tilde{%
y},U_{D}\leq P_{\delta }\left( w\right) |W=w\right] dF_{W}\left( w\right) .
\end{eqnarray*}%
Hence, under Assumptions \ref{Assumption_primary}(\ref{feasibility}) and \ref%
{Assumption_regularity}(\ref{regularity_x_abs}),%
\begin{align*}
F_{Y(1)|D_{\delta }}(\tilde{y}|1)& =\frac{1}{\Pr [D_{\delta }=1]}\int_{%
\mathcal{W}}F_{Y(1),U_{D}|W}(\tilde{y},P_{\delta }(w)|w)dF_{W}\left( w\right)
\\
& =\frac{1}{\Pr [D_{\delta }=1]}\int_{\mathcal{W}}\int_{-\infty }^{\tilde{y}%
}\int_{0}^{P_{\delta }(w)}f_{Y(1),U_{D}|W}(y,u|w)dudydF_{W}\left( w\right) \\
& =\frac{1}{\Pr [D_{\delta }=1]}\int_{-\infty }^{\tilde{y}}\int_{\mathcal{W}}%
\left[ \int_{0}^{P_{\delta }(w)}f_{Y(1),U_{D}|W}(y,u|w)du\right]
dF_{W}\left( w\right) dy,
\end{align*}%
where the order of integration can be switched because the integrands are
non-negative. It then follows that 
\begin{eqnarray}
f_{Y(1)|D_{\delta }}(\tilde{y}|1) &=&\frac{1}{\Pr [D_{\delta }=1]}\int_{%
\mathcal{W}}\left[ \int_{0}^{P_{\delta }(w)}f_{Y(1),U_{D}|W}(\tilde{y},u|w)du%
\right] dF_{W}\left( w\right)  \notag \\
&=&\frac{1}{\Pr [D_{\delta }=1]}\int_{\mathcal{W}}\left[ \int_{0}^{P_{\delta
}(w)}f_{Y(1)|U_{D},W}(\tilde{y}|u,w)du\right] dF_{W}\left( w\right) ,
\label{f_y1_d1}
\end{eqnarray}%
where we have used Assumption \ref{Assumption_primary}(\ref{uniformity}). So,%
\begin{equation}
\Pr [D_{\delta }=1]f_{Y(1)|D_{\delta }}(\tilde{y}|1)=\int_{\mathcal{W}}\left[
\int_{0}^{P_{\delta }(w)}f_{Y(1)|U_{D},W}(\tilde{y}|u,w)du\right]
dF_{W}\left( w\right) .  \label{p_f_y1_d1}
\end{equation}

Under Assumptions \ref{Assumption_regularity}(\ref{f_y_u_x}) and \ref%
{Assumption_regularity}(\ref{p_delta}), we can differentiate both sides of (%
\ref{p_f_y1_d1}) with respect to $\delta $ under the integral sign to get 
\begin{equation}
\frac{\partial \left\{ \Pr [D_{\delta }=1]f_{Y(1)|D_{\delta }}(\tilde{y}%
|1)\right\} }{\partial \delta }=\int_{\mathcal{W}}f_{Y(1)|U_{D},W}(\tilde{y}%
|P_{\delta }(w),w)\frac{\partial P_{\delta }(w)}{\partial \delta }%
dF_{W}\left( w\right) .  \label{partial_p_f_y1_d1}
\end{equation}

Under Assumptions \ref{Assumption_regularity}(\ref{f_y_u_x}.i) and \ref%
{Assumption_regularity}(\ref{p_delta}.ii), $f_{Y(1)|U_{D},W}(\tilde{y}%
|P_{\delta }(w),w)\partial P_{\delta }(w)/\partial \delta $ is continuous in 
$\delta $ for each $\tilde{y}\in \mathcal{Y}\left( 1\right) $ and $w\in 
\mathcal{W}$. In view of Assumptions \ref{Assumption_regularity}(\ref%
{f_y_u_x}.ii) and \ref{Assumption_regularity}(\ref{p_delta}.iii), we can
invoke the dominated convergence theorem to show that the map $\delta
\mapsto \frac{\partial \left\{ \Pr [D_{\delta }=1]f_{Y(1)|D_{\delta }}(%
\tilde{y}|1)\right\} }{\partial \delta }$ is continuous for each $\tilde{y}%
\in \mathcal{Y}\left( 1\right) $.


The same argument can be used to show the continuous differentiability of $%
\Pr [D_{\delta }=0)]f_{Y(0)|D_{\delta }}(\tilde{y}|0)$. More specifically,
under Assumption \ref{Assumption_regularity}(\ref{regularity_x_abs}), we
have 
\begin{align*}
F_{Y(0)|D_{\delta }}(\tilde{y}|0)& =\Pr [Y(0)\leq \tilde{y}|D_{\delta }=0] \\
& =\Pr [Y(0)\leq \tilde{y}|U_{D}>P_{\delta }\left( W\right) ] \\
& =\frac{1}{\Pr [D_{\delta }=0]}\Pr [Y(0)\leq \tilde{y},U_{D}>P_{\delta
}\left( W\right) ] \\
& =\frac{1}{\Pr [D_{\delta }=0]}\left\{ \Pr \left[ Y(0)\leq \tilde{y}\right]
-\Pr \left[ Y(0)\leq \tilde{y},U_{D}\leq P_{\delta }\left( W\right) \right]
\right\} \\
& =\frac{1}{\Pr [D_{\delta }=0]}\left[ F_{Y(0)}(\tilde{y})-\int_{\mathcal{W}%
}F_{Y(0),U_{D}|W}(\tilde{y},P_{\delta }\left( w\right) |w)dF_{W}(w)\right] \\
& =\frac{1}{\Pr [D_{\delta }=0]}\left[ F_{Y(0)}(\tilde{y})-\int_{\mathcal{W}%
}\int_{-\infty }^{\tilde{y}}\int_{0}^{P_{\delta }\left( w\right)
}f_{Y(0),U_{D}|W}(y,u|w)dudydF_{W}(w)\right] \\
& =\frac{1}{\Pr [D_{\delta }=0]}\left[ F_{Y(0)}(\tilde{y})-\int_{-\infty }^{%
\tilde{y}}\int_{\mathcal{W}}\int_{0}^{P_{\delta }\left( w\right)
}f_{Y(0),U_{D}|W}(y,u|w)dudF_{W}(w)dy\right] ,
\end{align*}%
where the orders of integrations can be switched because the integrands are
non-negative. Therefore, 
\begin{equation}
\Pr [D_{\delta }=0]f_{Y(0)|D_{\delta }}(\tilde{y}|0)=f_{Y(0)}(\tilde{y}%
)-\int_{\mathcal{W}}\int_{0}^{P_{\delta }\left( w\right) }f_{Y(0),U_{D}|W}(%
\tilde{y},u|w)dudF_{W}(w).  \label{f_y_d_0_x}
\end{equation}%
Using Assumptions \ref{Assumption_primary}(\ref{uniformity}), \ref%
{Assumption_regularity}(\ref{f_y_u_x}), and \ref{Assumption_regularity}(\ref%
{p_delta}), we have 
\begin{eqnarray}
\frac{\partial \left\{ \Pr [D_{\delta }=0]f_{Y(0)|D_{\delta }}(\tilde{y}%
|0)\right\} }{\partial \delta } &=&-\int_{\mathcal{W}}f_{Y(0),U_{D}|W}(%
\tilde{y},P_{\delta }\left( w\right) )|w)\frac{\partial P_{\delta }(w)}{%
\partial \delta }dF_{W}(w)  \notag \\
&=&-\int_{\mathcal{W}}f_{Y(0)|U_{D},W}(\tilde{y}|P_{\delta }\left( w\right)
,w)\frac{\partial P_{\delta }(w)}{\partial \delta }dF_{W}(w).
\label{f_y_d_0_1_x}
\end{eqnarray}%
The continuity of $\delta \mapsto \frac{\partial \left\{ \Pr [D_{\delta
}=0]f_{Y(0)|D_{\delta }}(\tilde{y}|0)\right\} }{\partial \delta }$ follows
from the same arguments as those for the continuity of $\delta \mapsto \frac{%
\partial \left\{ \Pr [D_{\delta }=1]f_{Y(1)|D_{\delta }}(\tilde{y}%
|1)\right\} }{\partial \delta }$.

For any $\delta $ in $N_{\varepsilon }$, we have 
\begin{eqnarray*}
F_{Y_{\delta }}(y) &=&\int_{\mathcal{Y}(1)}\mathds{1}\left\{ \tilde{y}\leq
y\right\} \Pr [D_{\delta }=1]f_{Y(1)|D_{\delta }}(\tilde{y}|1)d\tilde{y} \\
&+&\int_{\mathcal{Y}(0)}\mathds{1}\left\{ \tilde{y}\leq y\right\} \Pr
[D_{\delta }=0]f_{Y(0)|D_{\delta }}(\tilde{y}|0)d\tilde{y}.
\end{eqnarray*}%
The continuous differentiability of $\delta \mapsto \Pr [D_{\delta
}=1]f_{Y(1)|D_{\delta }}(\tilde{y}|1)$ and $\delta \mapsto \Pr [D_{\delta
}=0]f_{Y(0)|D_{\delta }}(\tilde{y}|0)$ around $\delta =0$ for each $\tilde{y}
$ allows us to take the first-order Taylor expansion of the terms in the
above expression. Using (\ref{partial_p_f_y1_d1}), we have 
\begin{eqnarray}
&&\Pr [D_{\delta }=1]f_{Y(1)|D_{\delta }}(\tilde{y}|1)  \notag \\
&=&pf_{Y(1)|D}(\tilde{y}|1)+\delta \left[ \int_{\mathcal{W}}f_{Y(1)|U_{D},W}(%
\tilde{y}|P(w),w)\dot{P}\left( w\right) dF_{W}(w)\right] +R(\delta ;\tilde{y}%
,1),  \label{expansion_1_x}
\end{eqnarray}%
where 
\begin{equation}
R(\delta ;\tilde{y},1)=\delta \left[ \frac{\partial \left\{ \Pr [D_{\delta
}=1]f_{Y(1)|D_{\delta }}(\tilde{y}|1)\right\} }{\partial \delta }\bigg\rvert%
_{\delta =\tilde{\delta}_{1}}-\frac{\partial \left\{ \Pr [D_{\delta
}=1]f_{Y(1)|D_{\delta }}(\tilde{y}|1)\right\} }{\partial \delta }\bigg\rvert%
_{\delta =0}\right]  \label{expansion_1_x_rem}
\end{equation}%
and $0\leq \tilde{\delta}_{1}\leq \delta $. The middle point $\tilde{\delta}%
_{1}$ depends on $\delta $. For the case of $d=0$, we use (\ref{f_y_d_0_1_x}%
) to obtain a similar expansion: 
\begin{align}
& \Pr [D_{\delta }=0]f_{Y(0)|D_{\delta }}(\tilde{y}|0)  \notag \\
& =(1-p)f_{Y(0)|D}(\tilde{y}|0)-\delta \cdot \left[ \int_{\mathcal{W}%
}f_{Y(0)|U_{D},W}(\tilde{y}|P\left( w\right) ,w)\dot{P}\left( w\right)
dF_{W}(w)\right] +R(\delta ;\tilde{y},0),  \label{expansion_0_x}
\end{align}%
where 
\begin{equation}
R(\delta ;\tilde{y},0)=\delta \left[ \frac{\partial \left\{ \Pr [D_{\delta
}=0]f_{Y(0)|D_{\delta }}(\tilde{y}|0)\right\} }{\partial \delta }\bigg\rvert%
_{\delta =\tilde{\delta}_{0}}-\frac{\partial \left\{ \Pr [D_{\delta
}=0]f_{Y(0)|D_{\delta }}(\tilde{y}|0)\right\} }{\partial \delta }\bigg\rvert%
_{\delta =0}\right]  \label{expansion_0_x_rem}
\end{equation}%
and $0\leq \tilde{\delta}_{0}\leq \delta $. The middle point $\tilde{\delta}%
_{0}$ depends on $\delta $.

Hence 
\begin{eqnarray}
F_{Y_{\delta }}(y) &=&F_{Y}(y)+\delta \int_{\mathcal{Y}(1)}\int_{\mathcal{W}}%
\mathds{1}\left\{ \tilde{y}\leq y\right\} f_{Y(1)|U_{D},W}(\tilde{y}|P(w),w)%
\dot{P}\left( w\right) dF_{W}(w)d\tilde{y}  \notag  \label{expansion_x_1} \\
&-&\delta \int_{\mathcal{Y}(0)}\int_{\mathcal{W}}\mathds{1}\left\{ \tilde{y}%
\leq y\right\} f_{Y(0)|U_{D},W}(\tilde{y}|P\left( w\right) ,w)\dot{P}\left(
w\right) dF_{W}(w)d\tilde{y}  \notag \\
&+&R_{F}(\delta ;y)
\end{eqnarray}%
where the remainder $R_{F}(\delta ;y)$ is 
\begin{equation}
R_{F}(\delta ;y):=\int_{\mathcal{Y}(1)}\mathds{1}\left\{ \tilde{y}\leq
y\right\} R(\delta ;\tilde{y},1)d\tilde{y}+\int_{\mathcal{Y}(0)}\mathds{1}%
\left\{ \tilde{y}\leq y\right\} R(\delta ;\tilde{y},0)d\tilde{y}.
\label{remainder_x_1}
\end{equation}

The next step is to show that the remainder in (\ref{remainder_x_1}) is $%
o(|\delta |)$ uniformly over $y\in \mathcal{Y=Y}(0)\cup \mathcal{Y}(1)$ as $%
\delta \rightarrow 0$, that is, $\lim_{\delta \rightarrow 0}\sup_{y\in 
\mathcal{Y}}\left\vert \frac{R_{F}(\delta ;y)}{\delta }\right\vert =0.$
Using (\ref{expansion_1_x_rem}) and (\ref{expansion_0_x_rem}), we get 
\begin{align*}
& \sup_{y\in \mathcal{Y}}\left\vert \frac{R_{F}(\delta ;y)}{\delta }%
\right\vert \\
& \leq \int_{\mathcal{Y}(1)}\left\vert \frac{\partial \left\{ \Pr [D_{\delta
}=1]f_{Y(1)|D_{\delta }}(\tilde{y}|1)\right\} }{\partial \delta }\bigg\rvert%
_{\delta =\tilde{\delta}_{1}}-\frac{\partial \left\{ \Pr [D_{\delta
}=1]f_{Y(1)|D_{\delta }}(\tilde{y}|1)\right\} }{\partial \delta }\bigg\rvert%
_{\delta =0}\right\vert d\tilde{y} \\
& +\int_{\mathcal{Y}(0)}\left\vert \frac{\partial \left\{ \Pr [D_{\delta
}=0]f_{Y(0)|D_{\delta }}(\tilde{y}|0)\right\} }{\partial \delta }\bigg\rvert%
_{\delta =\tilde{\delta}_{0}}-\frac{\partial \left\{ \Pr [D_{\delta
}=0]f_{Y(0)|D_{\delta }}(\tilde{y}|0)\right\} }{\partial \delta }\bigg\rvert%
_{\delta =0}\right\vert d\tilde{y}.
\end{align*}

Note that Assumption \ref{Assumption_regularity}(\ref{p_delta}.iii) implies
that 
\begin{equation*}
\sup_{\delta \in N_{\varepsilon }}\left\vert \frac{\partial \Pr [D_{\delta
}=d]}{\partial \delta }\right\vert <\infty \text{ for }d=0,1.
\end{equation*}%
This and Assumption \ref{Assumption_domination} allow us to take the limit $%
\delta \rightarrow 0$ under the integral signs. Also, we have shown that,
both $\frac{\partial \{\Pr [D_{\delta }=1]f_{Y(1)|D_{\delta }}(\tilde{y}|1)\}%
}{\partial \delta }$ and $\frac{\partial \{\Pr [D_{\delta
}=0]f_{Y(0)|D_{\delta }}(\tilde{y}|0)\}}{\partial \delta }$ are continuous
in $\delta .$ Therefore, $\lim_{\delta \rightarrow 0}\sup_{y\in \mathcal{Y}%
}\left\vert \frac{R_{F}(\delta ;y)}{\delta }\right\vert =0.$ So, uniformly
over $y\in \mathcal{Y}$ as $\delta \rightarrow 0,$ 
\begin{align*}
F_{Y_{\delta }}(y)& =F_{Y}(y)+\delta \int_{\mathcal{Y}(1)}\int_{\mathcal{W}}%
\mathds{1}\left\{ \tilde{y}\leq y\right\} f_{Y(1)|U_{D},W}(\tilde{y}|P(w),w)%
\dot{P}\left( w\right) dF_{W}(w)d\tilde{y} \\
& -\delta \int_{\mathcal{Y}(0)}\int_{\mathcal{W}}\mathds{1}\left\{ \tilde{y}%
\leq y\right\} f_{Y(0)|U_{D},W}(\tilde{y}|P\left( w\right) ,w)\dot{P}\left(
w\right) dF_{W}(w)d\tilde{y}+o(|\delta |) \\
& =F_{Y}(y)+\delta E\left[ \left\{ F_{Y(1)|U_{D},W}(y|P\left( W\right)
,W)-F_{Y(0)|U_{D},W}(y|P\left( W\right) ,W)\right\} \dot{P}\left( W\right) %
\right] +o(|\delta |).
\end{align*}
\end{proof}

\begin{proof}[Proof of Corollary \protect\ref{Corollary_UQTE}]
For each $d=0$ and $1,$ we have%
\begin{eqnarray*}
&&\frac{1}{f_{Y}(y_{\tau })}\int_{\mathcal{W}}E\left[ \mathds{1}\left\{
Y(d)\leq y_{\tau }\right\} |U_{D}=P\left( w\right) ,W=w\right] \mathcal{\dot{%
P}}\left( w\right) dF_{W}\left( w\right) \\
&=&\frac{1}{f_{Y}(y_{\tau })}\int_{\mathcal{W}}E\left[ \mathds{1}\left\{
Y(d)\leq y_{\tau }\right\} |D=d,W=w\right] \mathcal{\dot{P}}\left( w\right)
dF_{W}\left( w\right) \\
&+&\frac{1}{f_{Y}(y_{\tau })}\int_{\mathcal{W}}E\left[ \mathds{1}\left\{
Y(d)\leq y_{\tau }\right\} |U_{D}=P\left( w\right) ,W=w\right] \mathcal{\dot{%
P}}\left( w\right) dF_{W}\left( w\right) \\
&-&\frac{1}{f_{Y}(y_{\tau })}\int_{\mathcal{W}}E\left[ \mathds{1}\left\{
Y(d)\leq y_{\tau }\right\} |D=d,W=w\right] \mathcal{\dot{P}}\left( w\right)
dF_{W}\left( w\right) \\
&=&\frac{1}{f_{Y}(y_{\tau })}\int_{\mathcal{W}}E\left[ \mathds{1}\left\{
Y(d)\leq y_{\tau }\right\} |D=d,W=w\right] dF_{W}\left( w\right) \\
&-&\frac{1}{f_{Y}(y_{\tau })}\int_{\mathcal{W}}E\left[ \mathds{1}\left\{
Y(d)\leq y_{\tau }\right\} |D=d,W=w\right] \left[ 1-\mathcal{\dot{P}}\left(
w\right) \right] dF_{W}\left( w\right) \\
&-&\frac{1}{f_{Y}(y_{\tau })}\int_{\mathcal{W}}\left[ F_{Y\left( d\right)
|D,W}\left( y_{\tau }|d,w\right) -F_{Y\left( d\right) |U_{D},W}\left(
y_{\tau }|P\left( w\right) ,w\right) \right] \mathcal{\dot{P}}\left(
w\right) dF_{W}\left( w\right) \\
:= &&A_{\tau }\left( d\right) -B_{1\tau }\left( d\right) -B_{2\tau }\left(
d\right)
\end{eqnarray*}%
where 
\begin{eqnarray*}
A_{\tau }\left( d\right) &=&\frac{1}{f_{Y}(y_{\tau })}\int_{\mathcal{W}}E%
\left[ \mathds{1}\left\{ Y(d)\leq y_{\tau }\right\} |D=d,W=w\right]
dF_{W}\left( w\right) , \\
B_{1\tau }\left( d\right) &=&\frac{1}{f_{Y}(y_{\tau })}\int_{\mathcal{W}}E%
\left[ \mathds{1}\left\{ Y(d)\leq y_{\tau }\right\} |D=d,W=w\right] \left[ 1-%
\mathcal{\dot{P}}\left( w\right) \right] dF_{W}\left( w\right) , \\
B_{2\tau }\left( d\right) &=&\frac{1}{f_{Y}(y_{\tau })}\int_{\mathcal{W}}%
\left[ F_{Y\left( d\right) |D,W}\left( y_{\tau }|d,w\right) -F_{Y\left(
d\right) |U_{D},W}\left( y_{\tau }|P\left( w\right) ,w\right) \right] 
\mathcal{\dot{P}}\left( w\right) dF_{W}\left( w\right) .
\end{eqnarray*}%
So 
\begin{eqnarray*}
\Pi _{\tau } &=&A_{\tau }\left( 0\right) -B_{1\tau }\left( 0\right)
-B_{2\tau }\left( 0\right) -\left[ A_{\tau }\left( 1\right) -B_{1\tau
}\left( 1\right) -B_{2\tau }\left( 1\right) \right] \\
&=&A_{\tau }-B_{1\tau }-B_{2\tau },
\end{eqnarray*}%
where 
\begin{eqnarray*}
A_{\tau } &=&A_{\tau }\left( 0\right) -A_{\tau }\left( 1\right) \\
&=&\frac{1}{f_{Y}(y_{\tau })}\int_{\mathcal{W}}E\left[ \mathds{1}\left\{
Y\leq y_{\tau }\right\} |D=0,W=w\right] dF_{W}\left( w\right) \\
&-&\frac{1}{f_{Y}(y_{\tau })}\int_{\mathcal{W}}E\left[ \mathds{1}\left\{
Y\leq y_{\tau }\right\} |D=1,W=w\right] dF_{W}\left( w\right) ,
\end{eqnarray*}%
\begin{eqnarray*}
B_{1\tau } &=&B_{1\tau }\left( 0\right) -B_{1\tau }\left( 1\right) \\
&=&\frac{1}{f_{Y}(y_{\tau })}\int_{\mathcal{W}}\left[ F_{Y\left( 0\right)
|D,W}\left( y_{\tau }|0,w\right) -F_{Y\left( 1\right) |D,W}\left( y_{\tau
}|1,w\right) \right] \left[ 1-\mathcal{\dot{P}}\left( w\right) \right]
dF_{W}\left( w\right) \\
&=&\frac{1}{f_{Y}(y_{\tau })}\int_{\mathcal{W}}\left[ F_{Y|D,W}\left(
y_{\tau }|1,w\right) -F_{Y|D,W}\left( y_{\tau }|0,w\right) \right] \left[ 
\mathcal{\dot{P}}\left( w\right) -1\right] dF_{W}\left( w\right) ,
\end{eqnarray*}%
and 
\begin{eqnarray*}
B_{2\tau } &=&B_{2\tau }\left( 0\right) -B_{2\tau }\left( 1\right) \\
&=&\frac{1}{f_{Y}(y_{\tau })}\int_{\mathcal{W}}\left[ F_{Y\left( 0\right)
|D,W}\left( y_{\tau }|0,w\right) -F_{Y\left( 0\right) |U_{D},W}\left(
y_{\tau }|P\left( w\right) ,w\right) \right] \mathcal{\dot{P}}\left(
w\right) dF_{W}\left( w\right) \\
&-&\frac{1}{f_{Y}(y_{\tau })}\int_{\mathcal{W}}\left[ F_{Y\left( 1\right)
|D,W}\left( y_{\tau }|1,w\right) -F_{Y\left( 1\right) |U_{D},W}\left(
y_{\tau }|P\left( w\right) ,w\right) \right] \mathcal{\dot{P}}\left(
w\right) dF_{W}\left( w\right) .
\end{eqnarray*}
\end{proof}

\begin{proof}[Proof of Theorem \protect\ref{uqr_if_param}]
Consider the following difference 
\begin{eqnarray}
\Pi _{\tau }-\hat{\Pi}_{\tau } &=&\frac{1}{\hat{f}_{Y}(\hat{y}_{\tau })}%
\frac{T_{2n}(\hat{y}_{\tau },\hat{m},\hat{\alpha})}{T_{1n}(\hat{\alpha})}-%
\frac{1}{f_{Y}(y_{\tau })}\frac{T_{2}}{T_{1}}  \notag \\
&=&\frac{T_{2n}(\hat{y}_{\tau },\hat{m},\hat{\alpha})f_{Y}(y_{\tau })T_{1}-%
\hat{f}_{Y}(\hat{y}_{\tau })T_{1n}(\hat{\alpha})T_{2}}{\hat{f}_{Y}(\hat{y}%
_{\tau })T_{1n}(\hat{\alpha})f_{Y}(y_{\tau })T_{1}}  \notag \\
&=&\frac{f_{Y}(y_{\tau })T_{1}}{\hat{f}_{Y}(\hat{y}_{\tau })T_{1n}(\hat{%
\alpha})f_{Y}(y_{\tau })T_{1}}\left[ T_{2n}(\hat{y}_{\tau },\hat{m},\hat{%
\alpha})-T_{2}\right]  \notag \\
&-&\frac{\hat{f}_{Y}(\hat{y}_{\tau })T_{1n}(\hat{\alpha})T_{2}-f_{Y}(y_{\tau
})T_{1}T_{2}}{\hat{f}_{Y}(\hat{y}_{\tau })T_{1n}(\hat{\alpha})f_{Y}(y_{\tau
})T_{1}}  \notag \\
&=&\frac{f_{Y}(y_{\tau })T_{1}}{\hat{f}_{Y}(\hat{y}_{\tau })T_{1n}(\hat{%
\alpha})f_{Y}(y_{\tau })T_{1}}\left[ T_{2n}(\hat{y}_{\tau },\hat{m},\hat{%
\alpha})-T_{2}\right]  \notag \\
&-&\frac{T_{1n}(\hat{\alpha})T_{2}}{\hat{f}_{Y}(\hat{y}_{\tau })T_{1n}(\hat{%
\alpha})f_{Y}(y_{\tau })T_{1}}\left[ \hat{f}_{Y}(\hat{y}_{\tau
})-f_{Y}(y_{\tau })\right]  \notag \\
&-&\frac{T_{2}f_{Y}(y_{\tau })}{\hat{f}_{Y}(\hat{y}_{\tau })T_{1n}(\hat{%
\alpha})f_{Y}(y_{\tau })T_{1}}\left[ T_{1n}(\hat{\alpha})-T_{1}\right] .
\label{est_param_pi_decom}
\end{eqnarray}%
We can rearrange (\ref{est_param_pi_decom}) as 
\begin{eqnarray}
\hat{\Pi}_{\tau }-\Pi _{\tau } &=&\frac{T_{2}}{\hat{f}_{Y}(\hat{y}_{\tau
})f_{Y}(y_{\tau })T_{1}}\left[ \hat{f}_{Y}(\hat{y}_{\tau })-f_{Y}(y_{\tau })%
\right]  \notag \\
&+&\frac{T_{2}}{\hat{f}_{Y}(\hat{y}_{\tau })T_{1n}(\hat{\alpha})T_{1}}\left[
T_{1n}(\hat{\alpha})-T_{1}\right]  \notag \\
&-&\frac{1}{\hat{f}_{Y}(\hat{y}_{\tau })T_{1n}(\hat{\alpha})}\left[ T_{2n}(%
\hat{y}_{\tau },\hat{m},\hat{\alpha})-T_{2}\right] .
\label{est_param_pi_decom_2}
\end{eqnarray}%
By appropriately defining the remainders, we can express (\ref%
{est_param_pi_decom_2}) as 
\begin{eqnarray}
\hat{\Pi}_{\tau }-\Pi _{\tau } &=&\frac{T_{2}}{f_{Y}(y_{\tau })^{2}T_{1}}%
\left[ \hat{f}_{Y}(\hat{y}_{\tau })-f_{Y}(y_{\tau })\right]  \notag \\
&+&\frac{T_{2}}{f_{Y}(y_{\tau })T_{1}^{2}}\left[ T_{1n}(\hat{\alpha})-T_{1}%
\right]  \notag \\
&-&\frac{1}{f_{Y}(y_{\tau })T_{1}}\left[ T_{2n}(\hat{y}_{\tau },\hat{m},\hat{%
\alpha})-T_{2}\right] +R_{1}+R_{2}+R_{3},  \label{est_param_pi_decom_3}
\end{eqnarray}%
where 
\begin{eqnarray*}
R_{1} &=&\left[ \frac{T_{2}}{\hat{f}_{Y}(\hat{y}_{\tau })f_{Y}(y_{\tau
})T_{1}}-\frac{T_{2}}{f_{Y}(y_{\tau })^{2}T_{1}}\right] \left[ \hat{f}_{Y}(%
\hat{y}_{\tau })-f_{Y}(y_{\tau })\right] , \\
R_{2} &=&\left[ \frac{T_{2}}{\hat{f}_{Y}(\hat{y}_{\tau })T_{1n}(\hat{\alpha}%
)T_{1}}-\frac{T_{2}}{f_{Y}(y_{\tau })T_{1}^{2}}\right] \left[ T_{1n}(\hat{%
\alpha})-T_{1}\right] , \\
R_{3} &=&\left[ \frac{1}{f_{Y}(y_{\tau })T_{1}}-\frac{1}{\hat{f}_{Y}(\hat{y}%
_{\tau })T_{1n}(\hat{\alpha})}\right] \left[ T_{2n}(\hat{y}_{\tau },\hat{m},%
\hat{\alpha})-T_{2}\right] .
\end{eqnarray*}%
Now we are ready to separate the contribution of each stage of the
estimation process. Using equations (\ref{decom_f_y}), (\ref{decom_ps_non}),
and (\ref{decom_derps_non0}), we obtain: 
\begin{eqnarray}
\hat{\Pi}_{\tau }-\Pi _{\tau } &=&\frac{T_{2}}{f_{Y}(y_{\tau })^{2}T_{1}}%
\left[ \mathbb{P}_{n}\psi _{f_{Y}}\left( Y,y_{\tau }\right)
+B_{f_{Y}}(y_{\tau })+f_{Y}^{\prime }(y_{\tau })\mathbb{P}_{n}\psi
_{Q}(Y,y_{\tau })+o_{p}(n^{-1/2}h^{-1/2})\right]  \notag \\
&&+\frac{T_{2}}{f_{Y}(y_{\tau })T_{1}^{2}}\left\{ \mathbb{P}_{n}\psi
_{\partial P}\left( W\right) +\left[ E\frac{\partial ^{2}P(Z,X,\alpha _{0})}{%
\partial z_{1}\partial \alpha _{0}^{\prime }}\right] \mathbb{P}_{n}\psi
_{\alpha _{0}}\left( D,W\right) +o_{p}(n^{-1/2})\right\}  \notag \\
&-&\frac{1}{f_{Y}(y_{\tau })T_{1}}\left[ \mathbb{P}_{n}\psi _{\partial
m_{0}}\left( W,y_{\tau }\right) +\mathbb{P}_{n}\psi _{m_{0}}\left(
Y,W,y_{\tau }\right) +\mathbb{P}_{n}\tilde{\psi}_{Q}(Y,y_{\tau
})+o_{p}(n^{-1/2})\right]  \notag \\
&+&R_{1}+R_{2}+R_{3}  \notag \\
&=&\frac{T_{2}}{f_{Y}(y_{\tau })^{2}T_{1}}\left[ \mathbb{P}_{n}\psi
_{f_{Y}}\left( Y,y_{\tau }\right) +B_{f_{Y}}(y_{\tau })+f_{Y}^{\prime
}(y_{\tau })\mathbb{P}_{n}\psi _{Q}(Y,y_{\tau })\right]  \notag \\
&&+\frac{T_{2}}{f_{Y}(y_{\tau })T_{1}^{2}}\left\{ \mathbb{P}_{n}\psi
_{\partial P}\left( W\right) +\left[ E\frac{\partial ^{2}P(Z,X,\alpha _{0})}{%
\partial z_{1}\partial \alpha _{0}^{\prime }}\right] \mathbb{P}_{n}\psi
_{\alpha _{0}}\left( D,W\right) \right\}  \notag \\
&&-\frac{1}{f_{Y}(y_{\tau })T_{1}}\left[ \mathbb{P}_{n}\psi _{\partial
m_{0}}\left( W,y_{\tau }\right) +\mathbb{P}_{n}\psi _{m_{0}}\left(
Y,W,y_{\tau }\right) +\mathbb{P}_{n}\tilde{\psi}_{Q}(Y,y_{\tau })\right] 
\notag \\
&&+R_{1}+R_{2}+R_{3}+o_{p}(n^{-1/2}h^{-1/2}).  \label{est_param_pi_decom_4}
\end{eqnarray}

Define $R_{\Pi }$ according to (\ref{est_param_pi_decomp_5}) in the theorem.
Then 
\begin{equation*}
R_{\Pi }=R_{1}+R_{2}+R_{3}+o_{p}(n^{-1/2}h^{-1/2}).
\end{equation*}

Finally, we establish the rates for the remainders $R_{1},$ $R_{2},$ and $%
R_{3}$ one by one. The first remainder is 
\begin{eqnarray}
R_{1} &=&\frac{T_{2}}{f_{Y}(y)T_{1}}\left[ \hat{f}_{Y}(\hat{y}_{\tau
})-f_{Y}(y_{\tau })\right] \left[ \frac{1}{\hat{f}_{Y}(\hat{y}_{\tau })}-%
\frac{1}{f_{Y}(y_{\tau })}\right]  \notag  \label{r_1} \\
&=&-\frac{T_{2}}{f_{Y}(y)^{2}T_{1}\hat{f}_{Y}(\hat{y}_{\tau })}\left[ \hat{f}%
_{Y}(\hat{y}_{\tau })-f_{Y}(y_{\tau })\right] ^{2}  \notag \\
&=&O_{p}\left( |\hat{f}_{Y}(\hat{y}_{\tau })-f_{Y}(y_{\tau })|^{2}\right) .
\end{eqnarray}%
The second remainder is 
\begin{eqnarray}
R_{2} &=&\frac{T_{2}}{T_{1}}\left[ T_{1n}(\hat{\alpha})-T_{1}\right] \left[ 
\frac{1}{\hat{f}_{Y}(\hat{y}_{\tau })T_{1n}(\hat{\alpha})}-\frac{1}{%
f_{Y}(y_{\tau })T_{1}}\right]  \notag \\
&=&\frac{T_{2}}{T_{1}}\left[ T_{1n}(\hat{\alpha})-T_{1}\right] \left[ \frac{%
f_{Y}(y_{\tau })T_{1}-\hat{f}_{Y}(\hat{y}_{\tau })T_{1n}(\hat{\alpha})}{\hat{%
f}_{Y}(\hat{y}_{\tau })T_{1n}(\hat{\alpha})f_{Y}(y_{\tau })T_{1}}\right] 
\notag \\
&=&\frac{T_{2}}{T_{1}}\left[ T_{1n}(\hat{\alpha})-T_{1}\right] \left[ \frac{%
f_{Y}(y_{\tau })(T_{1}-T_{1n}(\hat{\alpha}))-T_{1n}(\hat{\alpha})(\hat{f}%
_{Y}(\hat{y}_{\tau })-f_{Y}(y_{\tau }))}{\hat{f}_{Y}(\hat{y}_{\tau })T_{1n}(%
\hat{\alpha})f_{Y}(y_{\tau })T_{1}}\right]  \notag \\
&=&O_{p}\left( |T_{1n}(\hat{\alpha})-T_{1}|^{2}\right) +O_{p}\left( |T_{1n}(%
\hat{\alpha})-T_{1}||\hat{f}_{Y}(\hat{y}_{\tau })-f_{Y}(y_{\tau })|\right) 
\notag \\
&=&O_{p}(n^{-1})+O_{p}\left( n^{-1/2}|\hat{f}_{Y}(\hat{y}_{\tau
})-f_{Y}(y_{\tau })|\right) .  \label{r_2}
\end{eqnarray}%
The third remainder satisfies 
\begin{eqnarray}
R_{3} &=&O_{p}\left( |T_{2n}(\hat{y}_{\tau },\hat{m},\hat{\alpha}%
)-T_{2}||T_{1n}(\hat{\alpha})-T_{1}|\right)  \notag  \label{r_3} \\
&+&O_{p}\left( |T_{2n}(\hat{y}_{\tau },\hat{m},\hat{\alpha})-T_{2}||\hat{f}%
_{Y}(\hat{y}_{\tau })-f_{Y}(y_{\tau })|\right)  \notag \\
&=&O_{p}(n^{-1})+O_{p}\left( n^{-1/2}|\hat{f}_{Y}(\hat{y}_{\tau
})-f_{Y}(y_{\tau })|\right) .
\end{eqnarray}%
Therefore, 
\begin{eqnarray}
R_{\Pi } &=&R_{1}+R_{2}+R_{3}+o_{p}(n^{-1/2}h^{-1/2})  \notag \\
&=&O_{p}\left( |\hat{f}_{Y}(\hat{y}_{\tau })-f_{Y}(y_{\tau })|^{2}\right)
+O_{p}(n^{-1})+O_{p}\left( n^{-1/2}|\hat{f}_{Y}(\hat{y}_{\tau
})-f_{Y}(y_{\tau })|\right) +o_{p}(n^{-1/2}h^{-1/2}).  \notag
\end{eqnarray}

Finally, Equation (\ref{decom_f_y}) tells us 
\begin{equation}
\hat{f}_{Y}(\hat{y}_{\tau })-{f}_{Y}(y_{\tau
})=O_{p}(n^{-1/2}h^{-1/2})+O_{p}(h^{2}).  \notag  \label{decom_f_y_2}
\end{equation}%
So, 
\begin{eqnarray*}
R_{\Pi }
&=&O_{p}(n^{-1}h^{-1}+h^{4})+O_{p}(n^{-1})+O_{p}(n^{-1}h^{-1/2}+n^{-1/2}h^{2})+o_{p}(n^{-1/2}h^{-1/2})
\\
&=&o_{p}(h^{2})+o_{p}(n^{-1/2}h^{-1/2})=o_{p}(n^{-1/2}h^{-1/2}),
\end{eqnarray*}%
where we have used the rate conditions maintained in Assumption \ref%
{Assumption_rate}.
\end{proof}

\bibliographystyle{aea}
\bibliography{references}

\clearpage\pagebreak

\section*{Online Supplementary Appendix}

\textbf{Title:} Identification and Estimation of Unconditional Policy
Effects of an Endogenous Binary Treatment: An Unconditional MTE Approach 
\newline
\textbf{Authors:} Julian Martinez-Iriarte and Yixiao Sun

\renewcommand{\thesubsection}{\Alph{subsection}} \setcounter{equation}{0}%
\setcounter{subsection}{0} \renewcommand{\theequation}{\textcolor{red}{S.%
\arabic{equation}}}

\pagenumbering{arabic} \renewcommand{\thepage}{S.\arabic{page}}

\subsection{Technical Conditions for Lemmas \protect\ref{lemma_hahn_ridder}
and \protect\ref{param_ave_estimation}}

Recall that $Z_{-1}=(Z_{2},\mathcal{\ldots },Z_{d_{Z}}),$ $W_{-1}=\left(
W_{2},\ldots ,W_{d_{W}}\right) =\left( Z_{-1},X\right) $ for $%
d_{W}=d_{X}+d_{Z},$ and $\mathcal{W}_{-1}$ is the support of $W_{-1}.$

\begin{assumption}
\textbf{Ignorability of errors in estimating the propensity score}

\label{Assumption_ignore_P_error}

\begin{enumerate}[(a)]%

\item[(a)] Conditional on $W_{-1},$ the distribution of $Z_{1}$ is
absolutely continuous with conditional density $%
f_{Z_{1}|W_{-1}}(z_{1}|w_{-1})$ and

\begin{enumerate}
\item[(i)] for each $\left( z_{1}^{\prime },w_{-1}^{\prime }\right) ^{\prime
}\in \mathcal{W}$, $f_{Z_{1}|W_{-1}}(z_{1}|w_{-1})$ is continuously
differentiable with respect to $z_{1}.$

\item[(ii)] for each $w_{-1}\in \mathcal{W}_{-1}$, $%
f_{Z_{1}|W_{-1}}(z_{1}|w_{-1})=0$ for any $z_{1}$ on the boundary of $%
\mathcal{Z}_{1}\left( w_{-1}\right) ,$ the support of $Z_{1}$ conditional on 
$W_{-1}=w_{-1}.$
\end{enumerate}

\item[(b)] $m(y_{\tau },\tilde{w})$ is continuously differentiable with
respect to $z_{1}$ for all orders, and for a neighborhood $\Theta _{0}$ of $%
\theta _{0},$ the following holds:%
\begin{align*}
& (i)\text{ }E\left[ \sup_{\theta \in \Theta _{0}}\left\vert \frac{\partial 
}{\partial \alpha _{\theta }}\frac{\partial m_{0}(y_{\tau },\tilde{W}\left(
\alpha _{\theta }\right) )}{\partial z_{1}}\right\vert \right] <\infty ,\; \\
& (ii)\text{ }E\sup_{\theta \in \Theta _{0}}\left\vert \frac{\partial }{%
\partial \alpha _{\theta }}E\left[ \left. \frac{\partial \log
f_{Z_{1}|W_{-1}}\left( Z_{1}|W_{-1}\right) }{\partial z_{1}}\right\vert 
\tilde{W}\left( \alpha _{\theta }\right) \right] \right\vert <\infty , \\
& (iii)\text{ }E\sup_{\theta \in \Theta _{0}}\left\vert \frac{\partial
m_{0}(y_{\tau },\tilde{W}\left( \alpha _{\theta }\right) )}{\partial \alpha
_{\theta }}E\left[ \left. \frac{\partial \log f_{Z_{1}|W_{-1}}\left(
Z_{1}|W_{-1}\right) }{\partial z_{1}}\right\vert \tilde{W}\left( \alpha
_{\theta }\right) \right] \right\vert <\infty .
\end{align*}%
\end{enumerate}%
\end{assumption}

\bigskip

\begin{assumption}
\textbf{Expansion of }$T_{2n}(\hat{y}_{\tau },\hat{m},\hat{\alpha})$%
\footnote{%
The assumption of the lemma is adapted from \cite{newey1994} and is not
necessarily the weakest possible.}

\label{Assumption_T2}

\begin{enumerate}[(a)]%

\item[(a)] (i) $W$ is absolutely continuous with density $f_{W}(\cdot )$
supported on $[w_{1l},w_{1u}]\times \cdots \times \lbrack
w_{d_{W}l},w_{d_{W}u}];\footnote{%
The density (Radon-Nikodym derivative) is with respect to the product of the
Lebesgue measure on $\mathbb{R}$ and another measure, which could be the
Lebesgue measure, counting measure, or a product of Lebesgue and counting
measures. The counting measures may assign a nonzero measure to singletons,
allowing for discrete variables.}$ (ii) for $w=\left( w_{1},\mathcal{\ldots }%
,w_{d_{W}}\right) ,$ $f_{W}(w)$ is bounded below by $C\times \Pi
_{j=1}^{d_{W}}(w_{j}-w_{jl})^{\varkappa }(w_{ju}-w_{j})^{\varkappa }$ for
some $C>0,\varkappa >0;$ (iii) $\int_{\mathcal{W}}\sup_{\theta \in \Theta
_{0}}\left\vert \frac{\partial }{\partial \theta }f_{W}(w;\theta
)\right\vert dw<\infty ;$ (iv) $E\{[\frac{\partial }{\partial z_{1}}\log
f_{Z_{1}|W_{-1}}(z_{1}|w_{-1};\theta )]^{2}\}<\infty $ for $\theta \in
\Theta _{0};$

\item[(b)] there is a constant $C$ such that $\left\vert \partial ^{\ell
}m_{0}(y_{\tau },\tilde{w})/\partial z_{1}^{\ell }\right\vert \leq C^{\ell }$
for all $\ell \in \mathbb{N};$

\item[(c)] the number of series terms, $J,$ satisfies $J(n)=O(n^{\varpi })$
for some $\varpi >0$, and $J^{7+2\varkappa }=O(n);$

\item[(d)] for each $w_{-1}\in \mathcal{W}_{-1}$ and $\theta \in \Theta
_{0}, $ the map $(y,z_{1})\mapsto m_{\theta }(y,\tilde{w})$ is
differentiable, and 
\begin{eqnarray*}
&(i)\text{ }E\left[ \sup_{\theta \in \Theta _{0}}\left\vert \frac{\partial }{%
\partial \theta }\frac{\partial m_{\theta }(y_{\tau },\tilde{W})}{\partial
z_{1}}\right\vert \right] <&\infty , \\
&(ii)\text{ }E\left[ \sup_{y\in \mathcal{Y}}\left\vert \frac{\partial
^{2}m_{0}(y,\tilde{W})}{\partial y\partial z_{1}}\right\vert \right]
<&\infty ;
\end{eqnarray*}

\item[(e)] the following stochastic equicontinuity condition holds:%
\begin{equation*}
\sqrt{n}\left( \mathbb{P}_{n}-\mathbb{P}\right) \left[ \frac{\partial \hat{m}%
(y_{\tau },\tilde{W}_{i})}{\partial z_{1}}-\frac{\partial m_{0}(y_{\tau },%
\tilde{W}_{i})}{\partial z_{1}}\right] =o_{p}(1).
\end{equation*}

\item[(f)] Assumption \ref{Assumption_ignore_P_error} holds.

\end{enumerate}%
\end{assumption}

\subsection{Proof of Lemmas and Proposition \protect\ref%
{Prop_MTE_identification}}

\begin{proof}[Proof of Lemma \protect\ref{Lemma_SZ}]
By definition,%
\begin{equation*}
P_{\delta }\left( w\right) =\Pr \left[ D_{\delta }=1|Z=z,X=x\right]
=F_{V|W}\left( \mu (\mathcal{G}(w,\delta ),x)|w\right) .
\end{equation*}%
So 
\begin{equation*}
\frac{\partial P_{\delta }\left( w\right) }{\partial \delta }=f_{V|W}\left(
\mu (\mathcal{G}(w,\delta ),x)|w\right) \mu _{z_{1}}^{\prime }\left( 
\mathcal{G}(w,\delta ),x\right) g\left( w\right) \frac{\partial s\left(
\delta \right) }{\partial \delta }
\end{equation*}%
and 
\begin{equation*}
\dot{P}\left( w\right) =\left. \frac{\partial P_{\delta }(w)}{\partial
\delta }\right\vert _{\delta =0}=f_{V|W}\left( \mu (w)|w\right) \mu
_{z_{1}}^{\prime }\left( w\right) g\left( w\right) \left. \frac{\partial
s\left( \delta \right) }{\partial \delta }\right\vert _{\delta =0}.
\end{equation*}%
Hence,%
\begin{equation*}
\mathcal{\dot{P}}\left( w\right) =\frac{\dot{P}\left( w\right) }{E\left[ 
\dot{P}\left( W\right) \right] }=\frac{f_{V|W}\left( \mu (w)|w\right) \mu
_{z_{1}}^{\prime }\left( w\right) g\left( w\right) }{E\left[ f_{V|W}\left(
\mu (W)|W\right) \mu _{z_{1}}^{\prime }\left( W\right) g\left( W\right) %
\right] }.
\end{equation*}
\end{proof}

\begin{proof}[Proof of Proposition \protect\ref{Prop_MTE_identification}]
The first part of the Proposition, the identification of $\widetilde{\mathrm{%
MTE}}_{\tau }\left( u,w_{-1}\right) ,$ can be obtained by a slight change in
the proof given in the seminal paper \cite{Heckman2001} which focuses on the
case of the mean. The general case is treated in Theorem 1 in \cite%
{Carneiro2009}. See also the working paper \cite{sun2021}, which gives a
complete and self-contained proof.

Now we turn to the identification of the weighting function $\mathcal{\dot{P}%
}\left( w\right) $ given in Corollary \ref{Corollary_UQTE_w}. Under
Assumption \ref{Assumption_heckman}(\ref{exogeneity}), the propensity score
becomes%
\begin{equation*}
P(w)=\Pr \left[ V\leq \mu \left( W\right) |W=w\right] =F_{V|W}(\mu
(z,x)|z,x)=F_{V|W_{-1}}(\mu (z,x)|w_{-1}).
\end{equation*}%
Therefore, 
\begin{equation*}
\frac{\partial P(w)}{\partial z_{1}}=f_{V|W_{-1}}(\mu (z,x)|w_{-1})\mu
_{z_{1}}^{\prime }\left( z,x\right) =f_{V|W}(\mu (w)|w)\mu _{z_{1}}^{\prime
}\left( w\right) .
\end{equation*}%
It is now clear that $\mathcal{\dot{P}}\left( w\right) $ can be represented
using $\frac{\partial P(w)}{\partial z_{1}}$ and $g\left( w\right) ,$ as
given in (\ref{id_prop_weight}).
\end{proof}

\begin{lemma}
\label{quantile_an} If $f_{Y}\left( y_{\tau }\right) >0,$ then $\hat{y}%
_{\tau }-y_{\tau }=\mathbb{P}_{n}\psi _{Q}(Y,y_{\tau })+o_{p}(n^{-1/2})$ 
where 
\begin{equation*}
\psi _{Q}(Y,y_{\tau }):=\frac{\tau -\mathds{1}\left\{ Y\leq y_{\tau
}\right\} }{f_{Y}(y_{\tau })}.
\end{equation*}
\end{lemma}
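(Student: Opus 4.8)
The plan is to establish this as a Bahadur-type linearization, obtained by controlling the oscillation of the empirical process near $y_{\tau }$. First I would rewrite the target in a more convenient form. Since $\tau =F_{Y}(y_{\tau })$ by definition of the population quantile, and since
\begin{equation*}
\mathbb{P}_{n}\psi _{Q}(y_{\tau })=\frac{\tau -\mathbb{F}_{n}(y_{\tau })}{f_{Y}(y_{\tau })}=\frac{F_{Y}(y_{\tau })-\mathbb{F}_{n}(y_{\tau })}{f_{Y}(y_{\tau })},
\end{equation*}
the claim reduces to showing
\begin{equation*}
f_{Y}(y_{\tau })(\hat{y}_{\tau }-y_{\tau })=F_{Y}(y_{\tau })-\mathbb{F}_{n}(y_{\tau })+o_{p}(n^{-1/2}).
\end{equation*}

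Second, I would record two elementary facts. From the definition $\hat{y}_{\tau }=\inf \{y:\mathbb{F}_{n}(y)\geq \tau \}$ together with right-continuity of $\mathbb{F}_{n}$, we have $\mathbb{F}_{n}(\hat{y}_{\tau })\geq \tau $ while the left limit is at most $\tau $; since $\mathbb{F}_{n}$ jumps by at most $1/n$ at each atom, this gives $|\mathbb{F}_{n}(\hat{y}_{\tau })-\tau |\leq 1/n=o_{p}(n^{-1/2})$. I would also establish $\sqrt{n}$-consistency, $\hat{y}_{\tau }-y_{\tau }=O_{p}(n^{-1/2})$, which follows from applying the central limit theorem to $\mathbb{F}_{n}$ at the fixed points $y_{\tau }\pm cn^{-1/2}$ and inverting the monotone relation, using the positivity of $f_{Y}$ at $y_{\tau }$.

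Third, and this is the crux, I would invoke the asymptotic equicontinuity of the normalized empirical process $\mathbb{G}_{n}(y):=\sqrt{n}(\mathbb{F}_{n}(y)-F_{Y}(y))$. Because the class of indicators $\{\mathds{1}\{\cdot \leq y\}:y\in \mathbb{R}\}$ is Donsker, for any sequence with $\hat{y}_{\tau }-y_{\tau }=O_{p}(n^{-1/2})$ we have
\begin{equation*}
\mathbb{G}_{n}(\hat{y}_{\tau })-\mathbb{G}_{n}(y_{\tau })=o_{p}(1),
\end{equation*}
equivalently $\sqrt{n}\big[(\mathbb{F}_{n}(\hat{y}_{\tau })-\mathbb{F}_{n}(y_{\tau }))-(F_{Y}(\hat{y}_{\tau })-F_{Y}(y_{\tau }))\big]=o_{p}(1)$. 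This is the step that produces the $o_{p}(n^{-1/2})$ remainder and is the main obstacle: it requires an oscillation (maximal) inequality for the empirical process over the shrinking interval $[y_{\tau }-cn^{-1/2},y_{\tau }+cn^{-1/2}]$, whose $F_{Y}$-probability tends to zero. This is precisely the argument carried out in Section 2.5.1 of \cite{serfling1980}, which I would cite for the explicit remainder rate.

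Finally I would assemble the pieces. Combining the jump bound with the oscillation result,
\begin{equation*}
\tau -\mathbb{F}_{n}(y_{\tau })=\big[\mathbb{F}_{n}(\hat{y}_{\tau })-\mathbb{F}_{n}(y_{\tau })\big]+o_{p}(n^{-1/2})=\big[F_{Y}(\hat{y}_{\tau })-F_{Y}(y_{\tau })\big]+o_{p}(n^{-1/2}).
\end{equation*}
Since $f_{Y}$ is continuous at $y_{\tau }$, a first-order Taylor expansion gives $F_{Y}(\hat{y}_{\tau })-F_{Y}(y_{\tau })=f_{Y}(y_{\tau })(\hat{y}_{\tau }-y_{\tau })+o(|\hat{y}_{\tau }-y_{\tau }|)$, and the last term is $o_{p}(n^{-1/2})$ by $\sqrt{n}$-consistency. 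Rearranging yields $f_{Y}(y_{\tau })(\hat{y}_{\tau }-y_{\tau })=F_{Y}(y_{\tau })-\mathbb{F}_{n}(y_{\tau })+o_{p}(n^{-1/2})$, which is the desired representation after dividing by $f_{Y}(y_{\tau })>0$.
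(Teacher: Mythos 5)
Your proposal is correct and is in substance the same argument the paper relies on: the paper does not prove this lemma itself but cites Section 2.5.1 of \cite{serfling1980}, and your consistency-plus-oscillation-plus-Taylor decomposition is exactly the standard Bahadur--Ghosh linearization behind that citation, phrased in modern Donsker/asymptotic-equicontinuity language. The only difference is cosmetic: the cited result gives a sharper almost-sure remainder rate (as the paper's footnote notes), whereas your in-probability $o_{p}(n^{-1/2})$ remainder is all the lemma asserts and needs.
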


\begin{proof}[Proof of Lemma \protect\ref{quantile_an}]
This is Corollary 21.5 of \cite{vaart_1998} and a proof can be found there.
\end{proof}

\begin{lemma}
\label{two_step_density} Let Assumptions \ref{Assumption_Kernel} and \ref%
{Assumption_rate} hold. Then, for any fixed $y\in \mathcal{Y},$ 
\begin{equation*}
\hat{f}_{Y}(y)-f_{Y}(y)=\mathbb{P}_{n}\psi _{f_{Y}}\left( Y,y\right)
+B_{f_{Y}}(y)+o_{p}(h^{2}),
\end{equation*}%
where 
\begin{equation*}
\psi _{f_{Y}}\left( Y,y\right) :=K_{h}\left( Y-y\right) -E\left[ K_{h}\left(
Y-y\right) \right] =O_{p}(n^{-1/2}h^{-1/2})
\end{equation*}%
and 
\begin{equation*}
B_{f_{Y}}(y)=\frac{1}{2}h^{2}f_{Y}^{\prime \prime 2}(y)\int_{-\infty
}^{\infty }u^{2}K(u)du.
\end{equation*}%
Furthermore, for the quantile estimator $\hat{y}_{\tau }$ of $y_{\tau }$
that satisfies Lemma \ref{quantile_an}, we have 
\begin{equation*}
\hat{f}_{Y}(\hat{y}_{\tau })-\hat{f}_{Y}(y_{\tau })={f}_{Y}(\hat{y}_{\tau })-%
{f}_{Y}(y_{\tau })+R_{f_{Y}}=f_{Y}^{\prime }(y_{\tau })\mathbb{P}_{n}\psi
_{Q}(Y,y_{\tau })+R_{f_{Y}},
\end{equation*}%
where $R_{f_{Y}}=o_{p}(n^{-1/2}h^{-1/2})$ and so 
\begin{equation*}
\hat{f}_{Y}(\hat{y}_{\tau })-f_{Y}(y_{\tau })=\mathbb{P}_{n}\psi
_{f_{Y}}\left( Y,y_{\tau }\right) +B_{f_{Y}}(y_{\tau })+f_{Y}^{\prime
}(y_{\tau })\mathbb{P}_{n}\psi _{Q}(Y,y_{\tau })+o_{p}\left(
n^{-1/2}h^{-1/2}\right) .
\end{equation*}
\end{lemma}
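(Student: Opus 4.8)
The plan is to prove the two displays in turn, treating the first (the pointwise expansion of $\hat f_Y$) by the standard bias--variance decomposition and reserving the real work for the second (the behavior of $\hat f_Y$ differenced between the random argument $\hat y_\tau$ and the fixed $y_\tau$). First I would write $\hat f_Y(y)-f_Y(y)=[\hat f_Y(y)-E\hat f_Y(y)]+[E\hat f_Y(y)-f_Y(y)]$. The centered term is exactly $\mathbb{P}_n\psi_{f_Y}(y)$, and since $\mathrm{Var}(\mathbb{P}_n\psi_{f_Y}(y))=n^{-1}\mathrm{Var}(K_h(Y-y))=O(n^{-1}h^{-1})$ by the usual change of variables $u=(Y-y)/h$, it is $O_p(n^{-1/2}h^{-1/2})$. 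For the deterministic bias I would write
\begin{equation*}
E\hat f_Y(y)-f_Y(y)=\int K(v)\bigl[f_Y(y+hv)-f_Y(y)\bigr]dv,
\end{equation*}
Taylor-expand $f_Y(y+hv)$ to second order, kill the first-order term by the symmetry $K(u)=K(-u)$ in Assumption \ref{Assumption_Kernel}, and collect $\tfrac12 h^2 f_Y''(y)\int v^2K(v)\,dv=B_{f_Y}(y)$, with the remainder $o(h^2)$ controlled by $\int v^2 K<\infty$ and continuity of $f_Y''$. This gives the first display.

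For the second display, set $G_n(y):=\hat f_Y(y)-f_Y(y)=\mathbb{P}_n\psi_{f_Y}(y)+b(y)$ with $b(y):=E\hat f_Y(y)-f_Y(y)$, so that the first claimed equality holds with
\begin{equation*}
R_{f_Y}=G_n(\hat y_\tau)-G_n(y_\tau)=\bigl[\mathbb{P}_n\psi_{f_Y}(\hat y_\tau)-\mathbb{P}_n\psi_{f_Y}(y_\tau)\bigr]+\bigl[b(\hat y_\tau)-b(y_\tau)\bigr].
\end{equation*}
The bias-difference piece is easy: since $b(y)=\tfrac12 h^2 f_Y''(y)\int v^2K+o(h^2)$ and $\hat y_\tau-y_\tau=O_p(n^{-1/2})$ by Lemma \ref{quantile_an}, it is $O_p(h^2 n^{-1/2})=o_p(n^{-1/2}h^{-1/2})$ because $h^2\ll h^{-1/2}$. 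For the second (Bahadur) equality I would Taylor-expand the deterministic map $y\mapsto f_Y(y)$, writing $f_Y(\hat y_\tau)-f_Y(y_\tau)=f_Y'(y_\tau)(\hat y_\tau-y_\tau)+O_p((\hat y_\tau-y_\tau)^2)$, then substitute $\hat y_\tau-y_\tau=\mathbb{P}_n\psi_Q(y_\tau)+o_p(n^{-1/2})$ from Lemma \ref{quantile_an}; the leading term is $f_Y'(y_\tau)\mathbb{P}_n\psi_Q(y_\tau)$ and the residuals $f_Y'(y_\tau)o_p(n^{-1/2})+O_p(n^{-1})$ are both $o_p(n^{-1/2}h^{-1/2})$.

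The hard part will be the stochastic-difference piece $\mathbb{P}_n\psi_{f_Y}(\hat y_\tau)-\mathbb{P}_n\psi_{f_Y}(y_\tau)$, i.e. a stochastic-equicontinuity/empirical-process term in which the centered kernel average is evaluated at the random point $\hat y_\tau$. I would expand in $y$, the leading contribution being $\partial_y\bigl[\hat f_Y(y)-E\hat f_Y(y)\bigr]\big|_{y_\tau}\cdot(\hat y_\tau-y_\tau)$; the centered derivative brings in $h^{-2}K'((Y-y_\tau)/h)$, whose empirical average has variance $O(n^{-1}h^{-3})$ and is therefore $O_p(n^{-1/2}h^{-3/2})$. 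Multiplying by $\hat y_\tau-y_\tau=O_p(n^{-1/2})$ yields $O_p(n^{-1}h^{-3/2})$, which is $o_p(n^{-1/2}h^{-1/2})$ precisely because the Rate Assumption \ref{Assumption_rate} forces $nh^{3}\uparrow\infty$ — this is exactly where the non-standard bandwidth condition is needed, as anticipated in the text preceding the lemma. The remaining second-order term $\tfrac12\,\partial_y^2[\hat f_Y-E\hat f_Y](\bar y)(\hat y_\tau-y_\tau)^2$, evaluated at an intermediate point $\bar y$ between $\hat y_\tau$ and $y_\tau$, requires a uniform bound $\sup_{|y-y_\tau|\le C n^{-1/2}}\bigl|\partial_y^2[\hat f_Y-E\hat f_Y](y)\bigr|=O_p(n^{-1/2}h^{-5/2})$; this is where the twice-differentiability of $K$ with Lipschitz $K''$ in Assumption \ref{Assumption_Kernel} enters, supplying the entropy/maximal-inequality control needed to localize the empirical process, after which $(\hat y_\tau-y_\tau)^2\cdot O_p(n^{-1/2}h^{-5/2})=O_p(n^{-3/2}h^{-5/2})=o_p(n^{-1/2}h^{-1/2})$, again using $nh^3\uparrow\infty$. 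Combining the three estimates gives $R_{f_Y}=o_p(n^{-1/2}h^{-1/2})$ and completes the lemma.
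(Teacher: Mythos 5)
Your proof of the first display is the same as the paper's (bias--variance split, symmetry of $K$ killing the first-order bias term), so the interest is in the second display, where your route genuinely differs. You center first, writing $\hat f_Y-f_Y=\mathbb{P}_n\psi_{f_Y}+b$ with $b=E\hat f_Y-f_Y$, and then Taylor-expand the centered process and the bias separately around $y_\tau$. The paper instead expands the raw estimator, $\hat f_Y(\hat y_\tau)-\hat f_Y(y_\tau)=\hat f_Y'(y_\tau)(\hat y_\tau-y_\tau)+\tfrac12\hat f_Y''(\tilde y_\tau)(\hat y_\tau-y_\tau)^2$, and then swaps $\hat f_Y'(y_\tau)$ for $f_Y'(y_\tau)$ using the standard derivative-estimation rate $\hat f_Y'-f_Y'=O_p((nh^3)^{-1/2}+h^2)$ imported from Pagan--Ullah. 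Your centering buys smaller stochastic orders: your first-order term is $O_p(n^{-1}h^{-3/2})$ and your second-order term is (pointwise) $O_p(n^{-3/2}h^{-5/2})$, so your argument in fact only needs $nh^2\uparrow\infty$, whereas the paper's uncentered second-order term $\hat f_Y''(\tilde y_\tau)(\hat y_\tau-y_\tau)^2=O_p(h^{-2})\,O_p(n^{-1})$ is exactly what forces $nh^3\uparrow\infty$. (Relatedly, your attribution of the $nh^3$ condition to the first-order term is off: $\sqrt{nh}\cdot O_p(n^{-1}h^{-3/2})=O_p(n^{-1/2}h^{-1})$ already vanishes under $nh^2\uparrow\infty$.)

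The one thin spot in your sketch is precisely where the paper does its real work: controlling the second derivative at the random intermediate point. You assert $\sup_{|y-y_\tau|\le Cn^{-1/2}}\bigl|\partial_y^2[\hat f_Y-E\hat f_Y](y)\bigr|=O_p(n^{-1/2}h^{-5/2})$ via an unspecified "entropy/maximal-inequality" argument. A variance computation gives this rate at a fixed $y$, but turning it into a supremum requires a discretization at scale $n^{-1/2}h^{3/2}$ (using Lipschitz $K''$) plus a maximal inequality over $O(h^{-3/2})$ net points, which delivers the bound only up to a $\sqrt{\log(1/h)}$ factor; this is harmless here (you only need $o_p(n^{1/2}h^{-1/2})$, and $nh^3\uparrow\infty$ absorbs the log), but as stated the claim is optimistic and unproven. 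The paper avoids empirical-process machinery entirely: it bounds $h^2\lvert\hat f_Y''(\tilde y_\tau)-\hat f_Y''(y_\tau)\rvert$ by an elementary two-case argument --- Lipschitz continuity of $K''$ when $\sqrt n\,\lvert\tilde y_\tau-y_\tau\rvert<\sqrt h$, and the quadratic-growth condition $\lvert K''(u_1)-K''(u_2)\rvert\le C_2\lvert u_1-u_2\rvert^2$ for well-separated arguments otherwise --- which is exactly why Assumption \ref{Assumption_Kernel} imposes that second, odd-looking condition on $K''$; your route never uses it. Two minor points: your $O_p(h^2n^{-1/2})$ bound for $b(\hat y_\tau)-b(y_\tau)$ implicitly requires smoothness of $f_Y''$, though mere continuity gives $o_p(h^2)$, which suffices since $nh^5=O(1)$; and your treatment of $f_Y(\hat y_\tau)-f_Y(y_\tau)$ via the Bahadur representation matches the paper's and is correct.
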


\begin{proof}[Proof of Lemma \protect\ref{two_step_density}]
We have 
\begin{align*}
\hat{f}_{Y}(y)-f_{Y}(y)& =\frac{1}{nh}\sum_{i=1}^{n}K\left( \frac{Y_{i}-y}{h}%
\right) -f_{Y}(y) \\
& =\frac{1}{n}\sum_{i=1}^{n}\frac{1}{h}K\left( \frac{Y_{i}-y}{h}\right) -E%
\hat{f}_{Y}(y)+E\hat{f}_{Y}(y)-f_{Y}(y) \\
& =\frac{1}{n}\sum_{i=1}^{n}\frac{1}{h}K\left( \frac{Y_{i}-y}{h}\right) -E%
\hat{f}_{Y}(y)+B_{f_{Y}}(y)+o_{p}(h^{2}),
\end{align*}%
where 
\begin{equation*}
B_{f_{Y}}(y)=\frac{1}{2}h^{2}f_{Y}^{\prime \prime 2}(y)\int_{-\infty
}^{\infty }u^{2}K(u)du.
\end{equation*}%
We write this concisely as 
\begin{equation*}
\hat{f}_{Y}(y)-f_{Y}(y)=\frac{1}{n}\sum_{i=1}^{n}\psi
_{f_{Y}}(Y_{i},y)+B_{f_{Y}}(y)+o_{p}(h^{2}),
\end{equation*}%
where 
\begin{equation*}
\psi _{f_{Y}}(Y_{i},y):=\frac{1}{h}K\left( \frac{Y_{i}-y}{h}\right) -E\frac{1%
}{h}K\left( \frac{Y_{i}-y}{h}\right) =O_{p}(n^{-1/2}h^{-1/2}).
\end{equation*}%
This completes the proof of the first result.

We now prove the second result of the lemma. Since $K(u)$ is twice
continuously differentiable, we use a Taylor expansion to obtain 
\begin{equation}
\hat{f}_{Y}(\hat{y}_{\tau })-\hat{f}_{Y}(y_{\tau })=\hat{f}_{Y}^{\prime
}(y_{\tau })(\hat{y}_{\tau }-y_{\tau })+\frac{1}{2}\hat{f}^{\prime \prime
}\left( \tilde{y}_{\tau }\right) (\hat{y}_{\tau }-y_{\tau })^{2}
\label{diff_f_hat}
\end{equation}
for some $\tilde{y}_{\tau }$ between $\hat{y}_{\tau }$ and $y_{\tau }.$ The
first and second derivatives are 
\begin{equation*}
\hat{f}_{Y}^{\prime }(y)=-\frac{1}{nh^{2}}\sum_{i=1}^{n}K^{\prime }\left( 
\frac{Y_{i}-y}{h}\right) ,\text{ }\hat{f}_{Y}^{\prime \prime }(y)=\frac{1}{
nh^{3}}\sum_{i=1}^{n}K^{\prime \prime }\left( \frac{Y_{i}-y}{h}\right) .
\end{equation*}
To find the order of $\hat{f}_{Y}^{\prime \prime }(y),$ we calculate its
mean and variance. We have 
\begin{eqnarray*}
E\left[ \hat{f}_{Y}^{\prime \prime }(y)\right] &=&\frac{1}{nh^{3}}
\sum_{i=1}^{n}K^{\prime \prime }\left( \frac{Y_{i}-y}{h}\right) =O\left( 
\frac{1}{h^{2}}\right) , \\
var\left[ \hat{f}_{Y}^{\prime \prime }(y)\right] &\leq &\frac{n}{\left(
nh^{3}\right) ^{2}}E\left[ K^{\prime \prime }\left( \frac{Y_{i}-y}{h}\right) %
\right] ^{2}=O\left( \frac{1}{nh^{5}}\right) .
\end{eqnarray*}
Therefore, when $nh^{3}\rightarrow \infty ,$ 
\begin{equation*}
\hat{f}_{Y}^{\prime \prime }(y)=O_{p}\left( h^{-2}\right) ,
\end{equation*}
for any $y.$ That is, for any $\epsilon >0,$ there exists an $M$ $>0$ such
that 
\begin{equation*}
\Pr \left[ h^{2}\left\vert \hat{f}_{Y}^{\prime \prime }(y_{\tau
})\right\vert >\frac{M}{2}\right] <\frac{\epsilon }{2}
\end{equation*}
when $n$ is large enough.

Suppose we choose $M$ so large that we also have 
\begin{equation*}
\Pr \left[ \sqrt{n}\left\vert \tilde{y}_{\tau }-y_{\tau }\right\vert >M%
\right] <\frac{\epsilon }{2}
\end{equation*}
when $n$ is large enough. Then, when $n$ is large enough, 
\begin{eqnarray}
\Pr \left[ h^{2}\hat{f}_{Y}^{\prime \prime }(\tilde{y}_{\tau })>M\right]
&\leq &\Pr \left[ h^{2}\left\vert \hat{f}_{Y}^{\prime \prime }(\tilde{y}
_{\tau })-\hat{f}_{Y}^{\prime \prime }(y_{\tau })\right\vert >\frac{M}{2} %
\right] +\Pr \left[ h^{2}\left\vert \hat{f}_{Y}^{\prime \prime }(y_{\tau
})\right\vert >\frac{M}{2}\right]  \notag \\
&\leq &\Pr \left[ h^{2}\left[ \hat{f}_{Y}^{\prime \prime }(\tilde{y}_{\tau
})-\hat{f}_{Y}^{\prime \prime }(y_{\tau })\right] >\frac{M}{2}\right] +\frac{
\epsilon }{2}  \notag \\
&\leq &\Pr \left[ h^{2}\left[ \hat{f}_{Y}^{\prime \prime }(\tilde{y}_{\tau
})-\hat{f}_{Y}^{\prime \prime }(y_{\tau })\right] >\frac{M}{2},\sqrt{n}
\left\vert \tilde{y}_{\tau }-y_{\tau }\right\vert <M\right] +\epsilon .
\label{order_f_hat_second_der}
\end{eqnarray}
We split $\sqrt{n}\left\vert \tilde{y}_{\tau }-y_{\tau }\right\vert <M$ into
two cases. When $\sqrt{n}\left\vert \tilde{y}_{\tau }-y_{\tau }\right\vert < 
\sqrt{h},$ we have 
\begin{eqnarray*}
h^{2}\left\vert \hat{f}_{Y}^{\prime \prime }(\tilde{y}_{\tau })-\hat{f}
_{Y}^{\prime \prime }(y_{\tau })\right\vert &\leq &\frac{1}{nh}
\sum_{i=1}^{n}\left\vert K^{\prime \prime }\left( \frac{Y_{i}-y_{\tau }}{h}
\right) -K^{\prime \prime }\left( \frac{Y_{i}-\tilde{y}_{\tau }}{h}\right)
\right\vert \\
&\leq &\frac{1}{nh}\cdot n\cdot L_{K}\frac{1}{h}\left\vert y_{\tau }-\tilde{%
y }_{\tau }\right\vert \leq L_{K}\cdot \frac{1}{h^{2}}\frac{\sqrt{h}}{\sqrt{n%
}} =L_{K}\cdot \frac{1}{\sqrt{nh^{3}}}
\end{eqnarray*}
by the Lipschitz continuity of $K^{\prime \prime }\left( \cdot \right) $
with Lipschitz constant $L_{K}.$ When $\sqrt{h}\leq \sqrt{n}\left\vert 
\tilde{y}_{\tau }-y_{\tau }\right\vert <M,$ we have 
\begin{equation*}
\left\vert \frac{Y_{i}-y_{\tau }}{h}-\frac{Y_{i}-\tilde{y}_{\tau }}{h}
\right\vert =\frac{\sqrt{n}\left\vert \tilde{y}_{\tau }-y_{\tau }\right\vert 
}{h}>\frac{1}{\sqrt{h}}\rightarrow \infty .
\end{equation*}
Using the second condition on $K^{\prime \prime }\left( \cdot \right) ,$ we
have, for $\sqrt{h}\leq \sqrt{n}\left\vert \tilde{y}_{\tau }-y_{\tau
}\right\vert <M,$ 
\begin{equation*}
\left\vert K^{\prime \prime }\left( \frac{Y_{i}-y_{\tau }}{h}\right)
-K^{\prime \prime }\left( \frac{Y_{i}-\tilde{y}_{\tau }}{h}\right)
\right\vert \leq C_{2}\frac{\left( \tilde{y}_{\tau }-y_{\tau }\right) ^{2}}{
h^{2}}\leq C_{2}\frac{M^{2}}{nh^{2}},
\end{equation*}
and 
\begin{eqnarray*}
h^{2}\left\vert \hat{f}_{Y}^{\prime \prime }(\tilde{y}_{\tau })-\hat{f}
_{Y}^{\prime \prime }(y_{\tau })\right\vert &\leq &\frac{1}{nh}
\sum_{i=1}^{n}\left\vert K^{\prime \prime }\left( \frac{Y_{i}-y_{\tau }}{h}
\right) -K^{\prime \prime }\left( \frac{Y_{i}-\tilde{y}_{\tau }}{h}\right)
\right\vert \\
&\leq &C_{2}\frac{M^{2}}{nh^{3}}=O_{p}\left( \frac{1}{\sqrt{nh^{3}}}\right) .
\end{eqnarray*}
Hence, in both cases, $h^{2}\left\vert \hat{f}_{Y}^{\prime \prime }(\tilde{y}
_{\tau })-\hat{f}_{Y}^{\prime \prime }(y_{\tau })\right\vert =O_{p}\left(
n^{-1/2}h^{-3/2}\right) .$ As a result, 
\begin{equation*}
\Pr \left[ h^{2}\left[ \hat{f}_{Y}^{\prime \prime }(\tilde{y}_{\tau })-\hat{%
f }_{Y}^{\prime \prime }(y_{\tau })\right] >\frac{M}{2},\sqrt{n}\left\vert 
\tilde{y}_{\tau }-y_{\tau }\right\vert <M\right] \rightarrow 0.
\end{equation*}
Combining this with (\ref{order_f_hat_second_der}), we obtain 
\begin{equation*}
h^{2}\hat{f}_{Y}^{\prime \prime }(\tilde{y}_{\tau })=O_{p}\left( 1\right) 
\text{ and }\hat{f}_{Y}^{\prime \prime }(\tilde{y}_{\tau })(\hat{y}_{\tau
}-y_{\tau })^{2}=O_{p}\left( n^{-1}h^{-2}\right) .
\end{equation*}
In view of (\ref{diff_f_hat}), we then have 
\begin{equation*}
\hat{f}_{Y}(\hat{y}_{\tau })-\hat{f}_{Y}(y_{\tau })=\hat{f}_{Y}^{\prime
}(y_{\tau })(\hat{y}_{\tau }-y_{\tau })+O_{p}(n^{-1}h^{-2}).
\end{equation*}

Now, using Lemma \ref{quantile_an}, we can write 
\begin{eqnarray*}
\hat{f}_{Y}(\hat{y}_{\tau })-\hat{f}_{Y}(y_{\tau }) &=&f_{Y}^{\prime
}(y_{\tau })(\hat{y}_{\tau }-y_{\tau })+\left[ \hat{f}_{Y}^{\prime }(y_{\tau
})-f_{Y}^{\prime }(y_{\tau })\right] (\hat{y}_{\tau }-y_{\tau
})+O_{p}(n^{-1}h^{-2}) \\
&=&f_{Y}^{\prime }(y_{\tau })\frac{1}{n}\sum_{i=1}^{n}\psi
_{Q}(Y_{i},y_{\tau })+R_{f_{Y}},
\end{eqnarray*}%
where 
\begin{equation*}
R_{f_{Y}}:=\left[ \hat{f}_{Y}^{\prime }(y_{\tau })-f_{Y}^{\prime }(y_{\tau })%
\right] \left[ \hat{y}_{\tau }-y_{\tau }\right]
+o_{p}(n^{-1/2})+O_{p}(n^{-1}h^{-2})
\end{equation*}%
and the $o_{p}(n^{-1/2})$ term is the error of the linear asymptotic
representation of $\hat{y}_{\tau }-y_{\tau }.$

In order to obtain the order of $R_{f_{Y}}$, we use the following results: 
\begin{eqnarray*}
\hat{f}_{Y}^{\prime }(y_{\tau }) &=&f_{Y}^{\prime }(y_{\tau })+O_{p}\left( 
\frac{1}{\sqrt{nh^{3}}}+h^{2}\right) , \\
\hat{y}_{\tau } &=&y_{\tau }+O_{p}\left( \frac{1}{\sqrt{n}}\right) .
\end{eqnarray*}%
The rate of convergence of $\hat{f}_{Y}^{\prime }(y_{\tau })$ can be found
on page 56 of \citesupp{Ullah1999}. Therefore, 
\begin{eqnarray*}
R_{f_{Y}} &=&o_{p}(n^{-1/2})+O_{p}\left( \frac{1}{\sqrt{nh^{3}}}%
+h^{2}\right) O_{p}\left( \frac{1}{\sqrt{n}}\right) +O_{p}(n^{-1}h^{-2}). \\
&=&o_{p}(n^{-1/2})+O_{p}\left( n^{-1}h^{-3/2}\right) +O_{p}\left(
n^{-1/2}h^{2}\right) +O_{p}(n^{-1}h^{-2}). \\
&=&o_{p}(n^{-1/2})+O_{p}\left( n^{-1}h^{-3/2}\right) +O_{p}(n^{-1}h^{-2}),
\end{eqnarray*}%
because $h\downarrow 0$ under Assumption \ref{Assumption_rate} and so $%
O_{p}\left( n^{-1/2}h^{2}\right) =o_{p}(n^{-1/2})$.

We need to show that $\sqrt{nh}R_{f_{Y}}=o_{p}(1)$. We do this term by term.
First, 
\begin{equation*}
\sqrt{nh}\times o_{p}(n^{-1/2})=o_{p}(h^{1/2})=o_{p}(1)
\end{equation*}%
because $h\downarrow 0.$ Second, 
\begin{equation*}
\sqrt{nh}\times O_{p}\left( n^{-1}h^{-3/2}\right) =O_{p}\left(
n^{-1/2}h^{-1}\right) =o_{p}(1)
\end{equation*}%
as long as $nh^{2}\uparrow \infty ,$ which is guaranteed by Assumption \ref%
{Assumption_rate}, since it is implied by $nh^{3}\uparrow \infty $. Finally, 
\begin{equation*}
\sqrt{nh}\times O_{p}(n^{-1}h^{-2})=O_{p}(n^{-1/2}h^{-3/2})=o_{p}(1)
\end{equation*}%
since by Assumption \ref{Assumption_rate} $nh^{3}\uparrow \infty $.
Therefore, $\sqrt{nh}R_{f_{Y}}=o_{p}(1)$.

The last result in the lemma follows from the first two results in the lemma
and the rate condition $h^{2}=O(n^{-1/2}h^{-1/2})$ under Assumption \ref%
{Assumption_rate}.
\end{proof}

\begin{lemma}
\label{ps_estimation_param} Suppose that

\begin{enumerate}
\item[(a)] $\hat{\alpha}$ admits the representation $\hat{\alpha}-\alpha
_{0}=\mathbb{P}_{n}\psi _{\alpha _{0}}(D,W)+o_{p}(n^{-1/2}),$ 
where $\psi _{\alpha _{0}}(D,W)$ is a mean-zero $d_{\alpha }\times 1$ random
vector with $E(\Vert \psi _{\alpha _{0}}(D,W)\Vert ^{2})<\infty $, and $%
\Vert \cdot \Vert $ denotes the Euclidean norm;

\item[(b)] the variance of $\frac{\partial P(Z,X,\alpha _{0})}{\partial z_{1}%
}:=\frac{\partial P(z,x,\alpha _{0})}{\partial z_{1}}\big |_{(z,x)=(Z,X)}$
is finite;

\item[(c)] the $d_{\alpha }\times 1$ derivative vector $\frac{\partial
^{2}P(z,x,\alpha )}{\partial \alpha \partial z_{1}}$ exists for all $z$ and $%
x$ and for $\alpha $ in an open neighborhood $\mathcal{A}_{0}$ around $%
\alpha _{0};$

\item[(d)] for $\alpha \in \mathcal{A}_{0}$, the map $\alpha \mapsto E\left[ 
\frac{\partial ^{2}P(Z,X,\alpha )}{\partial \alpha \partial z_{1}}\right] $
is continuous, and a uniform law of large numbers holds: $\sup_{\alpha \in 
\mathcal{A}_{0}}\left\Vert \left( \mathbb{P}_{n}-\mathbb{P}\right) \frac{%
\partial ^{2}P(Z_{i},X_{i},\alpha )}{\partial \alpha \partial z_{1}}%
\right\Vert \overset{p}{\rightarrow }0$. 
\end{enumerate}

Then, $T_{1n}(\hat{\alpha})-T_{1}$ has the following stochastic
approximation 
\begin{equation*}
T_{1n}(\hat{\alpha})-T_{1}=\left\{ E\left[ \frac{\partial ^{2}P(Z,X,\alpha
_{0})}{\partial z_{1}\partial \alpha }\right] \right\} ^{\prime }\mathbb{P}%
_{n}\psi _{\alpha _{0}}\left( D,W\right) +\mathbb{P}_{n}\psi _{\partial
P}\left( W\right) +o_{p}(n^{-1/2}),
\end{equation*}%
where 
\begin{equation*}
\psi _{{\partial P}}\left( W\right) :=\frac{\partial P(Z,X,\alpha _{0})}{%
\partial z_{1}}-E\left[ \frac{\partial P(Z,X,\alpha _{0})}{\partial z_{1}}%
\right] .
\end{equation*}
\end{lemma}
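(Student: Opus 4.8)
The plan is to split the estimation error at the true parameter and treat the two pieces separately. Write
\[
T_{1n}(\hat{\alpha})-T_1 = \bigl[T_{1n}(\hat{\alpha})-T_{1n}(\alpha_0)\bigr] + \bigl[T_{1n}(\alpha_0)-T_1\bigr].
\]
The second bracket requires no approximation: by the definitions of $T_{1n}$, $T_1$, and $\psi_{\partial P}$ it is exactly $\mathbb{P}_n\psi_{\partial P}$, and condition (b) guarantees that this mean-zero average is $O_p(n^{-1/2})$, so it is a genuine influence-function contribution. All the real work lies in linearizing the first bracket, which captures the effect of having estimated $\alpha_0$.

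For the first bracket I would expand $\alpha\mapsto\partial P(Z_i,X_i,\alpha)/\partial z_1$ around $\alpha_0$ using the integral (fundamental-theorem-of-calculus) form of the remainder, valid because the cross-derivative exists on a neighborhood by condition (c):
\[
\frac{\partial P(Z_i,X_i,\hat{\alpha})}{\partial z_1}-\frac{\partial P(Z_i,X_i,\alpha_0)}{\partial z_1}=\left[\int_0^1\frac{\partial^2 P(Z_i,X_i,\alpha_0+t(\hat{\alpha}-\alpha_0))}{\partial\alpha\,\partial z_1}\,dt\right]'(\hat{\alpha}-\alpha_0).
\]
Averaging over $i$ gives $T_{1n}(\hat{\alpha})-T_{1n}(\alpha_0)=\hat{M}_n'(\hat{\alpha}-\alpha_0)$, where $\hat{M}_n:=\int_0^1\mathbb{P}_n\frac{\partial^2 P(Z,X,\alpha_0+t(\hat{\alpha}-\alpha_0))}{\partial\alpha\,\partial z_1}\,dt$. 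I deliberately prefer the integral form over a coordinatewise mean-value theorem: the latter would introduce observation-specific intermediate points $\bar{\alpha}_i$, which are awkward to feed into a uniform law of large numbers, whereas the integral keeps a common argument $\alpha_0+t(\hat{\alpha}-\alpha_0)$ inside each empirical average.

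The key step is to show $\hat{M}_n\xrightarrow{p}E[\partial^2 P(Z,X,\alpha_0)/\partial\alpha\partial z_1]$. Condition (a) delivers $\hat{\alpha}-\alpha_0=O_p(n^{-1/2})$, hence $\hat{\alpha}\xrightarrow{p}\alpha_0$ and $\sup_{t\in[0,1]}\|\alpha_0+t(\hat{\alpha}-\alpha_0)-\alpha_0\|\le\|\hat{\alpha}-\alpha_0\|\xrightarrow{p}0$, so with probability tending to one the entire segment lies in the neighborhood $\mathcal{A}_0$. The uniform law of large numbers together with the continuity of $\alpha\mapsto E[\partial^2 P/\partial\alpha\partial z_1]$ in condition (d) then lets me replace the inner average by $E[\partial^2 P(Z,X,\alpha_0)/\partial\alpha\partial z_1]$ uniformly in $t$, so the integral converges to this same limit. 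Consequently $\hat{M}_n'(\hat{\alpha}-\alpha_0)=E[\partial^2 P/\partial\alpha\partial z_1]'(\hat{\alpha}-\alpha_0)+o_p(1)\cdot O_p(n^{-1/2})$.

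It remains to substitute the linear representation from condition (a), which turns $E[\partial^2 P/\partial\alpha\partial z_1]'(\hat{\alpha}-\alpha_0)$ into $E[\partial^2 P/\partial\alpha\partial z_1]'\mathbb{P}_n\psi_{\alpha_0}+o_p(n^{-1/2})$; using the symmetry $\partial^2 P/\partial\alpha\partial z_1=\partial^2 P/\partial z_1\partial\alpha$ and adding back $\mathbb{P}_n\psi_{\partial P}$ yields the stated expansion. I expect the consistency of $\hat{M}_n$ to be the main obstacle, precisely because it involves averaging the second derivative at the random argument $\hat{\alpha}$; the integral-remainder device combined with the uniform law of large numbers in (d) is what resolves it cleanly, and the remaining manipulations are routine.
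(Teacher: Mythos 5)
Your proof is correct and follows essentially the same route as the paper's: the identical decomposition $T_{1n}(\hat{\alpha})-T_{1}=\bigl[T_{1n}(\hat{\alpha})-T_{1n}(\alpha_{0})\bigr]+\bigl[T_{1n}(\alpha_{0})-T_{1}\bigr]$, with the second bracket recognized exactly as $\mathbb{P}_{n}\psi_{\partial P}=O_{p}(n^{-1/2})$ under condition (b), and the first bracket linearized using conditions (c), (d), and the representation in (a). The only divergence is the device used for the linearization: the paper applies the mean value theorem coordinate-wise, producing intermediate points $\tilde{\alpha}$ between $\alpha_{0}$ and $\hat{\alpha}$ and then invoking the uniform law of large numbers in (d), whereas you use the integral (fundamental-theorem-of-calculus) form of the remainder. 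Your choice is, if anything, the cleaner one: the mean-value points are in principle observation-specific, so the average $\mathbb{P}_{n}\,\partial^{2}P(Z_{i},X_{i},\tilde{\alpha}_{i})/\partial\alpha\partial z_{1}$ is not literally of the form covered by the uniform LLN as stated in (d), while your common argument $\alpha_{0}+t(\hat{\alpha}-\alpha_{0})$ makes that application immediate (uniformly in $t$), at the mild extra cost of needing the cross-derivative to be integrable along the segment so that the integral form is valid.
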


\begin{proof}[Proof of Lemma \protect\ref{ps_estimation_param}]
We have the following decomposition: 
\begin{align*}
\frac{1}{n}\sum_{i=1}^{n}\frac{\partial P(Z_{i},X_{i},\hat{\alpha})}{%
\partial z_{1}}-E\left[ \frac{\partial P(Z,X,\alpha _{0})}{\partial z_{1}}%
\right] & =\frac{1}{n}\sum_{i=1}^{n}\frac{\partial P(Z_{i},X_{i},\hat{\alpha}%
)}{\partial z_{1}}-\frac{1}{n}\sum_{i=1}^{n}\frac{\partial
P(Z_{i},X_{i},\alpha _{0})}{\partial z_{1}} \\
& +\frac{1}{n}\sum_{i=1}^{n}\frac{\partial P(Z_{i},X_{i},\alpha _{0})}{%
\partial z_{1}}-E\left[ \frac{\partial P(Z,X,\alpha _{0})}{\partial z_{1}}%
\right] .
\end{align*}

Under Condition (b) of the lemma, we have 
\begin{equation*}
\frac{1}{n}\sum_{i=1}^{n}\frac{\partial P(Z_{i},X_{i},\alpha _{0})}{\partial
z_{1}}-E\left[ \frac{\partial P(Z,X,\alpha _{0})}{\partial z_{1}}\right] =%
\frac{1}{n}\sum_{i=1}^{n}\psi _{{\partial P}}\left( W_{i}\right)
=O_{p}(n^{-1/2}).
\end{equation*}

For the first term, we have, by applying the mean value theorem, 
\begin{equation*}
\frac{1}{n}\sum_{i=1}^{n}\frac{\partial P(Z_{i},X_{i},\hat{\alpha})}{%
\partial z_{1}}-\frac{1}{n}\sum_{i=1}^{n}\frac{\partial P(Z_{i},X_{i},\alpha
_{0})}{\partial z_{1}}=\left( \frac{1}{n}\sum_{i=1}^{n}\frac{\partial
^{2}P(Z_{i},X_{i},\tilde{\alpha})}{\partial \alpha \partial z_{1}}\right)
^{\prime }(\hat{\alpha}-\alpha _{0}),
\end{equation*}%
where $\tilde{\alpha}$ is between $\alpha _{0}$ and $\hat{\alpha}$. Under
Conditions (c) and (d) of the lemma, we have 
\begin{equation}
\frac{1}{n}\sum_{i=1}^{n}\frac{\partial ^{2}P(Z_{i},X_{i},\tilde{\alpha})}{%
\partial \alpha \partial z_{1}}\overset{p}{\rightarrow }E\left[ \frac{%
\partial ^{2}P(Z,X,\alpha _{0})}{\partial \alpha \partial z_{1}}\right] .
\label{unif_lln}
\end{equation}%
Using (\ref{unif_lln}) together with the linear representation of $\hat{%
\alpha}$ in Condition (a), we obtain 
\begin{eqnarray*}
&&\frac{1}{n}\sum_{i=1}^{n}\frac{\partial P(Z_{i},X_{i},\hat{\alpha})}{%
\partial z_{1}}-\frac{1}{n}\sum_{i=1}^{n}\frac{\partial P(Z_{i},X_{i},\alpha
_{0})}{\partial z_{1}} \\
&=&\left\{ E\left[ \frac{\partial ^{2}P(Z,X,\alpha _{0})}{\partial \alpha
\partial z_{1}}\right] ^{\prime }+o_{p}\left( 1\right) \right\} \left\{ 
\mathbb{P}_{n}\psi _{\alpha _{0}}\left( D,W\right) +o_{p}(n^{-1/2})\right\}
\\
&=&E\left[ \frac{\partial ^{2}P(Z,X,\alpha _{0})}{\partial \alpha \partial
z_{1}}\right] ^{\prime }\mathbb{P}_{n}\psi _{\alpha _{0}}\left( D,W\right)
+o_{p}(n^{-1/2}).
\end{eqnarray*}%
We can then write 
\begin{eqnarray*}
T_{1n}(\hat{\alpha})-T_{1} &=&\frac{1}{n}\sum_{i=1}^{n}\frac{\partial
P(Z_{i},X_{i},\hat{\alpha})}{\partial z_{1}}-E\left[ \frac{\partial
P(Z,X,\alpha _{0})}{\partial z_{1}}\right] \\
&=&E\left[ \frac{\partial ^{2}P(Z,X,\alpha _{0})}{\partial \alpha \partial
z_{1}}\right] ^{\prime }\mathbb{P}_{n}\psi _{\alpha _{0}}\left( D,W\right) +%
\mathbb{P}_{n}\psi _{{\partial P}}(W)+o_{p}(n^{-1/2}).
\end{eqnarray*}
\end{proof}

\begin{lemma}
\label{lemma_hahn_ridder} Under Assumption \ref{Assumption_ignore_P_error}
given in the supplementary appendix, 
\begin{equation*}
\left. \frac{\partial }{\partial \theta }E\left[ \frac{\partial m_{\theta
_{0}}(y_{\tau },\tilde{W}\left( \alpha _{\theta }\right) )}{\partial z_{1}}%
\right] \right\vert _{\theta =\theta _{0}}=0.
\end{equation*}
\end{lemma}

\begin{proof}[Proof of Lemma \protect\ref{lemma_hahn_ridder}]
Recall that 
\begin{eqnarray*}
m_{\theta _{0}}\left( y_{\tau },\tilde{w}\left( \alpha _{\theta }\right)
\right) &:&=m_{0}\left( y_{\tau },P(w,\alpha _{\theta }),w_{-1}\right) \\
&=&E\left[ \mathds{1}\left\{ Y\leq y_{\tau }\right\} |P(W,\alpha _{\theta
})=P(w,\alpha _{\theta }),W_{-1}=w_{-1}\right] .
\end{eqnarray*}%
In order to emphasize the dual roles of $\alpha _{\theta }$, we define 
\begin{equation*}
\tilde{m}_{0}(y_{\tau },u,w_{-1};P\left( \cdot ,\alpha _{\theta _{2}}\right)
):=E\left[ \mathds{1}\left\{ Y\leq y_{\tau }\right\} |P(W,\alpha _{\theta
_{2}})=u,W_{-1}=w_{-1}\right] .
\end{equation*}%
Since $y_{\tau }$ is fixed, we regard $\tilde{m}_{0}$ as a function of $%
\left( u,w_{-1}\right) $ that depends on the function $P\left( \cdot ,\alpha
_{\theta _{2}}\right) .$ Then 
\begin{eqnarray*}
\tilde{m}_{0}(y_{\tau },P(w,\alpha _{\theta _{1}}),w_{-1};P\left( \cdot
,\alpha _{\theta _{2}}\right) )|_{\theta _{1}=\theta _{2}=\theta } &=&E\left[
\mathds{1}\left\{ Y\leq y_{\tau }\right\} |P(W,\alpha _{\theta })=P(w,\alpha
_{\theta }),W_{-1}=w_{-1}\right] \\
&=&m_{0}\left( y_{\tau },P(w,\alpha _{\theta }),w_{-1}\right) .
\end{eqnarray*}%
As in \citesupp{hahn2013}, we employ $\tilde{m}_{0}\left( y_{\tau
},u,w_{-1};P\left( \cdot ,\alpha _{\theta _{2}}\right) \right) $ as an
expositional device only.

The functional of interest is 
\begin{eqnarray*}
\mathbb{H}[m_{0}]=: &&\mathbb{H}[m_{0}\left( y_{\tau },P(\cdot ,\alpha
_{\theta }),\cdot \right) ]=\int_{\mathcal{W}}\frac{\partial m_{0}\left(
y_{\tau },P(w,\alpha _{\theta }),w_{-1}\right) }{\partial z_{1}}dF_{W}\left(
w\right) \\
&=&\int_{\mathcal{W}}\frac{\partial \tilde{m}_{0}(y_{\tau },P(w,\alpha
_{\theta _{1}}),w_{-1};P\left( \cdot ,\alpha _{\theta _{2}}\right) )}{%
\partial z_{1}}\bigg |_{\theta _{1}=\theta _{2}=\theta }dF_{W}\left(
w\right) .
\end{eqnarray*}%
Under Assumption \ref{Assumption_ignore_P_error}(b.i), we can exchange $%
\frac{\partial }{\partial \alpha _{\theta }}$ with $\int_{\mathcal{W}}$ and
obtain 
\begin{eqnarray*}
\left. \frac{\partial }{\partial \alpha _{\theta }}\mathbb{H}%
[m_{0}]\right\vert _{\theta =\theta _{0}} &=&\int_{\mathcal{W}}\left. \frac{%
\partial }{\partial z_{1}}\frac{\partial \tilde{m}_{0}(y_{\tau },P(w,\alpha
_{\theta _{1}}),w_{-1};P\left( \cdot ,\alpha _{\theta _{2}}\right) )}{%
\partial \alpha _{\theta _{1}}}\right\vert _{\theta _{1}=\theta _{2}=\theta
_{0}}dF_{W}\left( w\right) \\
&+&\int_{\mathcal{W}}\left. \frac{\partial }{\partial z_{1}}\frac{\partial 
\tilde{m}_{0}(y_{\tau },P(w,\alpha _{\theta _{1}}),w_{-1};P\left( \cdot
,\alpha _{\theta _{2}}\right) )}{\partial \alpha _{\theta _{2}}}\right\vert
_{\theta _{1}=\theta _{2}=\theta _{0}}dF_{W}\left( w\right) \\
&=&\int_{\mathcal{W}}\frac{\partial }{\partial z_{1}}\left[ \tilde{m}%
_{0,\alpha }^{\ast }\left( y_{\tau },P(w,\alpha _{\theta
_{0}}),w_{-1};P\left( \cdot ,\alpha _{\theta _{0}}\right) \right) \right]
dF_{W}\left( w\right) ,
\end{eqnarray*}%
where 
\begin{eqnarray*}
\tilde{m}_{0,\alpha }^{\ast }\left( y_{\tau },P(w,\alpha _{\theta
}),w_{-1};P\left( \cdot ,\alpha _{\theta }\right) \right) &=&\frac{\partial 
\tilde{m}_{0}(y_{\tau },P(w,\alpha _{\theta _{1}}),w_{-1};P\left( \cdot
,\alpha _{\theta }\right) )}{\partial \alpha _{\theta _{1}}}\bigg\vert%
_{\theta _{1}=\theta } \\
&+&\frac{\partial \tilde{m}_{0}(y_{\tau },P(w,\alpha _{\theta
}),w_{-1};P\left( \cdot ,\alpha _{\theta _{2}}\right) )}{\partial \alpha
_{\theta _{2}}}\bigg\vert_{\theta _{2}=\theta }.
\end{eqnarray*}%
Under Assumption \ref{Assumption_ignore_P_error}(a), we can use integration
by parts. Letting $z_{-1}=\left( z_{2},\mathcal{\ldots },z_{d_{Z}}\right) $
and $\left[ z_{1l}\left( w_{-1}\right) ,z_{1u}\left( w_{-1}\right) \right] $
be the support of $Z_{1}$ given $W_{-1}=w_{-1}$, we have 
\begin{eqnarray*}
&&\int_{\mathcal{W}}\frac{\partial }{\partial z_{1}}\left[ \tilde{m}%
_{0,\alpha }^{\ast }\left( y_{\tau },P(w,\alpha _{\theta
_{0}}),w_{-1};P\left( \cdot ,\alpha _{\theta _{0}}\right) \right) \right]
dF_{W}\left( w\right) \\
&=&\int_{\mathcal{W}_{-1}}\int_{z_{1l}\left( w_{-1}\right) }^{z_{1u}\left(
w_{-1}\right) }\frac{\partial }{\partial z_{1}}\left[ \tilde{m}_{0,\alpha
}^{\ast }\left( y_{\tau },P(w,\alpha _{\theta _{0}}),w_{-1};P\left( \cdot
,\alpha _{\theta _{0}}\right) \right) \right] f_{Z_{1}|W_{-1}}\left(
z_{1}|w_{-1}\right) dz_{1}dF_{W_{-1}}\left( w_{-1}\right) \\
&=&\int_{\mathcal{W}_{-1}}\left. \tilde{m}_{0,\alpha }^{\ast }\left( y_{\tau
},P(w,\alpha _{\theta _{0}}),w_{-1};P\left( \cdot ,\alpha _{\theta
_{0}}\right) \right) f_{Z_{1}|W_{-1}}\left( z_{1}|w_{-1}\right) \right\vert
_{z_{1l}\left( w_{-1}\right) }^{z_{1u}\left( w_{-1}\right)
}dF_{W_{-1}}\left( w_{-1}\right) \\
&-&\int_{\mathcal{W}}\tilde{m}_{0,\alpha }^{\ast }\left( y_{\tau
},P(w,\alpha _{\theta _{0}}),w_{-1};P\left( \cdot ,\alpha _{\theta
_{0}}\right) \right) \frac{\partial \log f_{Z_{1}|W_{-1}}\left(
z_{1}|w_{-1}\right) }{\partial z_{1}}dF_{W}\left( w\right) \\
&=&-\int_{\mathcal{W}}\tilde{m}_{0,\alpha }^{\ast }\left( y_{\tau
},P(w,\alpha _{\theta _{0}}),w_{-1};P\left( \cdot ,\alpha _{\theta
_{0}}\right) \right) \frac{\partial \log f_{Z_{1}|W_{-1}}\left(
z_{1}|w_{-1}\right) }{\partial z_{1}}dF_{W}\left( w\right) .
\end{eqnarray*}

The goal is to show that the above is equal to zero. To this end, define 
\begin{equation*}
\nu \left( u,w_{-1};P(\cdot ,\alpha _{\theta })\right) =E\left[ \frac{%
\partial \log f_{Z_{1}|W_{-1}}\left( Z_{1}|W_{-1}\right) }{\partial z_{1}}%
\bigg |P(W,\alpha _{\theta })=u,W_{-1}=w_{-1}\right] .
\end{equation*}%
By the law of iterated expectations, we have 
\begin{eqnarray}
&&\int_{\mathcal{W}}\tilde{m}_{0}\left( y_{\tau },P(w,\alpha _{\theta
}),w_{-1};P\left( \cdot ,\alpha _{\theta }\right) \right) \nu \left(
P(w,\alpha _{\theta }),w_{-1};P(\cdot ,\alpha _{\theta _{0}})\right)
dF_{W}\left( w\right)  \notag \\
&=&E\left[ \mathds{1}\left\{ Y\leq y_{\tau }\right\} \nu \left( P(W,\alpha
_{\theta }),W_{-1};P(\cdot ,\alpha _{\theta _{0}})\right) \right] .
\label{LIE_in_the_proof}
\end{eqnarray}%
This holds because 
\begin{equation*}
\tilde{m}_{0}\left( y_{\tau },P(W,\alpha _{\theta }),W_{-1};P\left( \cdot
,\alpha _{\theta }\right) \right) =E\left[ \mathds{1}\left\{ Y\leq y_{\tau
}\right\} |\mathcal{I}_{\theta }\right]
\end{equation*}%
where $\mathcal{I}_{\theta }$ is the sub-$\sigma $ algebra generated by $%
\left( P(W,\alpha _{\theta }),W_{-1}\right) $ and $\nu \left( P(W,\alpha
_{\theta }),W_{-1};P(\cdot ,\alpha _{\theta _{0}})\right) $ is a function of 
$\left( P(W,\alpha _{\theta }),W_{-1}\right) $ and is thus $\mathcal{I}%
_{\theta }$ -measurable.

Differentiating (\ref{LIE_in_the_proof}) with respect to $\alpha _{\theta }$
and evaluating the resulting equation at $\theta =\theta _{0},$ we have 
\begin{eqnarray}
&&E\left[ \left. \frac{\partial \tilde{m}_{0}(y_{\tau },P(W,\alpha _{\theta
_{1}}),W_{-1};P\left( \cdot ,\alpha _{\theta _{0}}\right) )}{\partial \alpha
_{\theta _{1}}}\right\vert _{\theta _{1}=\theta _{0}}\nu \left( P(W,\alpha
_{\theta _{0}}),W_{-1};P(\cdot ,\alpha _{\theta _{0}})\right) \right]  \notag
\\
&+&E\left[ \left. \frac{\partial \tilde{m}_{0}(y_{\tau },P(W,\alpha _{\theta
_{0}}),W_{-1};P\left( \cdot ,\alpha _{\theta _{2}}\right) )}{\partial \alpha
_{\theta _{2}}}\right\vert _{\theta _{2}=\theta _{0}}\nu \left( P(W,\alpha
_{\theta _{0}}),W_{-1};P(\cdot ,\alpha _{\theta _{0}})\right) \right]  \notag
\\
&=&E\left\{ \left[ \mathds{1}\left\{ Y\leq y_{\tau }\right\} -m\left(
y_{\tau },P(W,\alpha _{\theta _{0}}),W_{-1}\right) \right] \left. \frac{%
\partial \nu \left( P(W,\alpha _{\theta }),W_{-1};P(\cdot ,\alpha _{\theta
_{0}})\right) }{\partial \alpha _{\theta }}\right\vert _{\theta =\theta
_{0}}\right\} ,  \label{LIE}
\end{eqnarray}%
where we have used Assumption \ref{Assumption_ignore_P_error}(b.ii, b.iii)
to exchange the differentiation with the expectation.

Using (\ref{LIE}) and the law of iterated expectations, we then have 
\begin{eqnarray*}
&&\int_{\mathcal{W}}\frac{\partial }{\partial z_{1}}\left[ \tilde{m}%
_{0,\alpha }^{\ast }\left( y_{\tau },P(w,\alpha _{\theta
_{0}}),w_{-1};P\left( \cdot ,\alpha _{\theta _{0}}\right) \right) \right]
dF_{W}\left( w\right) \\
&=&-\int_{\mathcal{W}}\tilde{m}_{0,\alpha }^{\ast }\left( y_{\tau
},P(w,\alpha _{\theta _{0}}),w_{-1};P\left( \cdot ,\alpha _{\theta
_{0}}\right) \right) \frac{\partial \log f_{Z_{1}|W_{-1}}\left(
z_{1}|w_{-1}\right) }{\partial z_{1}}dF_{W}\left( w\right) \\
&=&\int_{\mathcal{W}}\tilde{m}_{0,\alpha }^{\ast }\left( y_{\tau
},P(w,\alpha _{\theta _{0}}),w_{-1};P\left( \cdot ,\alpha _{\theta
_{0}}\right) \right) \nu \left( P(w,\alpha _{\theta _{0}}),w_{-1};P(\cdot
,\alpha _{\theta _{0}})\right) dF_{W}\left( w\right) \\
&=&E\left\{ \left[ \mathds{1}\left\{ Y\leq y_{\tau }\right\} -m\left(
y_{\tau },P(W,\alpha _{\theta _{0}}),W_{-1}\right) \right] \left. \frac{%
\partial \nu \left( P(W,\alpha _{\theta }),W_{-1};P(\cdot ,\alpha _{\theta
_{0}})\right) }{\partial \alpha _{\theta }}\right\vert _{\theta =\theta
_{0}}\right\} .
\end{eqnarray*}%
Note that $\left. \frac{\partial \nu \left( P(W,\alpha _{\theta
}),W_{-1};P(\cdot ,\alpha _{\theta _{0}})\right) }{\partial \alpha _{\theta }%
}\right\vert _{\theta =\theta _{0}}$ is a measurable function of $P(W,\alpha
_{\theta _{0}})$, $\partial P(W,\alpha _{\theta _{0}})/\partial \alpha
_{\theta _{0}}$ and $W_{-1}$. In addition, 
\begin{eqnarray*}
&&E\left\{ \mathds{1}\left\{ Y\leq y_{\tau }\right\} |P(W,\alpha _{\theta
_{0}}),\partial P(W,\alpha _{\theta _{0}})/\partial \alpha _{\theta
_{0}},W_{-1}\right\} \\
&=&E\left\{ \mathds{1}\left\{ Y\leq y_{\tau }\right\} |P(W,\alpha _{\theta
_{0}}),W_{-1}\right\} =m\left( y_{\tau },P(W,\alpha _{\theta
_{0}}),W_{-1}\right)
\end{eqnarray*}%
because (i) given $W_{-1},$ $\partial P(W,\alpha _{\theta _{0}})/\partial
\alpha _{\theta _{0}}$ contains only information on $Z_{1}$, and (ii) given $%
W_{-1},$ $Z_{1}$ is independent of $\left( U_{D},U_{1},U_{2}\right) .$ These
imply that 
\begin{equation*}
\int_{\mathcal{W}}\frac{\partial }{\partial z_{1}}\left[ \tilde{m}_{0,\alpha
}^{\ast }\left( y_{\tau },P(w,\alpha _{\theta _{0}}),w_{-1};P\left( \cdot
,\alpha _{\theta _{0}}\right) \right) \right] dF_{W}\left( w\right) =0.
\end{equation*}%
Hence, 
\begin{equation*}
\left. \frac{\partial }{\partial \theta }E\left[ \frac{\partial m_{\theta
_{0}}(y_{\tau },P(W,\alpha _{\theta }),W_{-1})}{\partial z_{1}}\right]
\right\vert _{\theta =\theta _{0}}=0.
\end{equation*}
\end{proof}

\begin{lemma}
\label{param_ave_estimation} Under Assumption \ref{Assumption_T2} given in
the supplementary appendix, we have 
\begin{equation}
T_{2n}(\hat{y}_{\tau },\hat{m},\hat{\alpha})-T_{2}=\mathbb{P}_{n}\psi
_{\partial m_{0}}\left( W,y_{\tau }\right) +\mathbb{P}_{n}\psi
_{m_{0}}\left( Y,W,y_{\tau }\right) +\mathbb{P}_{n}\tilde{\psi}%
_{Q}(Y,y_{\tau })+o_{p}(n^{-1/2}),  \notag
\end{equation}%
where%
\begin{align*}
\psi _{\partial m_{0}}\left( W,y_{\tau }\right) & :=\frac{\partial
m_{0}(y_{\tau },\tilde{W})}{\partial z_{1}}-T_{2}, \\
\psi _{m_{0}}\left( Y,W,y_{\tau }\right) & :=-\left[ \mathds{1}\left\{ Y\leq
y_{\tau }\right\} -m_{0}(y_{\tau },\tilde{W})\right] \times E\left[ \frac{%
\partial \log f_{W}(W)}{\partial z_{1}}\bigg|\tilde{W}\right] ,
\end{align*}%
and 
\begin{equation*}
\tilde{\psi}_{Q}(Y,y_{\tau }):=\left[ \frac{\tau -\mathds{1}\left\{ Y\leq
y_{\tau }\right\} }{f_{Y}(y_{\tau })}\right] \times E\left[ \frac{\partial
f_{Y|\tilde{W}}(y_{\tau }|\tilde{W})}{\partial z_{1}}\right] .
\end{equation*}
\end{lemma}

\begin{proof}[Proof of Lemma \protect\ref{param_ave_estimation}]
We follow \citesupp{newey1994} to first show that 
\begin{equation*}
T_{2,\theta }=E_{\theta }\left[ \frac{\partial m_{\theta }(y_{\tau ,\theta },%
\tilde{W}\left( \alpha _{\theta }\right) )}{\partial z_{1}}\right]
\end{equation*}%
is differentiable at $\theta _{0}$ and then represent the derivative in the
form of (\ref{path_derivative}). For this, it suffices to show that each of
the four derivatives below exists at $\theta =\theta _{0}:$ 
\begin{eqnarray}
&&\frac{\partial }{\partial \theta }E_{\theta }\left[ \frac{\partial
m_{0}(y_{\tau },\tilde{W}\left( \alpha _{0}\right) )}{\partial z_{1}}\right]
;\frac{\partial }{\partial \theta }E\left[ \frac{\partial m_{\theta
}(y_{\tau },\tilde{W}\left( \alpha _{0}\right) )}{\partial z_{1}}\right] ; 
\notag \\
&&\frac{\partial }{\partial \theta }E\left[ \frac{\partial m_{0}(y_{\tau
,\theta },\tilde{W}\left( \alpha _{0}\right) )}{\partial z_{1}}\right] ;%
\frac{\partial }{\partial \theta }E\left[ \frac{\partial m_{0}(y_{\tau },%
\tilde{W}\left( \alpha _{\theta }\right) )}{\partial z_{1}}\right]
\label{pop_moment_decomposition_app}
\end{eqnarray}%
and can be represented in the form of (\ref{path_derivative}).

By Lemma \ref{lemma_hahn_ridder}, the last derivative exists and is equal to
zero at $\theta =\theta _{0}$. We deal with the rest three derivatives in (%
\ref{pop_moment_decomposition_app}) one at a time. Consider the first
derivative.$\ $Under Assumption \ref{Assumption_T2}(a.iii) and (b) with $%
\ell =1$, we have 
\begin{equation*}
\int_{\mathcal{W}}\left[ \left\vert \frac{\partial m_{0}(y_{\tau },\tilde{w}%
\left( \alpha _{0}\right) )}{\partial z_{1}}\right\vert \sup_{\theta \in
\Theta _{0}}\left\vert \frac{\partial f_{W}(w;\theta )}{\partial \theta }%
\right\vert \right] dw<\infty .
\end{equation*}%
So $E_{\theta }\left[ \frac{\partial m_{0}(y_{\tau },\tilde{W}\left( \alpha
_{0}\right) )}{\partial z_{1}}\right] $ is differentiable in $\theta $ over $%
\Theta _{0}$ and 
\begin{equation*}
\left. \frac{\partial }{\partial \theta }E_{\theta }\left[ \frac{\partial
m_{0}(y_{\tau },\tilde{W}\left( \alpha _{0}\right) )}{\partial z_{1}}\right]
\right\vert _{\theta =\theta _{0}}=E\left[ \frac{\partial m_{0}(y_{\tau },%
\tilde{W}\left( \alpha _{0}\right) )}{\partial z_{1}}S(O)\right] =E\left[
\psi _{\partial m_{0}}(W,y_{\tau })S(O)\right] .
\end{equation*}

Now, for the second derivative in (\ref{pop_moment_decomposition_app}),
Theorem 7.2 in \citesupp{newey1994} shows that Assumption \ref{Assumption_T2}
implies the following:

\begin{enumerate}
\item There exists a function $\gamma _{m_{0}}(o)$ and a measure $\hat{F}%
_{m_{0}}$ such that $E[\gamma _{m_{0}}(O)]=0$, $E[\gamma
_{m_{0}}(O)^{2}]<\infty $, and for all $\hat{m}$ such that $\Vert \hat{m}%
-m_{0}\Vert _{\infty }$ is small enough, 
\begin{equation*}
E\left[ \frac{\partial \hat{m}(y_{\tau },P(Z,X,\alpha _{0}),W_{-1})}{%
\partial z_{1}}-\frac{\partial m_{0}(y_{\tau },P(Z,X,\alpha _{0}),W_{-1})}{%
\partial z_{1}}\right] =\int_{\mathcal{O}}\gamma _{m_{0}}(o)d\hat{F}%
_{m_{0}}(o)
\end{equation*}%
where $\mathcal{O}$ is the support of $O.$

\item The approximation below holds: 
\begin{equation*}
\int_{\mathcal{O}}\gamma _{m_{0}}(o)d\hat{F}_{m_{0}}(o)=\frac{1}{n}%
\sum_{i=1}^{n}\gamma _{m_{0}}(O_{i})+o_{p}(n^{-1/2}).
\end{equation*}
\end{enumerate}

\noindent This shows that $\gamma _{m_{0}}(o)$ is the influence function of $%
E\left[ \frac{\partial \hat{m}(y_{\tau },\tilde{W})}{\partial z_{1}}\right] $%
. It remains to show that $\gamma _{m_{0}}\left( \cdot \right) $ equals $%
\psi _{m_{0}}\left( \cdot \right) $ defined in the lemma. Let $f_{Y|\tilde{W}%
}(y|\tilde{w};\tilde{\theta})$ be a parametrization of the conditional
density of $Y$ given $\tilde{W}$ with the true density corresponding to $%
\tilde{\theta}_{0}.$ With some abuse of notation, we let%
\begin{equation*}
m_{\tilde{\theta}}\left( y_{\tau },\tilde{w}\right) :=E_{\tilde{\theta}}%
\left[ \mathds{1}\left\{ Y\leq y_{\tau }\right\} |\tilde{W}=\tilde{w}\right]
:=\int_{-\infty }^{y_{\tau }}f_{Y|\tilde{W}}\left( y|\tilde{w};\tilde{\theta}%
\right) dy.
\end{equation*}

We need to show that $E\left[ \frac{m_{\tilde{\theta}}\left( y_{\tau },%
\tilde{w}\right) }{\partial z_{1}}\right] $ is differentiable in $\tilde{%
\theta}$ over a neighborhood of $\tilde{\theta}_{0}$ and 
\begin{equation}
\left. \frac{\partial }{\partial \tilde{\theta}}E\frac{m_{\tilde{\theta}%
}\left( y_{\tau },\tilde{w}\right) }{\partial z_{1}}\right\vert _{\tilde{%
\theta}=\tilde{\theta}_{0}}=E\left[ \psi _{m_{0}}\left( Y,W,y_{\tau }\right)
S(Y,\tilde{W};\tilde{\theta}_{0})\right] ,  \label{IF for m_theta}
\end{equation}%
where $S(Y,\tilde{W};\tilde{\theta}_{0})$ is the score of the
parametrization $f_{Y|\tilde{W}}(y|\tilde{w};\tilde{\theta})f_{\tilde{W}%
}\left( \tilde{w}\right) $ at $\tilde{\theta}=\tilde{\theta}_{0}.$ The
differentiability of $E\frac{\partial m_{\tilde{\theta}}(y_{\tau },\tilde{W})%
}{\partial z_{1}}$ in $\tilde{\theta}$ holds under Assumption \ref%
{Assumption_T2}(d.i).

To show (\ref{IF for m_theta}), we denote $\left( P\left(
z_{1},w_{-1}\right) ,w_{-1}^{\prime }\right) ^{\prime }$ by $\tilde{w}%
_{-1}^{\prime }=\left( \tilde{w}_{1},\tilde{w}_{-1}^{\prime }\right)
^{\prime }$ so that $\tilde{w}_{1}=P\left( z_{1},w_{-1}\right) $ and $\tilde{%
w}_{-1}\equiv w_{-1}.$ Similarly, we denote $\left( P\left(
Z_{1},W_{-1}\right) ,W_{-1}^{\prime }\right) ^{\prime }$ by $\tilde{W}%
^{\prime }=\left( \tilde{W}_{1},\tilde{W}_{-1}^{\prime }\right) ^{\prime }$
so that $\tilde{W}_{1}=P\left( Z_{1},W_{-1}\right) $ and $\tilde{W}%
_{-1}\equiv W_{-1}.$ Then, using integration by parts and Assumption \ref%
{Assumption_ignore_P_error}(a), we have 
\begin{eqnarray*}
E\frac{\partial m_{\tilde{\theta}}(y_{\tau },\tilde{W})}{\partial z_{1}} &=&E%
\frac{\partial m_{\tilde{\theta}}(y_{\tau },P\left( Z_{1},W_{-1}\right)
,W_{-1})}{\partial z_{1}} \\
&=&\int_{\mathcal{W}}\frac{\partial E_{\tilde{\theta}}\left[ \mathds{1}%
\left\{ Y\leq y_{\tau }\right\} |\tilde{W}=\tilde{w}\right] }{\partial z_{1}}%
dF_{W}\left( w\right) \\
&=&-\int_{\mathcal{W}}E_{\tilde{\theta}}\left[ \mathds{1}\left\{ Y\leq
y_{\tau }\right\} |\tilde{W}=\tilde{w}\right] \frac{\partial \log f_{W}(W)}{%
\partial z_{1}}dF_{W}\left( w\right) .
\end{eqnarray*}

Therefore, 
\begin{eqnarray*}
&&-\frac{\partial }{\partial \tilde{\theta}}E\frac{m_{\tilde{\theta}}\left(
y_{\tau },\tilde{w}\right) }{\partial z_{1}} \\
&=&\frac{\partial }{\partial \tilde{\theta}}E\left[ E\left[ \mathds{1}%
\left\{ Y\leq y_{\tau }\right\} |\tilde{W};\tilde{\theta})\right] \frac{%
\partial \log f_{W}(W)}{\partial z_{1}}\right] \\
&=&\frac{\partial }{\partial \tilde{\theta}}E\left\{ E\left[ \mathds{1}%
\left\{ Y\leq y_{\tau }\right\} |\tilde{W};\tilde{\theta})\right] \cdot E%
\left[ \left. \frac{\partial \log f_{W}(W)}{\partial z_{1}}\right\vert 
\tilde{W}\right] \right\} \\
&=&\frac{\partial }{\partial \tilde{\theta}}\int_{\mathcal{W}}E\left[ %
\mathds{1}\left\{ Y\leq y_{\tau }\right\} |\tilde{W}=\tilde{w};\tilde{\theta}%
)\right] \cdot E\left[ \left. \frac{\partial \log f_{W}(W)}{\partial z_{1}}%
\right\vert \tilde{W}=\tilde{w}\right] dF_{\tilde{W}}\left( \tilde{w}\right)
\\
&=&\int_{\mathcal{W}}\int_{\mathcal{-\infty }}^{y_{\tau }}\frac{\partial
\log f_{Y|\tilde{W}}\left( y|\tilde{w},\tilde{\theta}\right) }{\partial 
\tilde{\theta}}f_{Y|\tilde{W}}\left( y|\tilde{w},\tilde{\theta}\right)
dy\cdot E\left[ \left. \frac{\partial \log f_{W}(W)}{\partial z_{1}}%
\right\vert \tilde{W}=\tilde{w}\right] dF_{\tilde{W}}\left( \tilde{w}\right)
\\
&=&\int_{\mathcal{W}}\int_{\mathcal{-\infty }}^{y_{\tau }}E\left[ \left. 
\frac{\partial \log f_{W}(W)}{\partial z_{1}}\right\vert \tilde{W}=\tilde{w}%
\right] \frac{\partial \log f_{Y|\tilde{W}}\left( y|\tilde{w},\tilde{\theta}%
\right) }{\partial \tilde{\theta}}f_{Y|\tilde{W}}\left( y|\tilde{w},\tilde{%
\theta}\right) dydF_{\tilde{W}}\left( \tilde{w}\right) \\
&=&E\left[ \left\{ \mathds{1}\left\{ Y\leq y_{\tau }\right\} \cdot E\left[
\left. \frac{\partial \log f_{W}(W)}{\partial z_{1}}\right\vert \tilde{W}=%
\tilde{w}\right] \right\} S\left( Y,\tilde{W};\tilde{\theta}_{0}\right) %
\right]
\end{eqnarray*}%
where 
\begin{equation*}
S\left( Y,\tilde{W}\right) =\left. \frac{\partial \log f_{Y|\tilde{W}}\left(
Y|\tilde{W};\tilde{\theta}\right) }{\partial \tilde{\theta}}\right\vert _{%
\tilde{\theta}=\tilde{\theta}_{0}}=\left. \frac{\partial \log f_{Y,\tilde{W}%
}\left( Y,\tilde{W};\tilde{\theta}\right) }{\partial \tilde{\theta}}%
\right\vert _{\tilde{\theta}=\tilde{\theta}_{0}}
\end{equation*}%
satisfying $E\left[ S\left( Y,\tilde{W};\tilde{\theta}_{0}\right) |\tilde{W}%
\right] =0$. Hence, 
\begin{eqnarray*}
&&\left. \frac{\partial }{\partial \tilde{\theta}}E\frac{m_{\tilde{\theta}%
}\left( y_{\tau },\tilde{w}\right) }{\partial z_{1}}\right\vert _{\tilde{%
\theta}=\tilde{\theta}_{0}} \\
&=&-E\left[ \left\{ \left[ \mathds{1}\left\{ Y\leq y_{\tau }\right\}
-m_{0}\left( y_{\tau },\tilde{W}\right) \right] \cdot E\left[ \left. \frac{%
\partial \log f_{W}(W)}{\partial z_{1}}\right\vert \tilde{W}\right] \right\}
S\left( Y,\tilde{W}\right) \right] \\
&=&E\left[ \psi _{m_{0}}\left( Y,W,y_{\tau }\right) S\left( Y,\tilde{W}%
\right) \right]
\end{eqnarray*}%
as desired.

Next, the dominating condition in Assumption \ref{Assumption_T2}(d.ii)
ensures that the third derivative\ in (\ref{pop_moment_decomposition_app})
exists and 
\begin{equation*}
\left. \frac{\partial }{\partial \theta }E\left[ \frac{\partial
m_{0}(y_{\tau ,\theta },\tilde{W}\left( \alpha _{0}\right) )}{\partial z_{1}}%
\right] \right\vert _{\theta =\theta _{0}}=E\left[ \frac{\partial
^{2}m_{0}(y_{\tau },\tilde{W}\left( \alpha _{0}\right) )}{\partial y_{\tau
}\partial z_{1}}\right] \left. \frac{\partial y_{\tau ,\theta }}{\partial
\theta }\right\vert _{\theta =\theta _{0}}.
\end{equation*}%
Given the approximation 
\begin{equation*}
\hat{y}_{\tau }-y_{\tau }=\mathbb{P}_{n}\psi _{Q}(Y,y_{\tau
})+o_{p}(n^{-1/2}),
\end{equation*}%
from Lemma \ref{quantile_an}, we have 
\begin{equation*}
\left. \frac{\partial y_{\tau ,\theta }}{\partial \theta }\right\vert
_{\theta =\theta _{0}}=E\left[ \psi _{Q}(Y,y_{\tau })S(O)\right] .
\end{equation*}%
Hence, 
\begin{eqnarray*}
\left. \frac{\partial }{\partial \theta }E\left[ \frac{\partial
m_{0}(y_{\tau ,\theta },\tilde{W}\left( \alpha _{0}\right) )}{\partial z_{1}}%
\right] \right\vert _{\theta =\theta _{0}} &=&E\left[ \frac{\partial
^{2}m_{0}(y_{\tau },\tilde{W}\left( \alpha _{0}\right) )}{\partial y_{\tau
}\partial z_{1}}\right] E\left[ \psi _{Q}(Y,y_{\tau })S(O)\right] \\
&=&\left\{ E\left[ \frac{\partial f_{Y|\tilde{W}}(y_{\tau }|\tilde{W})}{%
\partial z_{1}}\right] \right\} E\left[ \psi _{Q}(Y,y_{\tau })S(O)\right] \\
&=&E\left[ \tilde{\psi}_{Q}(Y,y_{\tau })S(O)\right] .
\end{eqnarray*}%
To sum up, we have shown that%
\begin{equation*}
\left. E_{\theta }\left[ \frac{\partial m_{\theta }(y_{\tau ,\theta },\tilde{%
W}\left( \alpha _{\theta }\right) )}{\partial z_{1}}\right] \right\vert
_{\theta =\theta _{0}}=E\left\{ \left[ \psi _{\partial m_{0}}\left(
W,y_{\tau }\right) +\psi _{m_{0}}\left( Y,W,y_{\tau }\right) +\tilde{\psi}%
_{Q}(Y,y_{\tau })\right] S(O)\right\} .
\end{equation*}

This, combined with the arguments in \citesupp{newey1994}, yields 
\begin{equation*}
T_{2n}(\hat{y}_{\tau },\hat{m},\hat{\alpha})-T_{2}=\mathbb{P}_{n}\psi
_{\partial m_{0}}\left( W,y_{\tau }\right) +\mathbb{P}_{n}\psi
_{m_{0}}\left( Y,W,y_{\tau }\right) +\mathbb{P}_{n}\tilde{\psi}%
_{Q}(Y,y_{\tau })+o_{p}(n^{-1/2}).
\end{equation*}
\end{proof}

\subsection{Estimation of the Asymptotic Variance of the UNIQUE}

\label{estimation_variance_appendix}

The asymptotic variance $V_{\tau }$ in (\ref{variance_param}) can be
estimated by the plug-in estimator 
\begin{equation}
\hat{V}_{\tau }=\frac{h}{n}\sum_{i=1}^{n}\hat{\psi}_{\Pi _{\tau },i}^{2},
\label{var_param_if}
\end{equation}%
where, by Theorem \ref{uqr_if_param}, 
\begin{eqnarray*}
\hat{\psi}_{\Pi _{\tau },i} &=&\frac{\hat{T}_{2n}}{\hat{f}_{Y}(\hat{y}_{\tau
})^{2}\hat{T}_{1n}}\hat{\psi}_{f,i}(\hat{y}_{\tau })+\frac{\hat{T}_{2n}}{%
\hat{f}_{Y}(\hat{y}_{\tau })^{2}\hat{T}_{1n}}\hat{f}_{Y}^{\prime }(\hat{y}%
_{\tau })\hat{\psi}_{Q,i}(\hat{y}_{\tau }) \\
&+&\frac{\hat{T}_{2n}}{\hat{f}_{Y}(\hat{y}_{\tau })\hat{T}_{1n}^{2}}\hat{\psi%
}_{\partial P,i}+\frac{\hat{T}_{2n}}{\hat{f}_{Y}(\hat{y}_{\tau })\hat{T}%
_{1n}^{2}}\left[ \frac{1}{n}\sum_{j=1}^{n}\frac{\partial ^{2}P(W_{j},\hat{%
\alpha})}{\partial z_{1}\partial \hat{\alpha}^{\prime }}\right] \hat{\psi}%
_{\alpha ,i} \\
&-&\frac{1}{\hat{f}_{Y}(\hat{y}_{\tau })\hat{T}_{1n}}\hat{\psi}_{\partial
m,i}-\frac{1}{\hat{f}_{Y}(\hat{y}_{\tau })\hat{T}_{1n}}\hat{\psi}_{m,i}-%
\frac{1}{\hat{f}_{Y}(\hat{y}_{\tau })\hat{T}_{1n}}\hat{E}\left[ \frac{%
\partial f_{Y|\tilde{W}}(\hat{y}_{\tau }|\tilde{W}\left( \hat{\alpha}\right)
)}{\partial z_{1}}\right] \hat{\psi}_{Q,i}(\hat{y}_{\tau }).
\end{eqnarray*}%
In this equation, $\hat{T}_{2n}=T_{2n}(\hat{y}_{\tau },\hat{m},\hat{\alpha}%
), $ $\hat{T}_{1n}=T_{1n}(\hat{\alpha}),$ 
\begin{eqnarray*}
\hat{\psi}_{f,i}(\hat{y}_{\tau }) &=&K_{h}\left( \left[ Y_{i}-\hat{y}_{\tau }%
\right] \right) -\frac{1}{n}\sum_{j=1}^{n}K_{h}\left( \left[ Y_{j}-\hat{y}%
_{\tau }\right] \right) \\
\hat{\psi}_{Q,i}(\hat{y}_{\tau }) &=&\frac{\tau -\mathds{1}\left\{ Y_{i}\leq 
\hat{y}_{\tau }\right\} }{\hat{f}_{Y}(\hat{y}_{\tau })}, \\
\hat{\psi}_{\partial P,i} &=&\frac{\partial P(W_{i},\hat{\alpha})}{\partial
z_{1}}-\frac{1}{n}\sum_{j=1}^{n}\frac{\partial P(W_{j},\hat{\alpha})}{%
\partial z_{1}}, \\
\hat{\psi}_{\alpha ,i} &=&\left( -\frac{1}{n}\sum_{j=1}^{n}\frac{\left[
P_{\partial }(W_{j},\hat{\alpha})\right] ^{2}W_{j}W_{j}^{\prime }}{P(W_{j},%
\hat{\alpha})\left[ 1-P(W_{j},\hat{\alpha})\right] }\right) ^{-1}\frac{%
P_{\partial }(W_{i},\hat{\alpha})W_{i}\left[ D_{i}-P(W_{i},\hat{\alpha})%
\right] }{P(W_{i},\hat{\alpha})\left[ 1-P(W_{i},\hat{\alpha})\right] }, \\
\hat{\psi}_{\partial m,i} &=&\frac{\partial \hat{m}(\hat{y}_{\tau },\tilde{W}%
_{i}\left( \hat{\alpha}\right) )}{\partial z_{1}}-\hat{T}_{2n} \\
\hat{\psi}_{m,i} &=&-\left( \mathds{1}\left\{ Y_{i}\leq \hat{y}_{\tau
}\right\} -\hat{m}(\hat{y}_{\tau },\tilde{W}_{i}\left( \hat{\alpha}\right)
)\right) \times \hat{E}\left[ \left. \frac{\partial \log f_{W}(W_{i})}{%
\partial z_{1}}\right\vert \tilde{W}_{i}\left( \hat{\alpha}\right) \right] ,
\end{eqnarray*}%
and 
\begin{equation*}
\hat{E}\left[ \frac{\partial f_{Y|\tilde{W}}(\hat{y}_{\tau }|\tilde{W}\left( 
\hat{\alpha}\right) )}{\partial z_{1}}\right] =\frac{1}{n}\sum_{i=1}^{n}%
\frac{\partial \hat{f}_{Y|\tilde{W}\left( \hat{\alpha}\right) }(\hat{y}%
_{\tau }|p,W_{-1,i})}{\partial p}\bigg|_{p=P(W_{i},\hat{\alpha})}\frac{%
\partial P(z,X_{i},\hat{\alpha})}{\partial z_{1}}\bigg|_{z=Z_{i}}.
\end{equation*}

Most of these plug-in estimates are self-explanatory. For example, $\hat{\psi%
}_{\alpha ,i}$ is the estimated influence function for the MLE when $P(W_{i},%
\hat{\alpha})=P(W_{i}\hat{\alpha})$ and $P_{\partial }\left( a\right)
=\partial P\left( a\right) /\partial a.$ If the propensity score function
does not take a linear index form, then we need to make some adjustment to $%
\hat{\psi}_{\alpha ,i}$. We only need to find the influence function for the
MLE, which is an easy task, and then plug $\hat{\alpha}$ into the influence
function.

The only remaining quantity that needs some explanation is $\hat{\psi}%
_{m,i}, $ which involves a nonparametric regression of $\frac{\partial \log
f_{W}(W_{i})}{\partial z_{1}}$ on $\tilde{W}_{i}\left( \hat{\alpha}\right) :=%
\left[ P(W_{i},\hat{\alpha}),W_{-1,i}\right] .$ We let%
\begin{equation*}
\hat{E}\left[ \left. \frac{\partial \log f_{W}(W_{i})}{\partial z_{1}}%
\right\vert \tilde{W}_{i}\left( \hat{\alpha}\right) \right] =-\phi ^{J}(%
\tilde{W}_{i}(\hat{\alpha}))^{\prime }\left( \sum_{\ell =1}^{n}\phi ^{J}(%
\tilde{W}_{\ell }(\hat{\alpha}))\phi ^{J}(\tilde{W}_{\ell }(\hat{\alpha}%
))^{\prime }\right) ^{-1}\sum_{\ell =1}^{n}\frac{\partial \phi ^{J}(\tilde{W}%
_{\ell }(\hat{\alpha}))}{\partial z_{1}}.
\end{equation*}%
To see why this may be consistent for $E\left[ \left. \frac{\partial \log
f(W_{i})}{\partial z_{1}}\right\vert \tilde{W}_{i}\left( \alpha _{0}\right) %
\right] ,$ we note that using integration by parts, the above is just a
series approximation to $E\left[ \left. \frac{\partial \log f_{W}(W_{i})}{%
\partial z_{1}}\right\vert \tilde{W}_{i}\left( \alpha _{0}\right) \right] $.

The consistency of $\hat{V}_{\tau }$ can be established by using the uniform
law of large numbers. The arguments are standard but tedious. We omit the
details here.$\ $

\subsection{Unconditional Instrumental Quantile Estimation under
Nonparametric Propensity Score}

\label{non_parametric_ps_appendix}

We drop the parametric specification of the propensity score in Assumption %
\ref{Assumption_parametric_ps} and estimate the propensity score
non-parametrically using the series method. With respect to the results in
Section\ \ref{estimation}, we only need to modify Lemma \ref%
{ps_estimation_param}, since Lemma \ref{param_ave_estimation} shows that we
do not need to account for the error from estimating the propensity score.

Let $\hat{P}(w)$ denote the nonparametric series estimator of $P(w)$. The
estimator of $T_{1}:=E\left[ \frac{\partial P(W)}{\partial z_{1}}\right] $
is now 
\begin{equation*}
T_{1n}(\hat{P}):=\frac{1}{n}\sum_{i=1}^{n}\frac{\partial \hat{P}(w)}{%
\partial z_{1}}\bigg |_{w=W_{i}}.
\end{equation*}%
The estimator of $T_{2}$ is the same as in (\ref{t2n}) but with $P(W_{i},%
\hat{\alpha})$ replaced by $\hat{P}(W_{i}):$ 
\begin{equation}
T_{2n}(\hat{y}_{\tau },\hat{m},\hat{P}):=\frac{1}{n}\sum_{i=1}^{n}\frac{%
\partial \hat{m}(\hat{y}_{\tau },\hat{P}(W_{i}),W_{-1,i})}{\partial z_{1}},
\end{equation}%
where, as in (\ref{t2n}), $\hat{m}$ is the series estimator of $m.$ The
formula is the same as before, and we only need to replace $P(W_{i},\hat{%
\alpha})$ by $\hat{P}(W_{i})$. The nonparametric UNIQUE becomes 
\begin{eqnarray}
\hat{\Pi}_{\tau }(\hat{y}_{\tau },\hat{f}_{Y},\hat{m},\hat{P}):= &&-\frac{1}{%
\hat{f}_{Y}(\hat{y}_{\tau })}\left[ \frac{1}{n}\sum_{i=1}^{n}\frac{\partial 
\hat{P}(W_{i})}{\partial z_{1}}\right] ^{-1}\frac{1}{n}\sum_{i=1}^{n}\frac{%
\partial \hat{m}(\hat{y}_{\tau },\hat{P}(W_{i}),W_{-1,i})}{\partial z_{1}} 
\notag \\
&=&-\frac{1}{\hat{f}_{Y}(\hat{y}_{\tau })}\frac{T_{2n}(\hat{y}_{\tau },\hat{m%
},\hat{P})}{T_{1n}(\hat{P})}.
\end{eqnarray}

The following lemma follows directly from Theorem 7.2 of \citesupp{newey1994}%
.

\begin{lemma}
\label{nonparam_ps_estimation} Let Assumption \ref{Assumption_ignore_P_error}%
$\left( a\right) $ and Assumption \ref{Assumption_T2}$(a,c)$ hold. Assume
further that $P(z,x)$ is continuously differentiable with respect to $z_{1}$
for all orders, and that there is a constant $C$ such that $\left\vert
\partial ^{\ell }P(z,x)/\partial z_{1}^{\ell }\right\vert \leq C^{\ell }$
for all $\ell \in \mathbb{N}$.$\ $Then 
\begin{equation*}
T_{1n}(\hat{P})-T_{1}=\mathbb{P}_{n}\psi _{\partial P}\left( W\right) +%
\mathbb{P}_{n}\psi _{P}\left( D,W\right) +o_{p}(n^{-1/2}),
\end{equation*}%
where we define 
\begin{equation*}
\psi _{\partial P}\left( W\right) :=\frac{\partial P(W)}{\partial z_{1}}%
-T_{1}
\end{equation*}%
and 
\begin{equation*}
\psi _{P}\left( D,W\right) :=-\left( D-P(W)\right) \times \frac{\partial
\log f_{W}(W)}{\partial z_{1}}.
\end{equation*}
\end{lemma}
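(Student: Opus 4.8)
The plan is to recognize $T_{1n}(\hat{P})$ as an average-derivative functional of the nonparametric regression $P(w)=E[D\mid W=w]$, which is structurally identical to---and in fact simpler than---the estimator $T_{2n}$ treated in Lemma \ref{param_ave_estimation}. Here $D$ plays the role of the dependent variable $\mathds{1}\{Y\le y_{\tau}\}$, the conditioning set is the full vector $W$ rather than $(P(W),X)$, and there is neither a preliminary quantile $\hat{y}_{\tau}$ nor a parametric index $\hat{\alpha}$ to estimate first. Consequently the argument reduces to applying Theorem 7.2 of \cite{newey1994} and reading off the influence function through the pathwise-differentiation device used in the proof of Lemma \ref{param_ave_estimation}.

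First I would split
\begin{equation*}
T_{1n}(\hat{P})-T_{1}=\left[T_{1n}(P)-T_{1}\right]+\left[T_{1n}(\hat{P})-T_{1n}(P)\right].
\end{equation*}
The first bracket is a centered sample average of $\partial P(W_{i})/\partial z_{1}$, so under Assumption (a) of Lemma \ref{param_ave_estimation} the law of large numbers and the central limit theorem give $T_{1n}(P)-T_{1}=\mathbb{P}_{n}\psi_{\partial P_{s}}+o_{p}(n^{-1/2})$ with $\psi_{\partial P_{s}}=\partial P(W)/\partial z_{1}-T_{1}$. All of the real work lies in the second bracket, which captures the first-order effect of replacing $P$ by its series estimator $\hat{P}$.

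For that bracket I would set up the pathwise derivative holding the distribution of $W$ fixed so that only the regression is perturbed. Let $\{F_{\theta}\}$ be a regular parametric submodel with score $S(O)$, let $P_{\theta}(w)=E_{\theta}[D\mid W=w]$, and consider $\theta\mapsto E[\partial P_{\theta}(W)/\partial z_{1}]$ with the outer expectation fixed at the truth. Using Assumption (a) of Lemma \ref{lemma_hahn_ridder}, integration by parts in $z_{1}$---whose boundary terms vanish because $f_{Z_{1}\mid W_{-1}}(z_{1}\mid w_{-1})=0$ on the boundary of the support---converts the average derivative into
\begin{equation*}
E\left[\frac{\partial P_{\theta}(W)}{\partial z_{1}}\right]=-E\left[P_{\theta}(W)\,\frac{\partial \log f_{W}(W)}{\partial z_{1}}\right].
\end{equation*}
Differentiating in $\theta$ at $\theta_{0}$, substituting the standard identity $\partial P_{\theta}(w)/\partial\theta|_{\theta_{0}}=E[(D-P(W))\,S(O)\mid W=w]$, and applying the law of iterated expectations (legitimate since $\partial\log f_{W}(W)/\partial z_{1}$ is $W$-measurable) yields
\begin{equation*}
-E\left[\frac{\partial P_{\theta}(W)}{\partial\theta}\bigg|_{\theta_{0}}\frac{\partial \log f_{W}(W)}{\partial z_{1}}\right]=-E\left[(D-P(W))\,\frac{\partial\log f_{W}(W)}{\partial z_{1}}\,S(O)\right],
\end{equation*}
so the representer of this piece is exactly $\psi_{P_{s}}=-(D-P(W))\,\partial\log f_{W}(W)/\partial z_{1}$. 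By Theorem 2.1 of \cite{newey1994} this representer is its influence function, and adding the $\mathbb{P}_{n}\psi_{\partial P_{s}}$ term from the first bracket gives the claimed expansion. Note that, compared with $\psi_{m_{0}}$ in Lemma \ref{param_ave_estimation}, there is no inner conditional expectation $E[\,\cdot\mid\tilde{W}]$, precisely because here we condition on all of $W$.

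The main obstacle is not this heuristic computation but the verification that the series estimator $\hat{P}$ genuinely admits the first-order representation $T_{1n}(\hat{P})-T_{1n}(P)=\mathbb{P}_{n}\psi_{P_{s}}+o_{p}(n^{-1/2})$ with a remainder that is truly $o_{p}(n^{-1/2})$. This is exactly what Theorem 7.2 of \cite{newey1994} supplies: the hypotheses imported here---the smoothness and uniformly bounded higher-order derivatives of $P(z,x)$ in $z_{1}$, the support and density conditions in Assumption (a) of Lemma \ref{param_ave_estimation}, and the series-rate condition $J^{7+2\kappa}=O(n)$ in Assumption (c)---are precisely those under which the average derivative of a series regression is asymptotically linear with the representer computed above. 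I would therefore finish by checking the correspondence between our functional and the linear-in-the-regression functional of \cite{newey1994}, confirming boundedness of the representer and the required stochastic-equicontinuity condition, and then invoking the theorem to conclude.
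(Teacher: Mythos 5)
Your proposal is correct and follows essentially the same route as the paper, which proves this lemma simply by noting that it "follows directly from Theorem 7.2 of \cite{newey1994}." Your additional work---the two-term decomposition, the integration-by-parts computation of the representer $\psi_{P_{s}}$ (with boundary terms killed by Assumption (a) of Lemma \ref{lemma_hahn_ridder}), and the verification that the identity $\partial P_{\theta}(w)/\partial\theta|_{\theta_{0}}=E[(D-P(W))S(O)\mid W=w]$ plus iterated expectations yields the outer-product form---is a correct and more explicit account of exactly what that citation is doing, so there is nothing to fix.
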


Using a proof similar to that of Lemma \ref{param_ave_estimation}, we can
show that $T_{2n}(\hat{y}_{\tau },\hat{m},\hat{P})$ and $T_{2n}(\hat{y}%
_{\tau },\hat{m},P)$ have the same influence function. That is, we have 
\begin{equation*}
T_{2n}(\hat{y}_{\tau },\hat{m},\hat{P})-T_{2}=\mathbb{P}_{n}\psi _{\partial
m_{0}}(W,y_{\tau })+\mathbb{P}_{n}\psi _{m_{0}}\left( Y,W,y_{\tau }\right) +%
\mathbb{P}_{n}\tilde{\psi}_{Q}(Y,y_{\tau })+o_{p}(n^{-1/2}),
\end{equation*}%
where 
\begin{eqnarray*}
\psi _{\partial m_{0}}(W,y_{\tau }):= &&\frac{\partial m_{0}(y_{\tau
},P(W),W_{-1})}{\partial z_{1}}-T_{2}, \\
\psi _{m_{0}}\left( Y,W,y_{\tau }\right) := &&-\left( \mathds{1}\left\{
Y\leq y_{\tau }\right\} -m_{0}(y_{\tau },P(W),W_{-1})\right) \times E\left[ 
\frac{\partial \log f_{W}(W)}{\partial z_{1}}\bigg|P(W),W_{-1}\right] ,
\end{eqnarray*}%
and 
\begin{equation*}
\tilde{\psi}_{Q}(Y,y_{\tau })=\psi _{Q}(Y,y_{\tau })\times E\left[ \frac{%
\partial f_{Y|P(W),W_{-1}}(y_{\tau }|P(W),W_{-1})}{\partial z_{1}}\right] .
\end{equation*}

Given the asymptotic linear representations of $T_{1n}(\hat{P})-T_{1}$ and $%
T_{2n}(\hat{y}_{\tau },\hat{m},\hat{P})-T_{2},$ we can directly use Lemma %
\ref{param_ave_estimation}, together with Lemma \ref{two_step_density}, to
obtain an asymptotic linear representation of $\hat{\Pi}_{\tau }(\hat{y}%
_{\tau },\hat{f}_{Y},\hat{m},\hat{P})$.

\begin{theorem}
\label{uqr_if_nonparam} Under the assumptions of Lemmas \ref%
{two_step_density}, \ref{param_ave_estimation}, and \ref%
{nonparam_ps_estimation}, we have 
\begin{eqnarray}
\hat{\Pi}_{\tau }-\Pi _{\tau } &=&\frac{T_{2}}{f_{Y}(y_{\tau })^{2}T_{1}}%
\left[ \mathbb{P}_{n}\psi _{f_{Y}}(Y,y_{\tau })+B_{f_{Y}}(y_{\tau })\right] +%
\frac{T_{2}}{f_{Y}(y_{\tau })^{2}T_{1}}f_{Y}^{\prime }(y_{\tau })\mathbb{P}%
_{n}\psi _{Q}(Y,y_{\tau })  \notag  \label{est_param_pi_decom_6} \\
&+&\frac{T_{2}}{f_{Y}(y_{\tau })T_{1}^{2}}\mathbb{P}_{n}\psi _{\partial
P}\left( W\right) +\frac{T_{2}}{f_{Y}(y_{\tau })T_{1}^{2}}\mathbb{P}_{n}\psi
_{P}\left( D,W\right)  \notag \\
&-&\frac{1}{f_{Y}(y_{\tau })T_{1}}\mathbb{P}_{n}\psi _{\partial
m_{0}}(W,y_{\tau })-\frac{1}{f_{Y}(y_{\tau })T_{1}}\mathbb{P}_{n}\psi
_{m_{0}}\left( Y,W,y_{\tau }\right)  \notag \\
&-&\frac{1}{f_{Y}(y_{\tau })T_{1}}\mathbb{P}_{n}\tilde{\psi}_{Q}(Y,y_{\tau
})+R_{\Pi },
\end{eqnarray}%
where 
\begin{eqnarray}
R_{\Pi } &=&O_{p}\left( |\hat{f}_{Y}(\hat{y}_{\tau })-f_{Y}(y_{\tau
})|^{2}\right) +O_{p}(n^{-1})+O_{p}\left( n^{-1/2}|\hat{f}_{Y}(\hat{y}_{\tau
})-f_{Y}(y_{\tau })|\right)  \notag  \label{est_param_pi_decom_remainder} \\
&&+o_{p}(n^{-1/2}h^{-1/2})+o_{p}(h^{2}).
\end{eqnarray}

Furthermore, under Assumption \ref{Assumption_rate}, $\sqrt{nh}R_{\Pi
}=o_{p}(1).$
\end{theorem}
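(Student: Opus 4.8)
The plan is to mirror the proof of Theorem \ref{uqr_if_param}, exploiting the fact that the nonparametric estimation of the propensity score enters only through the linear representation of $T_{1n}$ and, crucially, leaves the influence function of $T_{2n}$ unchanged. Accordingly, almost every step is a verbatim transcription of the parametric argument with $T_{1n}(\hat{\alpha})$ replaced by $T_{1n}(\hat{P})$ and $T_{2n}(\hat{y}_{\tau },\hat{m},\hat{\alpha})$ replaced by $T_{2n}(\hat{y}_{\tau },\hat{m},\hat{P})$.

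First I would observe that the purely algebraic decomposition of $\hat{\Pi}_{\tau }-\Pi _{\tau }$ carried out in equations (\ref{est_param_pi_decom})--(\ref{est_param_pi_decom_3}) uses nothing about how the propensity score is estimated; it only uses the product structure $\hat{\Pi}_{\tau }=-\hat{f}_{Y}(\hat{y}_{\tau })^{-1}T_{2n}/T_{1n}$ together with $\Pi _{\tau }=-f_{Y}(y_{\tau })^{-1}T_{2}/T_{1}$. Substituting the nonparametric quantities, this yields
\[
\hat{\Pi}_{\tau }-\Pi _{\tau }=\frac{T_{2}}{f_{Y}(y_{\tau })^{2}T_{1}}[\hat{f}_{Y}(\hat{y}_{\tau })-f_{Y}(y_{\tau })]+\frac{T_{2}}{f_{Y}(y_{\tau })T_{1}^{2}}[T_{1n}(\hat{P})-T_{1}]-\frac{1}{f_{Y}(y_{\tau })T_{1}}[T_{2n}(\hat{y}_{\tau },\hat{m},\hat{P})-T_{2}]+R_{1}+R_{2}+R_{3},
\]
where $R_{1},R_{2},R_{3}$ are defined exactly as in the parametric proof.

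Second, I would plug in the three available linear representations: (i) Lemma \ref{two_step_density} for $\hat{f}_{Y}(\hat{y}_{\tau })-f_{Y}(y_{\tau })$, contributing $\mathbb{P}_{n}\psi _{f_{Y}}(y_{\tau })+B_{f_{Y}}(y_{\tau })+f_{Y}^{\prime }(y_{\tau })\mathbb{P}_{n}\psi _{Q}(y_{\tau })+R_{f_{Y}}$; (ii) Lemma \ref{nonparam_ps_estimation}, which gives $T_{1n}(\hat{P})-T_{1}=\mathbb{P}_{n}\psi _{\partial P_{s}}+\mathbb{P}_{n}\psi _{P_{s}}+o_{p}(n^{-1/2})$; and (iii) the nonparametric $T_{2n}$ representation established in the text preceding the theorem, $T_{2n}(\hat{y}_{\tau },\hat{m},\hat{P})-T_{2}=\mathbb{P}_{n}\psi _{\partial m_{0}}+\mathbb{P}_{n}\psi _{m_{0}}+\mathbb{P}_{n}\tilde{\psi}_{Q}(y_{\tau })+o_{p}(n^{-1/2})$. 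Collecting terms produces exactly (\ref{est_param_pi_decom_6}), with the two propensity-score influence functions $\psi _{\partial P_{s}}$ and $\psi _{P_{s}}$ now playing the role that $\psi _{\partial P}$ and the $E[\partial ^{2}P/\partial z_{1}\partial \alpha ]\,\mathbb{P}_{n}\psi _{\alpha _{0}}$ term played in the parametric case. The remainder bookkeeping is then identical: the bounds $R_{1}=O_{p}(|\hat{f}_{Y}(\hat{y}_{\tau })-f_{Y}(y_{\tau })|^{2})$, $R_{2},R_{3}=O_{p}(n^{-1})+O_{p}(n^{-1/2}|\hat{f}_{Y}(\hat{y}_{\tau })-f_{Y}(y_{\tau })|)$ follow from (\ref{r_1})--(\ref{r_3}) using only $T_{1n}(\hat{P})-T_{1}=O_{p}(n^{-1/2})$ and $T_{2n}(\hat{y}_{\tau },\hat{m},\hat{P})-T_{2}=O_{p}(n^{-1/2})$ (both immediate from the representations above) together with the rate $\hat{f}_{Y}(\hat{y}_{\tau })-f_{Y}(y_{\tau })=O_{p}(n^{-1/2}h^{-1/2})+O_{p}(h^{2})$ from (\ref{decom_f_y_2}); assembling these and invoking Assumption \ref{Assumption_rate} as in the closing display of the proof of Theorem \ref{uqr_if_param} delivers $R_{\Pi }=o_{p}(n^{-1/2}h^{-1/2})$ and hence $\sqrt{nh}R_{\Pi }=o_{p}(1)$.

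The one genuinely new ingredient, and the main obstacle, is step (iii): confirming that estimating $P$ \emph{nonparametrically} still contributes no extra term to the influence function of $T_{2n}$. This is precisely where Lemma \ref{lemma_hahn_ridder} does the heavy lifting, since its conclusion that the pathwise derivative of $E[\partial m_{0}(y_{\tau },\tilde{W}(\alpha _{\theta }))/\partial z_{1}]$ vanishes is the Neyman-orthogonality property that neutralizes the first-order effect of the propensity-score estimator regardless of whether it converges at a parametric or nonparametric rate. Because Lemma \ref{nonparam_ps_estimation} is stated to follow from Theorem 7.2 of \cite{newey1994} and the $T_{2n}$ representation is obtained by the same path-derivative argument as Lemma \ref{param_ave_estimation}, the remaining work is essentially to verify that the regularity conditions those results require--Assumption (a) of Lemma \ref{lemma_hahn_ridder} and Assumptions (a) and (c) of Lemma \ref{param_ave_estimation}--coincide with the hypotheses maintained in the present theorem, which by construction they do.
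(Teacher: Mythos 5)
Your proposal is correct and follows essentially the same route as the paper: the paper likewise obtains Theorem \ref{uqr_if_nonparam} by reusing the algebraic decomposition from the proof of Theorem \ref{uqr_if_param}, substituting Lemma \ref{nonparam_ps_estimation} for Lemma \ref{ps_estimation_param}, invoking the fact (proved as in Lemma \ref{param_ave_estimation}, via the orthogonality established in Lemma \ref{lemma_hahn_ridder}) that $T_{2n}(\hat{y}_{\tau },\hat{m},\hat{P})$ and $T_{2n}(\hat{y}_{\tau },\hat{m},P)$ share the same influence function, and carrying over the remainder bounds unchanged. Your identification of the vanishing pathwise derivative as the key ingredient that makes the nonparametric case reduce to the parametric bookkeeping matches the paper's reasoning exactly.
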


We summarize the results of Theorem \ref{uqr_if_nonparam} in a single
equation: 
\begin{equation*}
\hat{\Pi}_{\tau }-\Pi _{\tau }=\mathbb{P}_{n}\psi _{\Pi _{\tau }}\left(
O\right) +\tilde{B}_{f_{Y}}(y_{\tau })+o_{p}(n^{-1/2}h^{-1/2}),
\end{equation*}%
where $\psi _{\Pi _{\tau }}$ collects all the influence functions in (\ref%
{est_param_pi_decom_6}) except for the bias, $R_{\Pi }$ is absorbed in the $%
o_{p}(n^{-1/2}h^{-1/2})$ term, and 
\begin{equation*}
\tilde{B}_{f_{Y}}(y_{\tau }):=\frac{T_{2}}{f_{Y}(y_{\tau })^{2}T_{1}}%
B_{f_{Y}}(y_{\tau }).
\end{equation*}%
The bias term is $o_{p}(n^{-1/2}h^{-1/2})$ by Assumption \ref%
{Assumption_rate}. The following corollary provides the asymptotic
distribution of $\hat{\Pi}_{\tau }$.

\begin{corollary}
\label{corollary_nonparam}Under the assumptions of Theorem \ref%
{uqr_if_nonparam}, 
\begin{equation*}
\sqrt{nh}\left( \hat{\Pi}_{\tau }-\Pi _{\tau }\right) =\sqrt{n}\mathbb{P}_{n}%
\sqrt{h}\psi _{\Pi _{\tau }}+o_{p}(1)\Rightarrow \mathcal{N}(0,V_{\tau }),
\end{equation*}%
where 
\begin{equation*}
V_{\tau }=\lim_{h\downarrow 0}E\left[ h\psi _{\Pi _{\tau }}^{2}\right] .
\end{equation*}
\end{corollary}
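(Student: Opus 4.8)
The plan is to read off the result from the asymptotic linear representation already furnished by Theorem~\ref{uqr_if_nonparam} and then invoke a central limit theorem for a triangular array. Starting from $\hat{\Pi}_{\tau }-\Pi _{\tau }=\mathbb{P}_{n}\psi _{\Pi _{\tau }}+\tilde{B}_{f_{Y}}(y_{\tau })+o_{p}(n^{-1/2}h^{-1/2})$, I would multiply through by $\sqrt{nh}$ and dispose of the two lower-order pieces first. The scaled error term is $\sqrt{nh}\cdot o_{p}(n^{-1/2}h^{-1/2})=o_{p}(1)$; this is precisely the final assertion $\sqrt{nh}\,R_{\Pi }=o_{p}(1)$ of Theorem~\ref{uqr_if_nonparam}, which holds under Assumption~\ref{Assumption_rate}. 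For the bias, since $\tilde{B}_{f_{Y}}(y_{\tau })=\frac{T_{2}}{f_{Y}(y_{\tau })^{2}T_{1}}B_{f_{Y}}(y_{\tau })$ with $B_{f_{Y}}(y_{\tau })=O(h^{2})$ by Lemma~\ref{two_step_density}, the scaled bias is $\sqrt{nh}\,\tilde{B}_{f_{Y}}(y_{\tau })=O(\sqrt{n}\,h^{5/2})$, which is negligible under the bandwidth rate maintained in Assumption~\ref{Assumption_rate}, paralleling the parametric Corollary~\ref{corollary_param}. What then remains is to prove $\sqrt{n}\,\mathbb{P}_{n}\sqrt{h}\,\psi _{\Pi _{\tau }}\Rightarrow \mathcal{N}(0,V_{\tau })$.

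The central observation for the CLT is that $\psi _{\Pi _{\tau }}$ depends on $n$ only through the bandwidth $h$ entering the density influence function $\psi _{f_{Y}}(y_{\tau })=K_{h}(Y-y_{\tau })-E[K_{h}(Y-y_{\tau })]$, while every other summand collected in $\psi _{\Pi _{\tau }}$ (namely $\psi _{Q}$, $\psi _{\partial P_{s}}$, $\psi _{P_{s}}$, $\psi _{\partial m_{0}}$, $\psi _{m_{0}}$, and $\tilde{\psi}_{Q}$) is a fixed, mean-zero, finite-variance function of the observation $O$. Setting $\xi _{ni}:=\sqrt{h}\,\psi _{\Pi _{\tau }}(O_{i})$, the quantity $\sqrt{n}\,\mathbb{P}_{n}\sqrt{h}\,\psi _{\Pi _{\tau }}=n^{-1/2}\sum_{i=1}^{n}\xi _{ni}$ is a normalized sum of row-wise i.i.d., mean-zero variables, i.e.\ a triangular array, so I would apply the Lindeberg--Feller theorem. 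This requires verifying convergence of the variance $E[\xi _{ni}^{2}]=E[h\,\psi _{\Pi _{\tau }}^{2}]\to V_{\tau }$ together with the Lindeberg (or a Lyapunov) condition.

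For the variance I would use that $\mathrm{Var}(\psi _{f_{Y}}(y_{\tau }))=O(h^{-1})$ whereas all the remaining influence functions have $O(1)$ variance; hence after multiplying by $h$ only the density term survives. A change of variables $u=(y-y_{\tau })/h$ together with the Kernel Assumption~\ref{Assumption_Kernel} gives $\lim_{h\downarrow 0}h\,E[\psi _{f_{Y}}(y_{\tau })^{2}]=f_{Y}(y_{\tau })\int K^{2}(u)\,du$, and every cross-covariance between $\sqrt{h}\,\psi _{f_{Y}}$ and the $\sqrt{h}$-scaled remaining terms is $O(h^{1/2})$ by Cauchy--Schwarz and therefore vanishes. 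Consequently $V_{\tau }$ collapses to the density contribution $\tfrac{T_{2}^{2}}{f_{Y}(y_{\tau })^{3}T_{1}^{2}}\int K^{2}(u)\,du$, matching the parametric case.

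I expect the Lindeberg condition to be the only genuinely delicate point, because $\sqrt{h}\,\psi _{f_{Y}}$ is unbounded and its law drifts with $n$. I would discharge it with a Lyapunov bound: for some $\delta >0$ the kernel moment satisfies $E|\psi _{f_{Y}}(y_{\tau })|^{2+\delta}=O(h^{-(1+\delta)})$, so $E|\xi _{ni}|^{2+\delta}=h^{(2+\delta)/2}O(h^{-(1+\delta)})=O(h^{-\delta/2})$, and the Lyapunov ratio is of order $E|\xi _{ni}|^{2+\delta}/\big(n^{\delta/2}\,\mathrm{Var}(\xi _{ni})^{(2+\delta)/2}\big)=O\big((nh)^{-\delta/2}\big)\to 0$, using $nh\to\infty$ (which follows from $nh^{3}\uparrow\infty$ in Assumption~\ref{Assumption_rate}). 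Combining the negligibility of the remainder and the bias, the convergence of the variance, and the Lyapunov condition yields $\sqrt{nh}(\hat{\Pi}_{\tau }-\Pi _{\tau })\Rightarrow \mathcal{N}(0,V_{\tau })$, as claimed.
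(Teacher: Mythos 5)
Your proposal is correct and follows essentially the route the paper itself (implicitly) takes: the corollary is read directly off the asymptotic linear representation in Theorem \ref{uqr_if_nonparam}, with $\sqrt{nh}\,R_{\Pi}=o_{p}(1)$ supplied by that theorem, the scaled bias treated as negligible, and the weak limit obtained from a triangular-array CLT in which, after scaling by $\sqrt{h}$, only the kernel-density influence function contributes to $V_{\tau}$, so your Lyapunov verification and variance/cross-covariance computations simply fill in details the paper leaves unstated. The one caveat — which you inherit from the paper's own text rather than introduce — is that Assumption \ref{Assumption_rate} only gives $nh^{5}=O(1)$, so the scaled bias $\sqrt{nh}\,\tilde{B}_{f_{Y}}(y_{\tau})=O(\sqrt{nh^{5}})$ is genuinely $o(1)$ only under the additional condition $nh^{5}\rightarrow 0$ that the parametric Corollary \ref{corollary_param} states explicitly but Corollary \ref{corollary_nonparam} omits.
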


The asymptotic variance takes the same form as the asymptotic variance in
Corollary \ref{corollary_param}. Estimating the asymptotic variance and
testing for a zero unconditional effect are entirely similar to the case
with a parametric propensity score. We omit the details to avoid repetition
and redundancy. From the perspective of implementation, there is no
substantive difference between a parametric approach and a nonparametric
approach to propensity score estimation.

\bibliographystylesupp{aea} \bibliographysupp{references}

\end{document}